\documentclass{article}

\usepackage{microtype}
\usepackage{graphicx}
\usepackage{subcaption}
\usepackage{booktabs}

\usepackage{hyperref}

\usepackage[preprint]{icml2026}

\usepackage{amsmath}
\usepackage{amssymb}
\usepackage{mathtools}
\usepackage{amsthm}

\usepackage[capitalize,noabbrev]{cleveref}

\usepackage{threeparttable}
\usepackage{adjustbox}
\usepackage{cancel}
\usepackage{multirow}

\theoremstyle{plain}
\newtheorem{theorem}{Theorem}
\newtheorem{lemma}{Lemma}

\newtheorem{prop}{Proposition}

\theoremstyle{definition}
\newtheorem{scenario}{Scenario}
\newtheorem{case}{Case}
\newtheorem{example}{Example}

\def \bP {\mathbb{P}}
\def \bE {\mathbb{E}}
\def \bN {\mathbb{N}}

\def \bR {\mathbb{R}}

\newcommand{\KL}{{\sf KL}}
\newcommand{\TV}{{\sf TV}}

\newcommand{\diff}{\mathrm{d}}

\newcommand{\Poi}{\mathsf{Poi}}
\newcommand{\Unif}{\mathsf{Unif}}
\newcommand{\Paux}{\bP_{\mathsf{aux}}}
\newcommand{\Pideal}{\bP_{\mathsf{ideal}}}
\newcommand{\paux}{p_{\mathsf{aux}}}
\newcommand{\pideal}{p_{\mathsf{ideal}}}
\newcommand{\pmhg}{p_{\mathsf{MwG}}}
\newcommand{\Pmhg}{\bP_{\mathsf{MwG}}}
\newcommand{\Pgib}{\bP_{\mathsf{Gibbs}}}

\newcommand{\Pmhgomega}{\bP_{\mathsf{MHG},\omega_1}}

\newcommand{\Qideal}{Q_{\mathsf{ideal}}}
\newcommand{\qideal}{q_{\mathsf{ideal}}}

\newcommand{\Gap}{{\mathsf{Gap}}}

\newcommand{\thetanew}{\theta_{\mathsf{new}}}

\newcommand{\calE}{{\mathcal{E}}}
\newcommand{\calF}{{\mathcal{F}}}

\newcommand{\calO}{{\mathcal{O}}}

\newcommand{\calS}{{\mathcal{S}}}

\newcommand{\Pttp}{P_{\theta \rightarrow \theta'}}
\newcommand{\Ptpt}{P_{\theta' \rightarrow \theta}}

\newcommand{\qttp}{q(\theta, \theta')}
\newcommand{\qtpt}{q(\theta', \theta)}

\icmltitlerunning{Markov Chain Monte Carlo without Evaluating the Target}

\begin{document}

\twocolumn[
  \icmltitle{Markov Chain Monte Carlo without Evaluating the Target: an
Auxiliary Variable Approach}

  \icmlsetsymbol{equal}{*}

  \begin{icmlauthorlist}
    \icmlauthor{Wei Yuan}{rutgers}
    \icmlauthor{Guanyang Wang}{rutgers}
  \end{icmlauthorlist}

  \icmlaffiliation{rutgers}{Department of Statistics, Rutgers University, Piscataway, New Jersey, USA}

  \icmlcorrespondingauthor{Guanyang Wang}{guanyang.wang@rutgers.edu}

  \icmlkeywords{Machine Learning, ICML}

  \vskip 0.3in
]

\printAffiliationsAndNotice{}

\begin{abstract}
In sampling tasks, it is common for target distributions to be known up to a normalizing constant. However, in many situations, even evaluating the unnormalized distribution can be costly or infeasible. This issue arises in scenarios such as sampling from the Bayesian posterior for tall datasets and the `doubly-intractable' distributions. In this paper, we begin by observing that seemingly different Markov chain Monte Carlo (MCMC) algorithms, such as the exchange algorithm, PoissonMH, and TunaMH, can be unified under a simple common procedure. We then extend this procedure into a novel framework that allows the use of auxiliary variables in both the proposal and the acceptance--rejection step. Several new MCMC algorithms emerge from this framework that uses estimated gradients to guide the proposal moves. They have demonstrated significantly better performance than existing methods on both synthetic and real datasets. We also develop theory for the new framework and use it to simplify and extend results for existing algorithms. The code to reproduce the experimental results can be found at \url{https://github.com/ywwes26/Auxiliary-MCMC}.

\end{abstract}

\section{Introduction}
The general workflow of Markov chain Monte Carlo algorithms is alike, yet computational challenges arise in their own way. Interestingly, solutions to these unique challenges  sometimes also share a common underlying principle, which often go unnoticed even by their proposers.
This paper presents an auxiliary variable-based framework that integrates several MCMC algorithms in Bayesian inference applications. This unified perspective also inspires the development of novel algorithms with improved performance.

We briefly recall the  workflow of Metropolis--Hastings type algorithms \citep{metropolis1953equation, hastings1970monte}. Given a target distribution, these methods typically iterate over a two-step process: they first propose a new state from a proposal distribution given the current state. Then, they apply an acceptance--rejection mechanism to decide if the transition to the new state should occur. Importantly, their implementation often requires evaluating the unnormalized distribution, that is, the target distribution up to a scaling factor. Several algorithms such as the Metropolis--adjusted Langevin algorithm (MALA) and Hamiltonian Monte Carlo (HMC) \citep{duane1987hybrid} also require evaluating the gradient of the unnormalized distribution.

In Bayesian inference, users focus on quantities associated with the posterior distribution $\pi(\theta\mid x)$, with $x$ as observed data and $\theta$ as the parameter of interest.  Given a prior $\pi(\theta)$ and likelihood $p_\theta(x)$, the posterior distribution is $ \pi(\theta \mid x) = \pi(\theta)p_\theta(x)/\int p_\theta(x)\pi(\diff\theta)$. To `understand' the typically complex posterior, MCMC is often the go-to strategy. For a detailed historical background, see \cite{gael2024computing}.

Although MCMC needs only the unnormalized posterior, which is $\pi(\theta)p_\theta(x)$ as a function of $\theta$ for fixed $x$, computational challenges can still arise from both the $\theta$ and the $x$ side. Consider the following scenarios: 

\begin{scenario}[Expensive $\theta$: Doubly-intractable distribution]\label{scenario:theta}
    Suppose the model likelihood $p_\theta(x)$ is expressed as $f_\theta(x)/Z(\theta)$, with $Z(\theta)$ requiring a high-dimensional integration or the summation of an exponential number of terms. Standard Metropolis--Hastings type algorithms cannot be directly applied to these problems, as they necessitate calculating the ratio $Z(\theta')/Z(\theta)$ at each step in its acceptance--rejection process. The posterior distribution $\pi(\theta\mid x)\propto \pi(\theta) f_\theta(x)/Z(\theta)$ is called the ``doubly-intractable distribution''\cite{murray2006mcmc}.  This distribution has found use in  many applications, such as \citep{robins2007introduction, besag1986statistical,besag1974spatial,vitelli2018probabilistic,diaconis2018bayesian,strauss1975model,hoff2009simulation,rao2016data}.
\end{scenario}

\begin{scenario}[Expensive $x$: Tall Dataset]\label{scenario:x}
    Imagine that users have gathered a large dataset $x = (x_1, \ldots, x_N)$,  where $N$ can be in the millions or billions. Each data point is independently and identically distributed following the distribution $\mathsf{p}_\theta$. In this case, the joint model likelihood is $p_\theta(x) = \prod_{i=1}^N \mathsf{p}_\theta (x_i)$. However, executing standard Metropolis--Hastings algorithms becomes expensive, as the per-step acceptance--rejection computation requires evaluations across the full dataset. This scenario is called ``tall data" \citep{bardenet2017markov}, which is common in machine learning  \citep{welling2011bayesian,ahn2015stochastic}.
\end{scenario}

Various approaches have been proposed. In particular, we review three popular algorithms with promising empirical performance: the exchange algorithm for 
doubly-intractable distributions; PoissonMH \cite{zhang2019poisson} and TunaMH \cite{zhang2020asymptotically} for tall datasets. Their details are given in the appendix; see Algorithms \ref{alg:Exchange}, \ref{alg:PoissonMH}, and \ref{alg:TunaMH} in Appendix \ref{sec:alg-details}.

Conceptually, all these algorithms are similar. They aim to sample without explicitly calculating the target distribution, while still preserving the correct stationary distribution. However, they vary significantly in their motivations, methodologies, technical assumptions, and algorithmic details. This prompts the question: \textit{Can these algorithms be unified under a common framework?}
	\subsection{Our Contribution}
	Our first contribution  answers the above question affirmatively. We begin with a key observation: the exchange algorithm (Algorithm \ref{alg:Exchange}), PoissonMH (Algorithm \ref{alg:PoissonMH}), and TunaMH (Algorithm \ref{alg:TunaMH}) can be unified into a simple common procedure, detailed in Section \ref{sec:common procedure}. 
	
Furthermore, existing methods that avoid evaluating the full posterior mostly rely on random-walk proposals, which mix slowly in high dimensions. Motivated by this, our main contribution is the new framework in Section \ref{subsec:auxiliary-general-framework}, which leads to gradient-based minibatch algorithms that let users use efficient, gradient-informed proposals while preserving the correct stationary distribution under minibatch data.

Our framework (presented in meta-algorithm \ref{alg:auxiliary-based MCMC}) uses auxiliary variables in both the proposal and the acceptance-rejection step. These auxiliary variables are typically generated in a computationally efficient manner and are used to estimate costly quantities, such as required quantities associated with the proposal and the target ratio. 

At a high level, we show that one can design higher-performance MCMC by replacing expensive terms in \emph{both} the proposal and the acceptance--rejection step with cheap estimates, while still preserving the stationary distribution. Our new algorithms (Algorithm \ref{alg:Locally Balanced PoissonMH}, \ref{alg:Tuna--SGLD}) support this message and achieve substantially better performance than prior methods and standard MCMC baselines.

Our framework can help remove the systematic bias present in existing minibatch gradient-based MCMC methods that do not converge to the target distribution. An important example is the celebrated stochastic gradient Langevin dynamics \cite{welling2011bayesian} algorithm, which is efficient but exhibit systematic bias, and correcting this bias using minibatch data has remained an open challenge.

\paragraph{Related works}
Our framework is related to pseudo-marginal MCMC, which uses nonnegative unbiased estimators of the unnormalized target to construct exact Metropolis–Hastings chains on an augmented state space \citep{andrieu2009pseudo}. In a standard pseudo-marginal chain, the auxiliary randomness, or equivalently the estimator value, is retained as part of the Markov chain state. By contrast, our framework stores  only $\theta$; the auxiliary variables are freshly generated within each transition and are used to form the proposal and the acceptance ratio. This gives a different auxiliary-variable construction from standard pseudo-marginal MCMC.

Another related line studies exact subsampling for piecewise deterministic Markov process (PDMP) samplers, including the Bouncy Particle Sampler \cite{bouchard2018bouncy} and the Zig-Zag process \cite{bierkens2019zig}. These methods build continuous-time dynamics with the correct stationary distribution and typically avoid full-data evaluations through Poisson thinning, computable bounds, and often control variates. Our method instead gives a discrete-time Metropolis–Hastings construction based on auxiliary variables, leading to different algorithmic trade-offs.

	\section{A Common Procedure of Existing Algorithms}\label{sec:common procedure}
In this section, we give the unifying view described above. The exchange algorithm, PoissonMH, and TunaMH (Algorithm \ref{alg:Exchange}, \ref{alg:PoissonMH}, and \ref{alg:TunaMH}) can all be written as the following auxiliary-variable transition. Given target distribution $\pi(\cdot \mid x)$, a proposal $q(\cdot, \cdot)$, and current state $\theta$:
	\begin{enumerate}
		\item Propose a new state $\theta' \sim q(\theta, \cdot)$
        \vspace{-0.5em}
		\item Generate an auxiliary variable $\omega$ according to a distribution $P_{\theta \rightarrow \theta'}(\cdot)$ 
         \vspace{-0.5em}
		\item Construct an estimator $R_{\theta \rightarrow \theta'}(\omega)$ for $\pi(\theta'\mid x)/\pi(\theta\mid x)$, set $r := R_{\theta \rightarrow \theta'}(\omega) q(\theta',\theta)/q(\theta,\theta')$
         \vspace{-0.5em}
		\item With probability $\min\{1,r\}$, set $\thetanew := \theta'$. Otherwise, $\thetanew := \theta$.
	\end{enumerate}
	
 Compared to standard Metropolis--Hastings algorithms, this method uses an estimator based on an auxiliary variable to estimate  the target ratio $\pi(\theta' \mid x) / \pi(\theta \mid x)$ instead of directly computing it. It will also preserve $\pi(\cdot\mid x)$ as the stationary distribution if the following equation is satisfied: 
	\begin{prop}
		\label{prop:detailed-balance-one-auxiliary}
		Let $K$ be the one-step Markov transition kernel defined by the process described above. Suppose for every $\theta,\theta' \in \Theta$, the estimator $R_{\theta \rightarrow \theta'}(\omega)$ satisfies
		\begin{equation}
			\label{eqn:detailed-balance-one-auxiliary}
			R_{\theta \rightarrow \theta'}(\omega) \pi(\theta\mid x) \Pttp(\omega) = \pi(\theta'\mid x) \Ptpt(\omega).
		\end{equation}
		
		Then  $ R_{\theta \rightarrow \theta'}$ is  unbiased  for $\pi(\theta'\mid x)/\pi(\theta\mid x)$ and $K$ is reversible with respect to $\pi(\theta\mid x)$. 		
	\end{prop}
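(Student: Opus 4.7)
The plan is to derive both conclusions directly from the balance condition \eqref{eqn:detailed-balance-one-auxiliary}. Unbiasedness will drop out by integrating the identity over $\omega$, and reversibility will follow from the standard Metropolis--Hastings-style symmetrization based on the identity $a\min\{1,b/a\} = \min\{a,b\}$ for positive $a,b$.

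First I would establish unbiasedness. Integrating \eqref{eqn:detailed-balance-one-auxiliary} over $\omega$ against the reference measure on the auxiliary space and dividing by $\pi(\theta\mid x)$ yields
\begin{equation*}
\bE_{\omega \sim \Pttp}\bqth{R_{\theta\to\theta'}(\omega)} \;=\; \frac{1}{\pi(\theta\mid x)}\int \pi(\theta'\mid x)\, \Ptpt(\omega)\,\diff\omega \;=\; \frac{\pi(\theta'\mid x)}{\pi(\theta\mid x)},
\end{equation*}
using that $\Ptpt$ is a probability density in $\omega$. This proves the unbiasedness claim.

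Next I would verify reversibility. For $\theta \neq \theta'$, the transition density of $K$ is
\begin{equation*}
K(\theta,\theta') \;=\; q(\theta,\theta')\int \min\!\sth{1,\, R_{\theta\to\theta'}(\omega)\frac{q(\theta',\theta)}{q(\theta,\theta')}}\, \Pttp(\omega)\, \diff\omega,
\end{equation*}
so it suffices to show that the integrand-level quantity $\pi(\theta\mid x)\, q(\theta,\theta')\,\min\{1, R_{\theta\to\theta'}(\omega)q(\theta',\theta)/q(\theta,\theta')\}\,\Pttp(\omega)$ is symmetric under swapping $\theta \leftrightarrow \theta'$. Pulling the positive factor $q(\theta,\theta')\,\pi(\theta\mid x)\,\Pttp(\omega)$ inside the minimum and then applying \eqref{eqn:detailed-balance-one-auxiliary} to the second argument gives
\begin{equation*}
\min\!\sth{\pi(\theta\mid x)\, q(\theta,\theta')\,\Pttp(\omega),\; q(\theta',\theta)\,R_{\theta\to\theta'}(\omega)\, \pi(\theta\mid x)\,\Pttp(\omega)} \;=\; \min\!\sth{\pi(\theta\mid x)\, q(\theta,\theta')\,\Pttp(\omega),\; q(\theta',\theta)\, \pi(\theta'\mid x)\,\Ptpt(\omega)}.
\end{equation*}
The resulting expression is manifestly symmetric in $(\theta,\theta')$. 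Carrying out the analogous manipulation on $\pi(\theta'\mid x)K(\theta',\theta)$ (using \eqref{eqn:detailed-balance-one-auxiliary} with the roles of $\theta$ and $\theta'$ exchanged, which is valid because the hypothesis is stated for every pair) yields the same symmetric integrand. Integrating over $\omega$ gives $\pi(\theta\mid x)K(\theta,\theta') = \pi(\theta'\mid x)K(\theta',\theta)$; the holding mass on the diagonal is automatically balanced, so $K$ is $\pi(\cdot\mid x)$-reversible.

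The only mildly subtle point is that the hypothesis \eqref{eqn:detailed-balance-one-auxiliary} is an identity among (unnormalized) densities rather than an identity of expectations, which is precisely what lets us pull the factor $\pi(\theta\mid x)\,\Pttp(\omega)$ inside the minimum before substituting. If this step were done at the expectation level the argument would fail, since one cannot interchange $\min$ and integration. Beyond that, the reasoning is just the standard Metropolis--Hastings reversibility proof extended with the auxiliary coordinate $\omega$, so no further technical obstacle is anticipated.
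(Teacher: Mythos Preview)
Your proposal is correct and essentially mirrors the paper's own proof: both integrate \eqref{eqn:detailed-balance-one-auxiliary} over $\omega$ for unbiasedness, and both verify detailed balance by pulling $\pi(\theta\mid x)\,q(\theta,\theta')\,\Pttp(\omega)$ inside the $\min$ and applying \eqref{eqn:detailed-balance-one-auxiliary} to reach the manifestly symmetric expression $\min\{\pi(\theta\mid x)\,q(\theta,\theta')\,\Pttp(\omega),\; \pi(\theta'\mid x)\,q(\theta',\theta)\,\Ptpt(\omega)\}$. The only cosmetic difference is that the paper applies \eqref{eqn:detailed-balance-one-auxiliary} a second time to continue the chain of equalities all the way to $\pi(\theta'\mid x)K(\theta',\theta)$, whereas you stop at the symmetric form and invoke symmetry.
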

	
	Algorithm \ref{alg:Exchange}, \ref{alg:PoissonMH}, and \ref{alg:TunaMH}  all satisfy equation \eqref{eqn:detailed-balance-one-auxiliary} in Proposition \ref{prop:detailed-balance-one-auxiliary}.  Detailed calculations verifying the next three examples are provided in Appendix \ref{subsec: verify examples}.

The main new point in this section is the connection we identify between algorithms for doubly intractable distributions and those for tall datasets. The underlying procedure is not new: Section 2.1 of \cite{andrieu2018utility} gives a slightly more general setup that replaces $P_{\theta'\rightarrow\theta}(\omega)$ with $P_{\theta'\rightarrow\theta}(\varphi(\omega))$, where $\varphi$ is any measurable involution. Our framework corresponds to the special case $\varphi$ equal to the identity. In contrast, \cite{andrieu2018utility} focuses on improving estimator quality in the exchange algorithm and does not address tall datasets.

	\section{A New Framework with (Two) Auxiliary Variables}\label{sec:framework}
We first introduce our general framework that uses auxiliary variables in both the proposal and the acceptance-rejection mechanism, then discuss how this framework includes previous algorithms as special cases. Furthermore, we develop new gradient-based algorithms inside our framework for the ``tall data'' scenario based on PoissonMH and TunaMH, with ``cheap'' estimates of the gradient.

	\textbf{Motivation: gradient-based proposal design.}
	The exchange algorithm, PoissonMH, and TunaMH mainly use independent or random-walk proposals, which scale poorly in high dimensions. In contrast, gradient-based MCMC methods such as MALA and HMC are often much more efficient. One simple idea is to choose \( q \) in Section \ref{sec:common procedure} as a gradient-based proposal. But this is hard to implement. For doubly intractable distributions, the log-posterior gradient involves the intractable term $\nabla_\theta\log(Z(\theta))$.
    For tall datasets, computing the gradient requires a full pass over the data, which conflicts with the goal of low per-iteration cost in minibatch methods such as PoissonMH and TunaMH. Nevertheless, designing a gradient-based proposal remains a fascinating idea. Motivated by this, we introduce a new framework in Section \ref{subsec:auxiliary-general-framework} that extends Section \ref{sec:common procedure} and yields new algorithms that use cheap gradient estimators to guide proposed moves.

	\subsection{General Methodology}\label{subsec:auxiliary-general-framework}
	We present our method using notations and assumptions in algorithmic and probabilistic terms. We assume that users can access two simulators for generating auxiliary variables, $\calS_1$ and $\calS_2$. $\calS_1$ takes $\theta$ and produces a random variable $\omega_1$. $\calS_2$ takes $(\theta, \theta', \omega_1)$ and outputs another random variable $\omega_2$. Formally, consider two measurable spaces \((\Omega_1, \calF_1)\) and \((\Omega_2, \calF_2)\) with base measures \(\lambda_1\) and \(\lambda_2\). We define two families of probability measures: \(\bP_\theta\) on \(\Omega_1\) parameterized by \(\theta \in \Theta\), and \(\bP_{\theta, \theta'}(\cdot \mid \omega_1)\) on \(\Omega_2\) parameterized by \((\theta, \theta') \in \Theta \times \Theta\) and \(\omega_1 \in \Omega_1\). The second family represents the conditional distribution of \(\omega_2\) given \(\omega_1\) with parameters \((\theta, \theta')\). For any \((\theta, \theta')\), this defines a joint distribution on \(\Omega_1 \times \Omega_2\) as \(\bP_{\theta, \theta'}(\omega_1, \omega_2) = \bP_\theta(\omega_1) \bP_{\theta, \theta'}(\omega_2 \mid \omega_1)\). Our assumption implies users can simulate from both the marginal distribution of \(\omega_1\) and the conditional distribution of \(\omega_2\). Additionally, we introduce the single-point probability space \(\mathsf{NULL} := \{\mathsf{null}\}\). Setting \(\Omega_1\) or \(\Omega_2\) to \(\mathsf{NULL}\) indicates no corresponding auxiliary variable is generated. Lastly, we introduce a family of proposal kernels \(\{Q_{\omega_1}\}_{\omega_1 \in \Omega_1}\), where each \(\omega_1\) is associated with a Markov transition kernel \(Q_{\omega_1}\) on \(\Theta\). We assume \(Q_{\omega_1}(\theta, \cdot)\) has a density \(q_{\omega_1}(\theta, \cdot)\) with respect to a base measure \(\lambda\).
	
	The first auxiliary variable \(\omega_1\) is used to determine the proposal, with its distribution depending solely on the current state \(\theta\). The sampled \(\omega_1\) and the current state \(\theta\) then determine the proposed state \(\theta'\). Then, \(\omega_2\), with distribution depending on \(\theta\), \(\theta'\), and \(\omega_1\), is generated to estimate the target ratio. For instance, \(\omega_1\) might be a minibatch uniformly selected from a large dataset, used to estimate the gradient and influence the proposal. An example of \(\omega_2\) could be synthetic data generated in the exchange algorithm. Further examples will be presented shortly.
	
	Our meta-algorithm is detailed in Algorithm \ref{alg:auxiliary-based MCMC}.  Algorithm \ref{alg:auxiliary-based MCMC} assumes the target is a generic distribution $\Pi$ on $\Theta$. In all applications discussed in this paper, $\Pi(\cdot)$ is the posterior distribution $\pi(\cdot \mid x)$. It is important to note that, evaluating the ratio  in Step 6 can be much cheaper than evaluating $\Pi(\theta')/\Pi(\theta_t)$. The costly part of $\Pi(\theta')/\Pi(\theta_t)$ is often offset by $\bP_{\theta',\theta_t}(\omega_1,\omega_2)/\bP_{\theta_t,\theta'}(\omega_1,\omega_2)$ by design. 

 The validity of this algorithm can be proven by checking the detailed balance equation: 
	
	\begin{prop}\label{prop:auxiliary-validity}
		
		Let $\bP_{\mathsf{aux}}(\cdot, \cdot)$ be the transition kernel of the Markov chain defined in Algorithm \ref{alg:auxiliary-based MCMC}. We have $\bP_{\mathsf{aux}}(\cdot, \cdot)$ is reversible with respect to $\Pi$.  \footnote{The `acceptance--rejection' mechanism ($\min\{1, r\}$ in Step 7) can be generalized to accept $\theta'$ with a probability of $a(r)$, where $a: [0, \infty) \rightarrow [0, 1]$ is any function satisfying $a(t) = t a(1/t)$. The reversibility claimed here still holds. See Appendix \ref{subsec:additional properties} for the proof.} 
	\end{prop}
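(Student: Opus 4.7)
The plan is to prove reversibility by lifting the chain to the extended state space $\Theta \times \Omega_1 \times \Omega_2$, establishing detailed balance there, and then marginalizing. Viewed on this enlarged space, Algorithm \ref{alg:auxiliary-based MCMC} becomes a standard Metropolis--Hastings step: starting from a current state $\theta$, the proposal density of the new triple $(\theta', \omega_1, \omega_2)$ with respect to $\lambda \otimes \lambda_1 \otimes \lambda_2$ is $\bP_\theta(\omega_1) \, q_{\omega_1}(\theta, \theta') \, \bP_{\theta, \theta'}(\omega_2 \mid \omega_1)$, which does not depend on the old auxiliary coordinates. Inferring from the common procedure of Section \ref{sec:common procedure} and the offset relationship noted just before the proposition, the ratio $r$ in Step~6 has the form
$$r = \frac{\Pi(\theta')}{\Pi(\theta)} \cdot \frac{q_{\omega_1}(\theta', \theta)}{q_{\omega_1}(\theta, \theta')} \cdot \frac{\bP_{\theta'}(\omega_1) \, \bP_{\theta', \theta}(\omega_2 \mid \omega_1)}{\bP_\theta(\omega_1) \, \bP_{\theta, \theta'}(\omega_2 \mid \omega_1)}.$$

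First I would verify the extended detailed balance equation directly. Multiplying the forward flux
$$\Pi(\theta) \, \bP_\theta(\omega_1) \, q_{\omega_1}(\theta, \theta') \, \bP_{\theta, \theta'}(\omega_2 \mid \omega_1) \cdot \min\{1, r\}$$
and applying the elementary identity $\min\{1, r\} = r \cdot \min\{1, 1/r\}$ together with the definition of $r$ produces exactly
$$\Pi(\theta') \, \bP_{\theta'}(\omega_1) \, q_{\omega_1}(\theta', \theta) \, \bP_{\theta', \theta}(\omega_2 \mid \omega_1) \cdot \min\{1, 1/r\},$$
which is the reverse flux from $(\theta', \omega_1, \omega_2)$ back to $(\theta, \omega_1, \omega_2)$ under the simultaneous swap $\theta \leftrightarrow \theta'$ (crucially, $\omega_1$ and $\omega_2$ are kept fixed, so nothing needs to be involuted). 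This gives reversibility on the extended space. The generalized acceptance rule $a(\cdot)$ from the footnote satisfies $a(r) = r\,a(1/r)$ by hypothesis, so the identical chain of equalities applies verbatim.

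Next I would marginalize. Integrating the extended detailed balance equation over $(\omega_1, \omega_2) \in \Omega_1 \times \Omega_2$ against $\lambda_1 \otimes \lambda_2$ yields
$$\Pi(d\theta) \, \Paux(\theta, d\theta') = \Pi(d\theta') \, \Paux(\theta', d\theta)$$
on the off-diagonal part $\{\theta \neq \theta'\}$. The rejection (diagonal) mass is trivially symmetric under the swap since any ``stay-at-current-state'' event contributes only to the diagonal of $\Pi \otimes \Paux$. Combining the two pieces gives reversibility of $\Paux$ with respect to $\Pi$ on $\Theta$.

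The main subtlety I anticipate is measure-theoretic bookkeeping rather than algebra: ensuring that the Radon--Nikodym derivatives underlying $r$ are well-defined when the base measures $\lambda_1, \lambda_2$ do not uniformly dominate every $\bP_\theta$ or every $\bP_{\theta, \theta'}(\cdot \mid \omega_1)$, and handling the set where the denominator of $r$ vanishes (where $r = +\infty$ and Step~7 always accepts). The standard remedy---restricting the integration to the support of the joint proposal measure and using the convention $0 \cdot \infty = 0$ on the complement---should suffice, but it needs to be stated carefully, especially for the generalized $a$ in the footnote where the identity $a(r) = r\, a(1/r)$ must be interpreted at the endpoints of $[0, \infty]$.
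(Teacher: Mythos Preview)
Your proposal is correct and is essentially the same argument as the paper's own proof. The paper writes out the off-diagonal density $p_{\mathsf{aux}}(\theta,\theta')$ as a double integral over $(\omega_1,\omega_2)$ and pulls the factors into the $\min$ to obtain the symmetric integrand $\min\{\Pi(\theta)\bP_{\theta,\theta'}(\omega_1,\omega_2)q_{\omega_1}(\theta,\theta'),\,\Pi(\theta')\bP_{\theta',\theta}(\omega_1,\omega_2)q_{\omega_1}(\theta',\theta)\}$; your use of the identity $\min\{1,r\}=r\min\{1,1/r\}$ pointwise in $(\omega_1,\omega_2)$ followed by integration is the same computation, just narrated as ``extended detailed balance then marginalize'' rather than ``show the integrand is symmetric.''
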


	\begin{algorithm}
		\caption{Auxiliary-based MCMC}
		\label{alg:auxiliary-based MCMC}
		\begin{algorithmic}[1]  
			
			\STATE \textbf{Initialize:} Initial state $\theta_0$, auxiliary-based proposal $\{q_{\omega_1}\}_{\omega_1 \in \Omega_1}$;  number of iterations $T$; target distribution $\Pi(\theta)$
			\FOR{$t=0$ to $T-1$}
			\STATE sample $\omega_1 \sim \bP_{\theta_t}(\cdot)$ via $\calS_1$
			\STATE propose $\theta' \sim q_{\omega_1}(\theta_t,\cdot)$
			\STATE sample $\omega_2 \sim \bP_{\theta_t,\theta'}(\cdot \mid \omega_1)$ via $\calS_2$
			\STATE compute the acceptance ratio\\
			\begin{equation*}
				r \xleftarrow{} \frac{\Pi(\theta') \bP_{\theta',\theta_t}(\omega_1,\omega_2)}{\Pi(\theta_t) \bP_{\theta_t,\theta'}(\omega_1,\omega_2)} \cdot \frac{q_{w_1}(\theta',\theta_t)}{q_{w_1}(\theta_t,\theta')}
			\end{equation*}
			
			\STATE with probability $\min\{1, r\}$, set $\theta_{t+1} \xleftarrow{} \theta'$; otherwise, $\theta_{t+1} \xleftarrow{} \theta_t$
			\ENDFOR
		\end{algorithmic}
	\end{algorithm}

		From now on, we will always assume $\Pi$ is the posterior $\pi(\cdot \mid x)$. When both $\Omega_1 = \Omega_2 = \mathsf{NULL}$, then no auxiliary variable is generated.  Algorithm \ref{alg:auxiliary-based MCMC} reduces to the standard Metropolis--Hastings algorithm. We will now briefly discuss several other possibilities.
  
	\begin{case}[Without $\omega_1$]
		If $\Omega_1 = \mathsf{NULL}$, Algorithm \ref{alg:auxiliary-based MCMC} does not use any auxiliary variable for  the proposal distribution. The acceptance ratio in Step 6 of Algorithm \ref{alg:auxiliary-based MCMC} simplifies to
		\begin{equation*}
			\frac{\pi(\theta' \mid x) \bP_{\theta', \theta_t}(\omega_2)}{\pi(\theta_t \mid x) \bP_{\theta_t, \theta'}(\omega_2)} \cdot \frac{q(\theta', \theta_t)}{q(\theta_t, \theta')}.
		\end{equation*}
		This  recovers Section \ref{sec:common procedure}, which in turn includes the exchange algorithm, PoissonMH, and TunaMH (Algorithm \ref{alg:Exchange}, \ref{alg:PoissonMH}, \ref{alg:TunaMH}).
	\end{case}
	
	\begin{case}[$\omega_1 = \omega_2$]\label{case:omega1 = omega2}
		Another interesting case is when $\omega_1 = \omega_2$, i.e., the two auxiliary variables are perfectly correlated. In this case, the acceptance ratio reduces to 
		\begin{equation*}
			\frac{\pi(\theta' \mid x) \bP_{\theta'}(\omega_1)}{\pi(\theta_t \mid x) \bP_{\theta_t}(\omega_1)} \cdot \frac{q_{\omega_1}(\theta', \theta_t)}{q_{\omega_1}(\theta_t, \theta')}.
		\end{equation*}
		In this case, $\omega_1$ can assist in designing the proposal distribution. This approach can sometimes also help estimate the target ratio and reduce per-iteration costs through careful design, which will be discussed in Section \ref{subsubsec: locally balanced Poisson}. This coincides with the auxiliary Metropolis-Hastings sampler (Section 2.1 of \cite{titsias2018auxiliary}), where the authors introduce an MCMC method that incorporates auxiliary variables in the proposal distribution to enhance sampling efficiency. 
	\end{case}
	
	\begin{case}[$\omega_1$ independent of $\omega_2$]\label{case:omega1 ind omega2}
		When $\omega_1$ is independent of $\omega_2$, the acceptance ratio can be written as  
		\begin{equation*}
			\frac{\pi(\theta' \mid x) \bP_{\theta'}(\omega_1)\bP_{\theta', \theta_t}(\omega_2)}{\pi(\theta_t \mid x) \bP_{\theta_t}(\omega_1)\bP_{\theta_t, \theta'}(\omega_2)} \cdot \frac{q_{\omega_1}(\theta', \theta_t)}{q_{\omega_1}(\theta_t, \theta')}.
		\end{equation*}
		This approach addresses the issues of `proposal design' and `ratio estimation' separately. Users can generate one minibatch to guide the proposal and use the other to estimate the target ratio. We will discuss this strategy further in Section \ref{subsubsec:TunaSGLD}. The ratio $\bP_{\theta'}(\omega_1)/\bP_{\theta_t}(\omega_1)$ will also be cancelled out when the distribution of $\omega_1$ does not depend on $\theta$.
	\end{case}

Algorithm \ref{alg:auxiliary-based MCMC} can also be viewed as an instance of involutive MCMC \citep{neklyudov2020involutive}. 
Combine the random variables generated in Steps 3–5 into $v = (\omega_1, \theta',\omega_2)$, with conditional density
$m_\theta(v) = P_\theta(\omega_1)\, q_{\omega_1}(\theta, \theta')\, P_{\theta,\theta'}(\omega_2 \mid \omega_1).$ The map $f(\theta, \omega_1, \theta', \omega_2) = (\theta', \omega_1, \theta, \omega_2)$ is an involution with unit Jacobian. Under this representation, the involutive-MCMC acceptance ratio reduces exactly to Step 6 of Algorithm 1. Hence the validity proof can be viewed as an application of involutive MCMC validity.

	\subsection{New Algorithms}\label{subsec: new algorithms}
	We now introduce new gradient-based minibatch MCMC algorithms within our framework (Algorithm \ref{alg:auxiliary-based MCMC}). We focus on sampling from a tall dataset where $\pi(\theta\mid x) \propto \pi(\theta)p_\theta(x) = \pi(\theta)\prod_{i=1}^N \mathsf{p}_\theta (x_i)$. Our goal is to incorporate gradient information in the proposal design while maintaining the correct stationary distribution and minimal overhead. Each iteration only requires a minibatch of data. The high-level idea is to introduce an auxiliary variable ($\omega_1$ in Algorithm \ref{alg:auxiliary-based MCMC}) to estimate the full gradient at a low cost. Then, we use a second auxiliary variable $\omega_2$, as described in Section \ref{sec:common procedure}, to estimate the target ratio. 
	\subsubsection{PoissonMH with locally balanced proposal} \label{subsubsec: locally balanced Poisson}
In this section, we develop gradient-based PoissonMH. Here, we adhere to all the technical assumptions  of PoissonMH (Algorithm \ref{alg:PoissonMH}). 

Locally balanced proposals were introduced by \citep{zanella2020informed} for discrete-state samplers and later developed for continuous spaces by \cite{livingstone2022barker} and \cite{vogrinc2023optimal}. They provide a useful way to build informed Metropolis–Hastings proposals. We briefly review the Barker's proposal, a gradient-based continuous-space locally balanced proposal that can be more robust than MALA in several settings.

	 Barker's proposal is of the form $
		Q^g(\theta, \diff\theta') \propto g\left(\pi(\theta'\mid x)/\pi(\theta \mid x)\right)K(\theta,\diff\theta'),$
		where $g: \bR^+ \xrightarrow{} \bR^+$ is a \textit{balancing function} satisfying $g(t)=tg(1/t)$. Here, $K$ represents a symmetric  kernel. Users can choose the balancing function, with recommended options including \( g(t) = \sqrt{t} \) and \( g(t) = t/(1+t) \).  Experiments on various discrete distributions show that its performance is competitive compared to alternative MCMC methods.
	
	When \(\theta\) is in a continuous state space, the proposal \(Q^g(\theta, \cdot)\) is not feasible for implementation. However, one can perform a first--order Taylor expansion of \(\log(\pi(\theta' \mid x)) - \log(\pi(\theta \mid x))\) with respect to \(\theta\), and use the so-called \textit{first-order} locally balanced proposal. It has the form: 
	$
	Q^{(g)}(\theta, \diff\theta') =  \prod_{i=1}^d Q^{(g)}_i(\theta, \diff\theta'_i).$
Here $Q^{(g)}_i$ is a one-dimensional kernel of the form

\begin{equation}
\begin{aligned}
\label{eqn:locally-balanced, factorized}
Q^{(g)}_i(\theta, \diff\theta'_i) = &Z^{-1}_i(\theta) g\left( e^{\left(\partial_{\theta_i} \log \pi(\theta\mid x)\right)(\theta'_i-\theta_i)} \right) \\
&\cdot\mu_i(\theta'_i - \theta_i)	\diff \theta_i' 
\end{aligned}
\end{equation}

	where \(\mu_i(\cdot)\) represents a symmetric density in \(\mathbb{R}\), such as a centred Gaussian. \cite{livingstone2022barker} note that selecting $g(t) = \sqrt{t}$ corresponds to  MALA. They suggest using $g(t) = t/(1+t)$, referred to as `Barker's proposal', inspired by \cite{barker1965monte}. This choice allows exact calculation of $Z_i(\theta)$, and in turn implies an efficient algorithm for $Q^{(g)}(\theta,\cdot)$. They find that Barker's proposal tends to be more robust compared to MALA.\\
	\textbf{Our proposal:}   We retain all promises and notations used in Algorithm \ref{alg:PoissonMH}. We also use the notation $\omega_1 := (s_1, s_2, \ldots, s_N)$ and $\bP_{\theta}(\cdot) :=  \otimes_{i=1}^N \Poi\left(\frac{\lambda M_i}{L} + \phi_i(\theta; x)\right)$, as  defined in Example \ref{eg:PoissonMH}. For every balancing function $g$, we  define the Markov transition kernels  with density  $q_{\omega_1}^{(g)}(\theta,\theta') := \prod_{i=1}^d q_{\omega_1,i}^{(g)}(\theta,\theta_i').$
   Here $q_{\omega_1,i}^{(g)}$ is a one-dimensional density proportional to
  $
      g\left( e^{\partial_{\theta_i} \log \left(\pi(\theta\mid x)\bP_{\theta}(\omega_1)\right)(\theta'_i-\theta_i)} \right) \mu_i(\theta'_i - \theta_i),
$
   where $\mu_i$ is  a symmetric density on $\bR$. Our algorithm is summarized in Algorithm \ref{alg:Locally Balanced PoissonMH}. Step 3 uses the Poisson minibatch sampling method from \cite{zhang2019poisson}, enabling faster implementation than sampling each data point independently. See Appendix \ref{sec:poisson-thinning} for implementation details.

 Algorithm \ref{alg:Locally Balanced PoissonMH} corresponds to Case \ref{case:omega1 = omega2} of our meta-algorithm \ref{alg:auxiliary-based MCMC}. Consequently, its validity is directly proven by Proposition \ref{prop:auxiliary-validity}.  Moreover,  several points are worth discussing.
 
 \begin{algorithm}[htb!]
\caption{Locally Balanced PoissonMH}
\label{alg:Locally Balanced PoissonMH}
\begin{algorithmic}[1]
			\STATE \textbf{Initialize:} Initial state $\theta_0$; balancing function $g$; number of iterations $T$
    \FOR{$t = 0 $ to $T-1$}
    \STATE sample $\omega_1 = (s_1, s_2,\cdots, s_N) \sim \bP_{\theta_t}(\cdot)$, form minibatch $S = \{i \mid s_i > 0\}$
    \STATE propose $\theta' \sim q_{\omega_1}^{(g)}(\theta_t,\cdot)$ 
    \STATE compute
    $$
		r\gets	\frac{\pi(\theta' \mid x) \bP_{\theta'}(\omega_1)}{\pi(\theta_t \mid x) \bP_{\theta_t}(\omega_1)} \cdot \frac{q_{\omega_1}^{(g)}(\theta', \theta_t)}{q_{\omega_1}^{(g)}(\theta_t, \theta')}.
		$$ 
    \STATE with probability $\min\{1, r\}$, set $\theta_{t+1} \xleftarrow{} \theta'$; otherwise, $\theta_{t+1} \xleftarrow{} \theta_t$
    \ENDFOR
\end{algorithmic}
\end{algorithm}
    Firstly, the proposed method is an auxiliary-variable-based variant of the  locally-balanced proposal. Our proposal distribution $q_{\omega_1}^{(g)}$ closely resembles the first-order locally balanced proposal, with one key modification.  The term $\partial_{\theta_i} \log \pi(\theta\mid x)$ in \eqref{eqn:locally-balanced, factorized} is substituted by $\partial_{\theta_i} \log \left(\pi(\theta\mid x)\bP_{\theta}(\omega_1)\right)$.  Meanwhile, standard calculation  (provided in Appendix \ref{sec:locally balanced Poisson implement}) shows the proxy function $\pi(\theta\mid x)\cdot \bP_\theta(\omega_1)$ only depends on \textbf{the minibatch $S$}. Consequently, the cost of computing the gradient is at the same level as  PoissonMH, thus not significantly increasing the cost per step.

  Secondly, following \cite{livingstone2022barker}, we choose  $g(t) = t/(1+t)$ and $g(t) = \sqrt{t}$ in our actual implementation. We call them Poisson--Barker and Poisson--MALA respectively. These are the minibatch versions (covering both proposal generation and target ratio evaluation) of the Barker's proposal and MALA. These choices of $g$ allow efficient implementation, with details in Appendix \ref{sec:locally balanced Poisson implement}.

 \subsubsection{TunaMH with SGLD proposal}\label{subsubsec:TunaSGLD}
In this section, we will adhere to all the technical assumptions in TunaMH (Algorithm \ref{alg:TunaMH}). Recall the target distribution has the form $\pi(\theta\mid x) \propto \exp{\{-\sum_{i=1}^N U_i(\theta; x)\}}$.

To use the gradient information, we adopt the SGLD algorithm in \cite{welling2011bayesian}.
At each step, we first select a minibatch $B$ of \( K \) data points uniformly from the entire dataset. Thus \((N/K) \sum_{i \in B} \nabla_\theta U_i(\theta_t; x)\) is a natural  estimator of the gradient of the negative log-posterior. Calculating this estimator has a cost that scales linearly with the minibatch size \(K\), rather than with the total dataset size \(N\). 
Setting $\omega_1 := B \sim \Unif\{\{1,2,\ldots, N\}, K\}$, our proposal has the form $
    q_{\omega_1}(\theta,\cdot) \sim \bN\left(\theta - \frac{\epsilon^2}{2}\frac{N}{K}\sum_{i \in B} \nabla_\theta U_i(\theta; x) , \epsilon^2 \mathbb{I}\right),$
where $\epsilon$ is a tuning parameter. Next, we  use the same strategy as TunaMH to select an independent minibatch to estimate the target ratio. 
Using the notations  $\omega_2 := (s_1, s_2, \ldots, s_N)$ and $\bP_{\theta,\theta'}(\cdot) := \otimes_{i=1}^N \Poi\left(\frac{\lambda c_i}{C} + \phi_i(\theta, \theta';x)\right)$, our algorithm is summarized in Algorithm \ref{alg:Tuna--SGLD}. Step 5 uses the Poisson minibatch sampling method from \cite{zhang2020asymptotically}, enabling faster implementation than sampling each data point independently. See Appendix \ref{sec:poisson-thinning} for implementation details.

Algorithm \ref{alg:Tuna--SGLD} corresponds to Case \ref{case:omega1 ind omega2} of our meta-algorithm \ref{alg:auxiliary-based MCMC}. Thus, its validity directly follows from Proposition \ref{prop:auxiliary-validity}. Our first auxiliary variable \(\omega_1\) is independent of the second. Therefore, the distribution \(\bP_{\theta, \theta'}(\cdot \mid \omega_1) = \bP_{\theta, \theta'}(\cdot)\). Additionally, the first variable \(\omega_1 = B\) follows a fixed distribution that does not depend on \(\theta\). Consequently, the joint distribution \(\bP_{\theta, \theta'}(\omega_1, \omega_2)\) equals \(\bP(\omega_1)\bP_{\theta, \theta'}(\omega_2)\). As a result, the target-ratio part in Step 6 of Algorithm \ref{alg:auxiliary-based MCMC} does not involve the marginal distribution of $\omega_1$.

Algorithm \ref{alg:Tuna--SGLD}  is a Metropolized version of SGLD. If each iteration of Algorithm \ref{alg:Tuna--SGLD} skips Steps 5--7 for the acceptance--rejection correction, it becomes exactly the SGLD algorithm described in \cite{welling2011bayesian} with a fixed step size. SGLD is attractive for its computational efficiency, but it is known to suffer from systematic bias. Thus, constructing an acceptance--rejection version of SGLD that uses only minibatch data is viewed as an open problem.
To quote from \cite{welling2011bayesian}: 
   \textit{ ``Interesting directions of future
research includes deriving a MH rejection step based
on minibatch data $\ldots$''} Algorithm \ref{alg:Tuna--SGLD} offers a solution to this problem. 

\begin{algorithm}
\caption{Tuna--SGLD}
\label{alg:Tuna--SGLD}
\begin{algorithmic}[1]
    \STATE \textbf{Initialize:} Initial state $\theta_0$; batch size $K$; step size $\epsilon$; number of iterations $T$
    \FOR{$t=0$ to $T-1$}
    \STATE sample the first minibatch $B \subset \{1, 2, \cdots, N \}$ uniformly at random with size $K$
    \STATE propose $\theta' \sim q_{\omega_1}(\theta_t,\cdot)$ 
    \STATE sample  $\omega_2 = (s_1, s_2,\cdots, s_N) \sim \bP_{\theta_t, \theta'}(\cdot)$, form the second minibatch $S = \{i \mid s_i > 0\}$
    \STATE compute
  \begin{equation*}
		r\xleftarrow{}	\frac{\pi(\theta' \mid x) \bP_{\theta', \theta_t}(\omega_2)}{\pi(\theta_t \mid x)\bP_{\theta_t, \theta'}(\omega_2)} \cdot \frac{q_{\omega_1}(\theta', \theta_t)}{q_{\omega_1}(\theta_t, \theta')}.
		\end{equation*}
    \STATE with probability $\min\{1, r\}$, set $\theta_{t+1} \xleftarrow{} \theta'$; otherwise $\theta_{t+1} \xleftarrow{} \theta_t$
    \ENDFOR
\end{algorithmic}
\end{algorithm}

\section{Theory}\label{sec:theory}
We will begin by studying the meta-algorithm \ref{alg:auxiliary-based MCMC} and then apply our results to specific algorithms. Our agenda is to compare the Markov chain in Algorithm \ref{alg:auxiliary-based MCMC}, with transition kernel denoted by $\Paux$  with two relevant chains $\Pideal$ and $\Pmhg$. These two chains need not be  implementable, but are easier to study as a stochastic process. 

Fix $\theta$, define the idealized proposal $ \Qideal(\theta, \diff \theta') := \bE_{\omega_1\sim \bP_\theta}[Q_{\omega_1}(\theta,\diff \theta')]$
with  density $\qideal(\theta,\theta') = \int_{\Omega_1} q_{\omega_1}(\theta,\theta') \bP_\theta(\omega_1)\lambda_1(\diff\omega_1)$. The idealized chain  $\Pideal$ is the standard Metropolis-Hastings algorithm with proposal \(\Qideal\). Detailed description of $\Pideal$ is given in Algorithm \ref{alg:idealMH} in Appendix \ref{subsec:algorithmic description of idealized chain}. 
The second chain, \(\Pmhg\), is defined as the transition kernel for Case \ref{case:omega1 = omega2}, i.e., \(\omega_1 = \omega_2\) in Algorithm \ref{alg:auxiliary-based MCMC}. A crucial feature is that \(\Pmhg\) can equivalently be viewed as a Metropolis--within--Gibbs algorithm. It is the marginal chain for the \(\theta\) component, targeting an augmented distribution \(\pi(\theta, \omega_1 \mid x) := \pi(\theta \mid x) \bP_\theta(\omega_1)\). At each step, users first generate \(\omega_1\) from \(\pi(\cdot \mid \theta, x) = \bP_\theta(\cdot)\). They then use the proposal \(q_{\omega_1}(\theta, \cdot)\) to draw \(\theta'\), targeting \(\pi(\theta' \mid \omega_1, x) \propto \pi(\theta' \mid x) \bP_{\theta'}(\omega_1)\), and implement an acceptance--rejection step. This perspective allows us to apply recent results \cite{qin2025spectral} in our analysis.

For each \(\theta\), we define \(\pideal(\theta, \cdot), \pmhg(\theta,\cdot)\) and \(\paux(\theta, \cdot)\) for the density part of \(\Pideal(\theta, \cdot), \Pmhg(\theta,\cdot)\) and \(\Paux(\theta, \cdot)\).

\subsection{Peskun's Ordering}\label{subsec: Peskun's order}
The following  result generalizes existing results that use only one auxiliary variable, such as Lemma 1 in \cite{Wang2020OnTT}, Section 2.3 in \cite{nicholls2012coupled}, and Section 2.1 in \cite{andrieu2018utility}. The proof  is very similar and is given in Appendix \ref{subsec: proof Peskun order}.

\begin{lemma}\label{lem:peskun order}
    For each pair $(\theta,\theta')$ such that $\theta\neq\theta'$, $
        \paux(\theta,\theta') \leq \pmhg(\theta,\theta') \leq \pideal(\theta,\theta').
    $
\end{lemma}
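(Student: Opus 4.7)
My plan is to write $\paux$, $\pmhg$, and $\pideal$ explicitly as integrals from their algorithmic definitions, and then establish the two inequalities separately by exploiting concavity/superadditivity of the $\min$ function. The key observation is that $\pmhg$ is obtained from $\paux$ by marginalising $\omega_2$ \emph{inside} the acceptance ratio rather than outside (Jensen step), while $\pideal$ is obtained from $\pmhg$ by marginalising $\omega_1$ inside the $\min$ rather than outside (superadditivity step).

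\medskip

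\noindent\textbf{Step 1: $\paux\le\pmhg$.} From Step 6 of Algorithm \ref{alg:auxiliary-based MCMC}, for $\theta\ne\theta'$,
\begin{equation*}
\paux(\theta,\theta')=\int_{\Omega_1}\!\!\int_{\Omega_2}q_{\omega_1}(\theta,\theta')\,\bP_\theta(\omega_1)\,\bP_{\theta,\theta'}(\omega_2\mid\omega_1)\,\min\{1,R(\omega_1,\omega_2)\}\,\lambda_2(d\omega_2)\,\lambda_1(d\omega_1),
\end{equation*}
where $R(\omega_1,\omega_2)$ denotes the acceptance ratio. I fix $\omega_1$ and apply Jensen's inequality to the concave map $t\mapsto\min\{1,t\}$ under the probability measure $\bP_{\theta,\theta'}(\cdot\mid\omega_1)$. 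Using the factorisation $\bP_{\theta,\theta'}(\omega_1,\omega_2)=\bP_\theta(\omega_1)\bP_{\theta,\theta'}(\omega_2\mid\omega_1)$ (and similarly for the reverse joint), together with the fact that $\bP_{\theta',\theta}(\cdot\mid\omega_1)$ integrates to one against $\lambda_2$, a direct computation gives
\begin{equation*}
\int R(\omega_1,\omega_2)\,\bP_{\theta,\theta'}(\omega_2\mid\omega_1)\,\lambda_2(d\omega_2)=\frac{\pi(\theta'\mid x)\,\bP_{\theta'}(\omega_1)\,q_{\omega_1}(\theta',\theta)}{\pi(\theta\mid x)\,\bP_\theta(\omega_1)\,q_{\omega_1}(\theta,\theta')}.
\end{equation*}
Substituting this bound back and integrating over $\omega_1$ against $q_{\omega_1}(\theta,\theta')\bP_\theta(\omega_1)\lambda_1(d\omega_1)$ produces exactly the integrand defining $\pmhg(\theta,\theta')$ in Case \ref{case:omega1 = omega2}.

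\medskip

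\noindent\textbf{Step 2: $\pmhg\le\pideal$.} Pulling the density $q_{\omega_1}(\theta,\theta')\bP_\theta(\omega_1)$ inside the $\min$,
\begin{equation*}
\pmhg(\theta,\theta')=\int_{\Omega_1}\min\!\Big\{q_{\omega_1}(\theta,\theta')\bP_\theta(\omega_1),\ \tfrac{\pi(\theta'\mid x)}{\pi(\theta\mid x)}\,q_{\omega_1}(\theta',\theta)\bP_{\theta'}(\omega_1)\Big\}\,\lambda_1(d\omega_1),
\end{equation*}
whereas $\pideal(\theta,\theta')=\min\{\qideal(\theta,\theta'),\,\tfrac{\pi(\theta'\mid x)}{\pi(\theta\mid x)}\qideal(\theta',\theta)\}$. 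The pointwise inequalities $\min\{f,g\}\le f$ and $\min\{f,g\}\le g$ give $\int\min\{f,g\}\,d\mu\le\min\{\int f\,d\mu,\int g\,d\mu\}$, and the two resulting marginals coincide with $\qideal(\theta,\theta')$ and $\qideal(\theta',\theta)$ by the definition of the idealised proposal.

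\medskip

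\noindent\textbf{Main obstacle.} Both inequalities ultimately reduce to standard convexity facts, so there is no real analytic difficulty. The only bookkeeping step that requires care is the conditional-expectation computation in Step~1: one must correctly factor the \emph{reverse} joint $\bP_{\theta',\theta}(\omega_1,\omega_2)$ into its marginal and conditional parts so that the ratio $\bP_{\theta',\theta}(\omega_2\mid\omega_1)/\bP_{\theta,\theta'}(\omega_2\mid\omega_1)$ integrates to unity against $\bP_{\theta,\theta'}(\omega_2\mid\omega_1)$, leaving only the residual factor $\bP_{\theta'}(\omega_1)/\bP_\theta(\omega_1)$. Once this cancellation is verified, the rest is immediate.
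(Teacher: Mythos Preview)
Your proof is correct and follows essentially the same approach as the paper's own argument: Jensen's inequality for the concave map $t\mapsto\min\{1,t\}$ together with the conditional-expectation identity $\bE_{\omega_2\mid\omega_1}[\bP_{\theta',\theta}(\omega_1,\omega_2)/\bP_{\theta,\theta'}(\omega_1,\omega_2)]=\bP_{\theta'}(\omega_1)/\bP_\theta(\omega_1)$ for the first inequality, and the superadditivity bound $\int\min\{f,g\}\le\min\{\int f,\int g\}$ for the second. The only cosmetic difference is that the paper phrases the Jensen step as $\bE[\min\{ah(X),c\}]\le\min\{a\bE[h(X)],c\}$, but this is the same computation.
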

This lemma shows that \(\Paux\) is always less likely to move compared to \(\Pideal\) and \(\Pmhg\). Lemma \ref{lem:peskun order} implies \(\Paux \prec \Pmhg \prec \Pideal\) according to Peskun's ordering \citep{peskun1973optimum}. This implies that the asymptotic variance for any \( f \) under \(\Pideal\) is bounded above by that under \(\Pmhg\) and \(\Paux\). For further implications, see \cite{tierney1998note}.

\subsection{Comparing the Transition Density}

We define the following total variation distance:
\begin{equation*}
    \begin{aligned}
        d_\TV(\theta,\theta',\omega_1) &:= d_\TV\left(\bP_{\theta,\theta'}(\omega_2 \mid \omega_1), \bP_{\theta',\theta}(\omega_2 \mid \omega_1)\right) \\
        d_\TV(\theta,\theta') &:= \sup_{\omega_1}d_\TV(\theta,\theta', \omega_1)
    \end{aligned}
\end{equation*}
Moreover, we also define the following largest  KL-divergence:
\begin{equation*}
    \begin{aligned}
    \tilde d_\KL(\theta,\theta') :=  \sup_{\omega_1} d_\KL\left(\bP_{\theta,\theta'}(\omega_2 \mid \omega_1)|| \bP_{\theta',\theta}(\omega_2 \mid \omega_1)\right)
    \end{aligned}
\end{equation*}
and its  symmetrized version:
\begin{multline*}
d_{\KL}(\theta,\theta')
:= \tfrac12 \sup_{\omega_1} \Big(
d_{\KL}(\bP_{\theta,\theta'}(\omega_2\mid\omega_1)\Vert\bP_{\theta',\theta}(\omega_2\mid\omega_1)) \\
+ d_{\KL}(\bP_{\theta',\theta}(\omega_2\mid\omega_1)\Vert\bP_{\theta,\theta'}(\omega_2\mid\omega_1))
\Big).
\end{multline*}
We have the following comparison result on transition density between $\paux$ and $\pmhg$.
\begin{theorem}\label{thm:transition density}
For each pair $(\theta,\theta')$ with $\theta \neq \theta'$, we have
\begin{align*}
\paux(\theta,\theta')
&\ge \bigl(1 - d_{\TV}(\theta,\theta')\bigr)\,\pmhg(\theta,\theta') \\
&\ge e^{-1/e}
\exp\!\Bigl\{
-\min\!\bigl\{
d_{\KL}(\theta,\theta'),
\tilde d_{\KL}(\theta,\theta'), \\
&\qquad\qquad\qquad
\tilde d_{\KL}(\theta',\theta)
\bigr\}
\Bigr\}
\,\pmhg(\theta,\theta').
\end{align*}
\end{theorem}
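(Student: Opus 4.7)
The plan is to compare $\paux$ and $\pmhg$ by first carrying out the $\omega_2$-integration inside $\paux$, and then controlling the resulting gap in acceptance probability in terms of the TV distance (and ultimately the KL divergence) between the two conditional laws of $\omega_2$. First I would write, for $\theta\neq\theta'$,
\begin{equation*}
\paux(\theta,\theta')\;=\;\int_{\Omega_1}\bP_\theta(\omega_1)\,q_{\omega_1}(\theta,\theta')\int_{\Omega_2}\bP_{\theta,\theta'}(\omega_2\mid\omega_1)\min\!\left\{1,\,A(\theta,\theta',\omega_1)\cdot\frac{\bP_{\theta',\theta}(\omega_2\mid\omega_1)}{\bP_{\theta,\theta'}(\omega_2\mid\omega_1)}\right\}\lambda_2(\diff\omega_2)\,\lambda_1(\diff\omega_1),
\end{equation*}
where
\begin{equation*}
A(\theta,\theta',\omega_1)\;:=\;\frac{\pi(\theta'\mid x)\,\bP_{\theta'}(\omega_1)\,q_{\omega_1}(\theta',\theta)}{\pi(\theta\mid x)\,\bP_\theta(\omega_1)\,q_{\omega_1}(\theta,\theta')},
\end{equation*}
and observe that $\pmhg(\theta,\theta')$ is the same outer expression but with $\min\{1,A(\theta,\theta',\omega_1)\}$ in place of the inner integral. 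Writing $p=\bP_{\theta,\theta'}(\cdot\mid\omega_1)$ and $q=\bP_{\theta',\theta}(\cdot\mid\omega_1)$, the inner integral takes the clean form $\int\min\{p,A\,q\}\,d\mu$.

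The key algebraic lemma I would prove is that for any two probability densities $p,q$ on a common space and any $A\geq 0$, $\int\min\{p,A\,q\}\,d\mu \;\geq\; \min\{1,A\}\,(1-d_\TV(p,q))$. This reduces to the pointwise inequality $\min\{1,A\,s\}\geq \min\{1,A\}\min\{1,s\}$ for $A,s\geq 0$, which can be verified by splitting into the four cases determined by whether $A\leq 1$ and whether $s\leq 1$; substituting $s=q/p$, multiplying by $p$, and integrating then gives the lemma. Applying it to the inner integral, uniformly bounding $d_\TV(\theta,\theta',\omega_1)\leq d_\TV(\theta,\theta')$ so the TV factor can be pulled out of the $\omega_1$-integral, and recognizing the remaining outer integral as exactly $\pmhg(\theta,\theta')$ yields the first inequality.

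For the second inequality I would establish the universal pointwise bound $1-d_\TV(p,q)\geq e^{-1/e}\exp(-d_\KL(p\|q))$. Combined with the first inequality, this delivers the theorem's bound by applying the pointwise estimate separately with $d_\KL(p\|q)$, with $d_\KL(q\|p)$, and with $0.5(d_\KL(p\|q)+d_\KL(q\|p))$, then taking $\sup_{\omega_1}$ of each KL and picking whichever of the three suprema is smallest. The proof of the pointwise bound is a short Jensen computation: setting $r=q/p$, concavity of $\log$ together with $\bE_p[r]=1$ give
\begin{equation*}
\log(1-d_\TV(p,q))\;=\;\log\bE_p[\min(1,r)]\;\geq\;\bE_p[\min(0,\log r)]\;=\;-d_\KL(p\|q)-\bE_p[(\log r)_+],
\end{equation*}
and the residual is controlled by the elementary inequality $\log t\leq t/e$ (with equality at $t=e$), yielding $\bE_p[(\log r)_+]\leq \bE_p[r]/e=1/e$, whence $\log(1-d_\TV)\geq -d_\KL(p\|q)-1/e$.

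The main obstacle is isolating the sharp constant $e^{-1/e}$: a direct use of Pinsker's inequality gives only a polynomial-in-$d_\KL$ bound, and the crux is the observation that the Jensen residual $\bE_p[(\log r)_+]$ admits the \emph{universal} bound $1/e$ via $\log t\leq t/e$---easy once spotted, but not the first inequality one would reach for. Everything else (the factorization of the acceptance ratio, the four-case verification of $\min\{1,As\}\geq\min\{1,A\}\min\{1,s\}$, and the reduction of the outer integral to $\pmhg$) is routine bookkeeping once the roles of $\omega_1$ and $\omega_2$ have been disentangled.
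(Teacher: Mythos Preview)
Your proposal is correct and follows essentially the same route as the paper's proof: the first inequality via the pointwise bound $\min\{p,Aq\}\geq\min\{1,A\}\min\{p,q\}$ (the paper phrases it as $\min\{ab,cd\}\geq\min\{a,c\}\min\{b,d\}$), and the second via the improved Bretagnolle--Huber inequality $1-d_\TV(p,q)\geq e^{-1/e}\exp\{-d_\KL(p\|q)\}$. The only difference is that the paper cites this last inequality from \cite{gerchinovitz2020fano}, whereas you supply a clean self-contained derivation via Jensen and $\log t\leq t/e$; that argument is correct and is in fact the standard proof of the sharpened constant.
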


Theorem \ref{thm:transition density} also implies 
comparison on the spectral gaps, as discussed in Theorem \ref{thm:spectral gap comparison} in Appendix \ref{subsec: spec gap}. It enables us to derive stronger results for specific algorithms, as shown below.

\subsection{Applications to Existing Algorithms}\label{subsec:application}
We will explain how Theorem \ref{thm:transition density} and Theorem \ref{thm:spectral gap comparison} (in Appendix \ref{subsec: spec gap}) can be applied to study current algorithms.

\textbf{Exchange algorithm:}
Since the exchange algorithm (Algorithm \ref{alg:Exchange}) has $\omega_1 = \mathsf{Null}$, we know $\Pmhg = \Pideal$. Our next proposition recovers Theorem 5 in \cite{Wang2020OnTT}.
\begin{prop}\label{prop:exchange}
    Let $\Paux$ be the transition kernel of Algorithm \ref{alg:Exchange}, then we have
    $ \paux(\theta,\theta') \geq \left(1- \sup_{\theta,{\theta'}} d_\TV(p_\theta, p_{\theta'})\right) \pideal(\theta,\theta'). $
    Let $A_{\theta,\theta'}(s) := \{\omega: p_\theta(\omega) > s p_{\theta'}(\omega)\}$. If there exists $\epsilon,\delta \in (0,1)$ such that $\bP_{p_{\theta'}}(A_{\theta,\theta'}(\delta)) > \epsilon$ uniformly over $\theta,\theta'$, then $\Gap(\Paux) \geq \epsilon\delta \Gap(\Pideal)$.
\end{prop}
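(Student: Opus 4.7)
The proposition has two claims, and the key observation is that Algorithm \ref{alg:Exchange} is precisely the Case $\omega_1 = \mathsf{Null}$ of the meta-algorithm, so by Corollary \ref{cor:transition density special cases} we have $\Pmhg = \Pideal$ and the conditional $\bP_{\theta,\theta'}(\omega_2\mid\omega_1)$ degenerates to the marginal $\bP_{\theta,\theta'}(\omega_2)$. Reading Example \ref{eg:exchange}, the distribution used to generate $\omega$ when moving from $\theta$ to $\theta'$ is $p_{\theta'}$, so $\bP_{\theta,\theta'}=p_{\theta'}$ and $\bP_{\theta',\theta}=p_\theta$. Plugging into Theorem \ref{thm:transition density} gives $\paux(\theta,\theta')\ge (1-d_\TV(p_\theta,p_{\theta'}))\pideal(\theta,\theta')$ for each pair, and taking the supremum over $\theta,\theta'$ in the coefficient yields the first inequality.

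For the spectral gap bound, I will work directly with the acceptance ratio rather than going through the TV-based inequality of Theorem \ref{thm:spectral gap comparison}, because the latter would lose the sharpness provided by the tail condition on $A_{\theta,\theta'}(\delta)$. A quick calculation shows that for the exchange algorithm,
\begin{equation*}
    r(\theta,\theta',\omega) \;=\; \frac{p_\theta(\omega)}{p_{\theta'}(\omega)}\cdot r_{\mathrm{ideal}}(\theta,\theta'),
\end{equation*}
where $r_{\mathrm{ideal}} = \frac{\pi(\theta')p_{\theta'}(x)}{\pi(\theta)p_\theta(x)}\cdot \frac{q(\theta',\theta)}{q(\theta,\theta')}$ is the ordinary Metropolis--Hastings acceptance ratio (the $Z(\theta)/Z(\theta')$ factor hidden in $f$'s converts exactly to $p_\theta(\omega)/p_{\theta'}(\omega)$). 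Restricting the integral defining $\paux(\theta,\theta') = q(\theta,\theta')\int p_{\theta'}(\omega)\min\{1,r\}\diff\omega$ to the event $A_{\theta,\theta'}(\delta)$, where $p_\theta(\omega)/p_{\theta'}(\omega) > \delta$, we get $\min\{1,r\}\ge \delta\min\{1,r_{\mathrm{ideal}}\}$, and so
\begin{equation*}
    \paux(\theta,\theta') \;\ge\; \delta\, \bP_{p_{\theta'}}(A_{\theta,\theta'}(\delta))\, q(\theta,\theta')\min\{1,r_{\mathrm{ideal}}\} \;\ge\; \epsilon\delta\, \pideal(\theta,\theta').
\end{equation*}

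The final step is to transfer this pointwise off-diagonal density comparison to a spectral gap comparison. Since $\Paux$ and $\Pideal$ share the stationary distribution $\pi(\cdot\mid x)$, the Dirichlet forms satisfy
\begin{equation*}
    \calE_{\Paux}(f,f)=\tfrac12\iint (f(\theta')-f(\theta))^2 \pi(\theta\mid x)\paux(\theta,\theta')\diff\theta\diff\theta'\ge \epsilon\delta\, \calE_{\Pideal}(f,f)
\end{equation*}
(the diagonal masses at $\{\theta=\theta'\}$ contribute nothing), and dividing by $\var_\Pi(f)$ and taking the infimum over $f\in L^2_0(\Pi)$ gives $\Gap(\Paux)\ge \epsilon\delta\, \Gap(\Pideal)$.

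\textbf{Main obstacle.} The only slightly delicate piece is verifying the algebraic identity $r = (p_\theta/p_{\theta'})\cdot r_{\mathrm{ideal}}$, since it involves the cancellation of the intractable $Z(\theta),Z(\theta')$ between the ``ideal" and the exchange ratio --- this cancellation is exactly what makes the exchange algorithm work, and it must be tracked carefully. Everything else (restricting the $\omega$-integral to $A_{\theta,\theta'}(\delta)$, and the Dirichlet form comparison) is routine.
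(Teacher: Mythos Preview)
Your proposal is correct and follows essentially the same route as the paper. For the first claim both you and the paper simply invoke Theorem \ref{thm:transition density} (together with Corollary \ref{cor:transition density special cases}) after identifying $\bP_{\theta,\theta'}=p_{\theta'}$; for the second claim, the paper observes
\[
1-d_\TV(p_\theta,p_{\theta'})=\bE_{p_{\theta'}}\!\left[\min\{1,p_\theta/p_{\theta'}\}\right]\ge \delta\,\bP_{p_{\theta'}}(A_{\theta,\theta'}(\delta))\ge \epsilon\delta
\]
and then applies Theorem \ref{thm:spectral gap comparison}, which is exactly your restriction-to-$A_{\theta,\theta'}(\delta)$ and Dirichlet-form argument written through the TV identity---so your remark that Theorem \ref{thm:spectral gap comparison} ``would lose the sharpness'' is not accurate, but your direct computation of $r=(p_\theta/p_{\theta'})\,r_{\mathrm{ideal}}$ arrives at the identical bound.
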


\textbf{PoissonMH:} Theorem \ref{thm:spectral gap comparison} (in Appendix \ref{subsec: spec gap}) allows us to give a (arguably) simpler proof for the theoretical analysis of PoissonMH, with slightly stronger results. The proof is in Appendix \ref{subsec:proof of PoissonMH}.

\begin{prop}\label{prop:PoissonMH}
   With all the notations used in PoissonMH, let $\Paux$ be the transition kernel of Algorithm \ref{alg:PoissonMH}. Then we have $\Gap(\Paux)
    \geq
    \max\left\{
    \frac{1}{2}\exp\left(-\frac{L^2}{\lambda+L}\right),
    \,
    e^{-1/e}\exp\left(-\frac{L^2}{2\lambda}\right)
    \right\}
    \Gap(\Pideal)$.
\end{prop}

 Theorem 2 of \citet{zhang2019poisson} establishes  \(\Gap(\Paux) \geq 0.5 \exp\left\{\frac{-L^2}{\lambda+ L}\right\}\Gap(\Pideal)\) as the main theoretical guarantee for PoissonMH. Since the first term in the maximum is exactly the bound in \citet{zhang2019poisson}, our result is never worse, and is strictly sharper if $\lambda > L$. 

\textbf{TunaMH:} As another application of Theorem \ref{thm:spectral gap comparison} (in Appendix \ref{subsec: spec gap}), we provide a (again, arguably) simpler proof for TunaMH, achieving  stronger results. The proof is in Appendix \ref{subsec:proof of TunaMH}.
\begin{prop}\label{prop:TunaMH}
   With all the notations used in TunaMH (Algorithm \ref{alg:TunaMH}), let $\Paux$ be the transition kernel of Algorithm \ref{alg:TunaMH}. Then we have $\Gap(\Paux) \geq e^{-1/e}\exp\left\{-1/(2\chi)\right\} \Gap(\Pideal)$.
\end{prop}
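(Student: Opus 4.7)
The plan is to view TunaMH as an instance of the meta-algorithm with $\omega_1 = \mathsf{Null}$, so by Corollary \ref{cor:transition density special cases} we have $\Pmhg = \Pideal$. It then suffices, by Theorem \ref{thm:spectral gap comparison}, to show that the symmetrized KL divergence
\[
d_\KL(\theta,\theta') = \tfrac{1}{2}\Big(d_\KL\big(\bP_{\theta,\theta'}\,\|\,\bP_{\theta',\theta}\big) + d_\KL\big(\bP_{\theta',\theta}\,\|\,\bP_{\theta,\theta'}\big)\Big)
\]
is bounded above by $1/(2\chi)$ uniformly in $\theta,\theta'$. Plugging this into the second inequality of Theorem \ref{thm:spectral gap comparison} then yields the claimed factor $e^{-1/e}\exp\{-1/(2\chi)\}$.

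To bound $d_\KL(\theta,\theta')$, I would exploit that $\bP_{\theta,\theta'}$ and $\bP_{\theta',\theta}$ are both product measures of independent Poissons with parameters $a_i := \lambda c_i/C + \phi_i(\theta,\theta';x)$ and $b_i := \lambda c_i/C + \phi_i(\theta',\theta;x)$, respectively. Since KL factorizes over independent coordinates and since for two Poissons one has $d_\KL(\Poi(a)\|\Poi(b)) + d_\KL(\Poi(b)\|\Poi(a)) = (a-b)\log(a/b)$, the task reduces to estimating
\[
\sum_{i=1}^{N} (a_i - b_i)\log(a_i/b_i).
\]
Writing $\mu_i := \lambda c_i/C + c_i M(\theta,\theta')/2$ and $\Delta_i := U_i(\theta';x) - U_i(\theta;x)$, a direct computation gives $a_i = \mu_i + \Delta_i/2$, $b_i = \mu_i - \Delta_i/2$, and $a_i - b_i = \Delta_i$ with $|\Delta_i| \leq c_i M(\theta,\theta')$ by the TunaMH assumption.

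The key analytic step I would use is the elementary inequality $\log((1+y)/(1-y)) \leq 2y/(1-y)$ for $y \in [0,1)$, which yields $(a_i-b_i)\log(a_i/b_i) \leq \Delta_i^2/\min(a_i,b_i)$ after applying it to $y = |\Delta_i|/(2\mu_i)$. Since $\min(a_i,b_i) \geq \lambda c_i/C = \chi c_i C M^2(\theta,\theta')$ by the definition of $\lambda$ in TunaMH, and $\Delta_i^2 \leq c_i^2 M^2(\theta,\theta')$, each term is bounded by $c_i/(\chi C)$. Summing and using $\sum_i c_i = C$ gives $\sum_i (a_i-b_i)\log(a_i/b_i) \leq 1/\chi$, so $d_\KL(\theta,\theta') \leq 1/(2\chi)$.

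The main obstacle is selecting the right analytic inequality for Poisson KL: the bound must produce a denominator of $\min(a_i,b_i)$ (not something like $\mu_i(1-x_i^2)$ which would leave messy dependence on $M(\theta,\theta')$), so that the cancellation $\lambda = \chi C^2 M^2(\theta,\theta')$ absorbs the $M^2(\theta,\theta')$ in the numerator cleanly. Once this inequality is in place, the rest is direct algebra that removes all dependence on $M(\theta,\theta')$, the data, and the dimension, leaving the clean $1/(2\chi)$ constant predicted by Theorem \ref{thm:spectral gap comparison}.
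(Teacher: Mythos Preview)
Your proposal is correct and essentially identical to the paper's proof. Both recognize TunaMH as the $\omega_1=\mathsf{Null}$ case so that $\Pmhg=\Pideal$, reduce to bounding the symmetrized KL between the two Poisson products, use the identity $d_\KL(\Poi(a)\|\Poi(b))+d_\KL(\Poi(b)\|\Poi(a))=(a-b)\log(a/b)$, and obtain the uniform bound $1/\chi$ via $\lambda=\chi C^2 M^2(\theta,\theta')$; your inequality $\log\!\big(\tfrac{1+y}{1-y}\big)\le \tfrac{2y}{1-y}$ is just a reparametrization of the paper's $\log(1+t)\le t$ (take $t=(a-b)/b$), so the two arguments coincide step for step.
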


Theorem 2 of \cite{zhang2020asymptotically} shows \(\Gap(\Paux) \geq \exp\{-1/\chi - 2\sqrt{(\log 2)/\chi}\} \Gap(\Pideal)\). Our rate \(\exp\{-1/(2\chi)\}\) is strictly sharper than theirs, with the improvement increasing as \(\chi\) decreases.

\section{Numerical Experiments}\label{sec:experiment}
We first examine two simulated examples: heterogeneous truncated Gaussian and robust linear regression. We compare our Poisson--Barker and Poisson--MALA algorithms with PoissonMH and full-batch algorithms like random-walk Metropolis, MALA, HMC, Barker \cite{livingstone2022barker}, and stochastic gradient MCMC algorithms like SGLD. In the third experiment, we apply Bayesian logistic regression on the \texttt{MNIST} dataset, comparing Tuna--SGLD with TunaMH, random-walk Metropolis, MALA, HMC, and Barker. Our results show that our gradient-based minibatch methods significantly outperform existing minibatch and full-batch methods.

In all the figures, the random-walk Metropolis is labeled as MH, while all other methods are labeled by their respective names. Minibatch algorithms require computing several parameters described in Algorithm \ref{alg:PoissonMH} and \ref{alg:TunaMH} before implementation. They are derived in Appendix \ref{sec:experiment details}.

\subsection{Heterogeneous Truncated Gaussian}\label{subsec:gaussian experiment}

We begin with an illustration  example. Suppose $\theta \in \bR^d$ is our parameter of interest, the data $\{y_i\}_{i=1}^N$ is generated i.i.d. from $y_i\mid\theta \sim \bN(\theta, \Sigma)$ with $\theta = (0,0,\ldots, 0)$. In our experiment, we set $d =20$, $N=10^5$ and $\Sigma = \text{Diagonal}(1, 0.95, 0.90, \cdots, 0.05)$. Therefore, our data follows a multidimensional heterogeneous Gaussian, a common setting in sampling tasks such as  \cite{neal2011mcmc}. 
Following \cite{zhang2019poisson, zhang2020asymptotically}, we truncate the parameter $\theta$ to  $[-3,3]^d$ and apply a flat prior. As in \cite{seita2018efficient, zhang2020asymptotically}, we aim to sample from the tempered posterior: $\pi(\theta\mid\{y_i\}_{i=1}^N) \propto \exp \left \{ -\frac{1}{2}\beta \sum_{i=1}^N (\theta - y_i)^\intercal \Sigma^{-1}(\theta-y_i) \right \}$ with $\beta = 10^{-5}$.

We test PoissonMH, Poisson--MALA, Poisson--Barker, (full-batch) random-walk Metropolis, (full-batch) MALA, (full-batch) Barker and (minibatch without convergence guarantee) SGLD. Following \cite{zhang2019poisson}, the hyperparameter $\lambda$ for the three minibatch algorithms is set to $\lambda = 0.0005L^2$, resulting in a batch size of about 6000 (6\% of the data points). For SGLD, the minibatch size is set to 512. The initial $\theta_0$ is drawn from $\bN(0,\mathbb I_d)$, and each algorithm's step size is tuned through several pilot runs to achieve target acceptance rates of 0.25, 0.4, and 0.55. We use Mean Squared Error (MSE) and standard autocorrelation-based Effective Sample Size (ESS) to each dimension of the parameters (identity test function) to compare these methods. We report the minimum, median, and maximum ESS across all dimensions for all methods.

We compare the clock-time performances of the seven methods. Figure \ref{fig:20d-mse} shows the MSE  over time at different acceptance rates. We find: 1. Poisson--Barker significantly improves  PoissonMH, random-walk Metropolis, MALA and Barker for all acceptance rates, both in terms of posterior mean and variance, with more notable improvement at higher acceptance rates. 
2. The three exact minibatch algorithms, PoissonMH, Poisson--MALA, and Poisson--Barker, consistently outperform random-walk Metropolis, MALA, and Barker. SGLD decreases MSE quickly at early times but then plateaus, consistent with fixed-step-size bias. 3. Poisson--MALA matches Poisson--Barker's best performance at acceptance rates of 0.4 and 0.55 but shows worse performance than PoissonMH at a 0.25 acceptance rate, aligning with \cite{livingstone2022barker} that MALA is sensitive to tuning, while Barker is more robust. SGLD reduces MSE quickly at early times, likely because it avoids an acceptance–rejection correction. Its curve then plateaus between 25 and 50 seconds; this behavior is consistent with the fixed-step-size bias of SGLD.

Table \ref{tab:20d-ess} compares best ESS per second (ESS/s) across acceptance rates $\{0.25, 0.4, 0.55\}$. Full results are provided in Table \ref{tab:20d-ess-full} in the appendix. Our gradient-based minibatch methods, Poisson--Barker and Poisson--MALA, consistently rank as the top two methods for all acceptance rates (0.25, 0.4, 0.55) and metrics (min, median, max) ESS/s. They improve performance by $1.37 - 7.12$ times over PoissonMH, $4.39 - 9.80$ times over MALA, $6.58 - 15.58$ times over Barker, and $13.62 - 70.12$ times over random-walk Metropolis. Despite the additional gradient estimations, our gradient-based algorithms show substantial benefits from a more efficient proposal.

\begin{table}[htb!]
  \centering
  \caption{ESS/s comparison. (Min, Median, Max) refer to the minimum, median and maximum ESS/s across all dimensions. ``Best'' reports the best ESS/s across all acceptance rates $\{0.25, 0.4, 0.55\}$. Results are averaged over 100 runs.}
  \label{tab:20d-ess}
  \begin{tabular}{lc}
    \toprule
    Method & Best ESS/s (Min, Med, Max) \\
    \midrule
      MH  & (0.05, 0.08, 0.47) \\
      MALA  & (0.10, 0.19, 2.77) \\
      Barker  & (0.12, 0.22, 1.53) \\
      PoissonMH & (0.40, 0.66, 4.67) \\
      Poisson--Barker & (\textbf{0.91}, \textbf{1.65}, 12.16) \\
      Poisson--MALA & (0.84, \textbf{1.65}, \textbf{23.84}) \\
    \bottomrule
  \end{tabular}
\end{table}

\subsection{Robust Linear Regression}\label{subsec:robust regression experiment}
We compare Poisson--Barker, Poisson--MALA, PoissonMH, random-walk Metropolis, MALA, Barker, HMC and SGLD in a robust linear regression example, as considered in \cite{cornish2019scalable, zhang2020asymptotically, maclaurin2014firefly}. The data is generated following previous works: for $i = 1, 2, \cdots, N$, covariates $x_i \in \bR^d$ are independently generated from $\bN(0, \mathbb I_d)$ and $y_i = \sum_{j=1}^d x_{ij} + \epsilon_i$ with $\epsilon_i \sim \bN(0,1)$. The likelihood is modeled as $p(y_i\mid\theta, x_i) = \text{Student}(y_i - \theta^\intercal x_i\mid v)$, where $\text{Student}(\cdot\mid v)$ is the density of a Student's t distribution with $v$ degrees of freedom. The model is ``robust" due to the heavier tail of the likelihood function \cite{maclaurin2014firefly}.  With a flat prior, the tempered posterior is: $
\pi(\theta \mid \{y_i, x_i\}_{i=1}^N) \propto \exp\left\{ -\beta \cdot \frac{v+1}{2} \sum_{i=1}^N\log\left(1+ \frac{(y_i-\theta^\intercal x_i)^2}{v}\right) \right\}.$
We also follow the common practice \citep{gelman2008weakly} of constraining the coefficients within a high-dimensional sphere $\{\theta \in \bR^d \mid \Vert\theta\Vert_2 \leq R\}$.

We use $v=4$ as in \cite{cornish2019scalable, zhang2020asymptotically} and set $d=10$, $N= 10^5$, $\beta=10^{-4}, R=15$. PoissonMH, Poisson--MALA and Poisson--Barker use $\lambda = 0.01L^2$.  We compare the performance of HMC under different choices of leapfrog steps and set it to $10$; see Section \ref{subsec: additional robust linear regression, 10d} in the appendix for details. The initial $\theta_0$ is drawn from $\bN(0,\mathbb I_d)$, and the step size of each algorithm is tuned to achieve  acceptance rates of $0.25, 0.4,$ and $0.55$. Our true parameter is $\theta^\star = (1,1,\ldots, 1)$.

Figure \ref{fig:10d-lin-vs-time} shows the MSE of estimating the true coefficients over time. Poisson--MALA and Poisson--Barker are much more efficient than PoissonMH, and all minibatch methods outperform full-batch methods.  
 We also present the best ESS/s comparisons in Table \ref{tab:10d-lin-ess}. Full results are provided in Table \ref{tab:10d-lin-ess-full} in the appendix. The results have a similar trend to Section \ref{subsec:gaussian experiment}. Poisson--MALA and Poisson--Barker show improvements ranging from 1.80 to 8.79 times over PoissonMH, and achieving nearly or more than 100 times the performance of full-batch methods.

Appendices \ref{subsec: additional robust linear regression, 10d} and \ref{subsec: additional robust linear regression, 50d} provide additional experiments covering different parameter settings, higher dimensions and larger data size.

\begin{figure}[htb!]
\centering
    \includegraphics[width=0.48\textwidth]{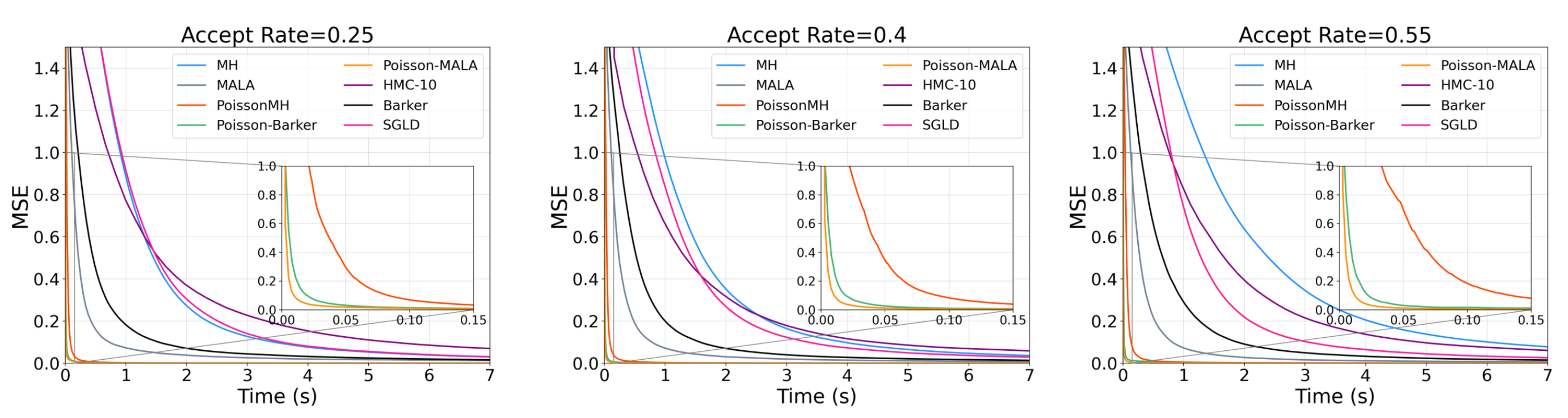}
\caption{MSE of estimating $\theta^\star$ as a function of time across different acceptance rates. The three large plots show the performance of all methods in the first 7 seconds. The three inside plots zoom in on PoissonMH, Poisson--Barker, and Poisson--MALA in the first 0.5 seconds. All results are averaged over 100 runs.}
\label{fig:10d-lin-vs-time}
\end{figure}

\begin{table}[htb!]
  \centering
  \caption{ESS/s comparison. (Min, Median, Max) denote the minimum, median and maximum ESS/s across all dimensions. ``Best'' reports the best ESS/s over acceptance rates $\{0.25, 0.4, 0.55\}$. Results are averaged over 100 runs.}
  \label{tab:10d-lin-ess}
  \begin{tabular}{lc}
    \toprule
    Method & Best ESS/s (Min, Med, Max) \\
    \midrule
      MH  &  (2.0, 2.1, 2.1) \\
      MALA  &  (5.0, 5.1, 5.2) \\
      Barker  &  (2.9, 3.0, 3.0) \\
      HMC-10  &  (0.9, 0.9, 1.0) \\
      PoissonMH & (99.0, 100.2, 101.1)  \\
      Poisson--Barker & (265.3, 268.3, 270.3)  \\
      Poisson–MALA  & (\textbf{489.3}, \textbf{491.8}, \textbf{496.5}) \\
    \bottomrule
  \end{tabular}
\end{table}

\subsection{Bayesian Logistic Regression}\label{subsec:logistic experiment}

We compare Tuna--SGLD, TunaMH, MALA, Barker, HMC, random-walk Metropolis on a Bayesian logistic regression task using the \texttt{MNIST} handwritten digits dataset. We focus on classifying handwritten 3s and 5s, which can often appear visually similar. The training and test set have 11,552 and 1,902 samples, respectively. Following \cite{welling2011bayesian, zhang2020asymptotically}, we use the first 50 principal components of the $28\times 28$ features as covariates.

\begin{figure}[htb!]

\centering
    \includegraphics[width=0.45\textwidth]{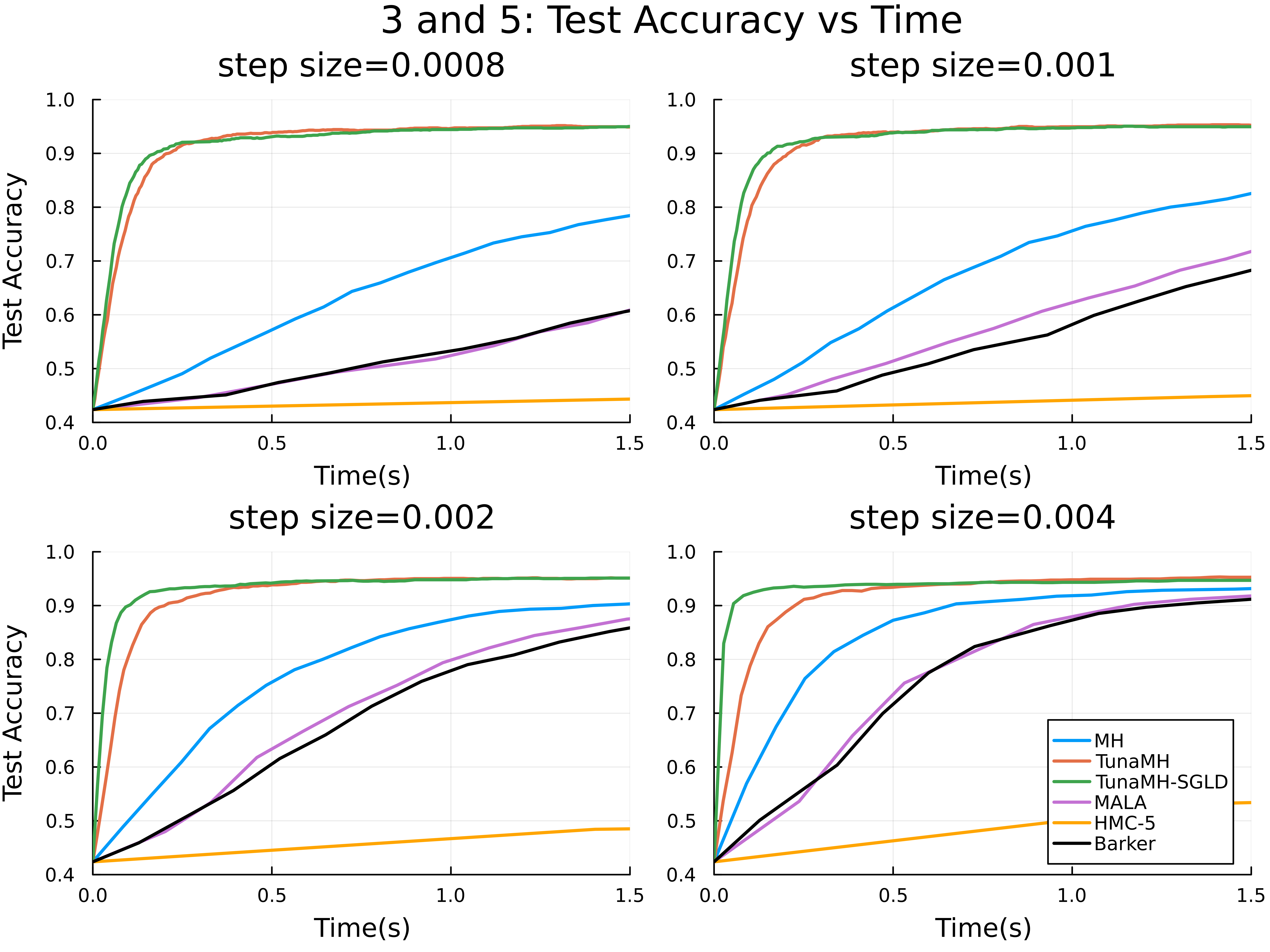}
\caption{Test accuracy as a function of time. Tuna--SGLD is our proposed method.}
    \label{fig:minst35 test vs time}

\end{figure}
For  TunaMH and Tuna--SGLD, we follow \cite{zhang2020asymptotically} and set the hyperparameter $\chi = 10^{-5}$. The batch size $\lvert B\rvert$ in Tuna--SGLD is set as $20$, since it is sufficient for Tuna--SGLD to converge fast and using larger minibatch sizes will bring additional computational cost; see Section \ref{subsec: further experiment, 3s and 5s}. We vary step sizes in ${0.0008, 0.001, 0.002, 0.004}$ for all algorithms. HMC uses 5 leapfrog steps.

The test accuracy comparisons are shown in Figure \ref{fig:minst35 test vs time}. Due to an extra minibatch for gradient estimation, Tuna--SGLD converges faster than TunaMH  for all step sizes, especially larger ones. Both of them significantly outperform full-batch methods, consistent with our previous observations. HMC converges poorly with the step sizes considered.

We test all methods across a wider range of hyperparameter choices in appendix \ref{subsec: further experiment, 3s and 5s}, and provide experimental results on classifying 7s and 9s in appendix \ref{subsec: logistic experiment additional 7s and 9s}.

\subsection{MCMC Correctness Checks}\label{subsec: correctness check}

We add an empirical correctness check for PoissonMH, Poisson–MALA, Poisson–Barker, random-walk Metropolis, MALA, Barker, and SGLD. We use the truncated Gaussian example from Section \ref{subsec:gaussian experiment} and generate reference posterior samples using rejection sampling. Under a matched computation budget, we discard the first 20\% of iterations as burn-in and compute two-sample Kolmogorov–Smirnov (KS) statistics against the reference samples. Table \ref{tab:ks statistic} shows that Poisson–Barker and Poisson–MALA have maximum KS statistics no larger than 0.05 in this experiment. In contrast, SGLD reaches a maximum KS statistic of 0.18, consistent with the known fixed-step-size bias of SGLD. These results support that our methods are sampling from the correct target distribution in this benchmark.

\begin{table}[htb!]
  \centering
  \caption{Kolmogorov--Smirnov statistics comparison with different target accept rates. (Min, Median, Max) denote the minimum, median and maximum across all dimensions. Note that SGLD does not have a target accept rate.}
   \label{tab:ks statistic}
  \begin{threeparttable}
    \small
    \begin{adjustbox}{width=\columnwidth}
      \begin{tabular}{cccc}
        \toprule
         & \multicolumn{3}{c}{KS statistics: (Min, Median, Max)}\\
        \cmidrule{2-4}
        Method & target rate=0.25 & target rate=0.4 & target rate=0.55 \\
        \cmidrule{1-4}
        MH  & (0.03, 0.07, 0.18) & (0.03, 0.07, 0.18) & (0.03, 0.08, 0.18) \\
        MALA  & (0.04, 0.06, 0.15) & (0.03, 0.06, 0.13) & (0.03, 0.05, 0.11) \\
        Barker  & (0.03, 0.05, 0.10) & (0.02, 0.04, 0.10) & (0.02, 0.04, 0.09) \\
        SGLD  & same & (0.06, 0.12, 0.18) & same \\
        PoissonMH  & (0.01, 0.02, 0.06) & (0.01, 0.02, 0.06) & (0.01, 0.03, 0.08) \\
        Poisson--Barker  & (0.01, 0.02, 0.05) & (0.01, 0.02, 0.05) & (0.01, 0.02, 0.04) \\
        Poisson--MALA  & (0.01, 0.03, 0.05) & (0.01, 0.02, 0.04) & (0.01, 0.02, 0.04) \\
        \bottomrule
      \end{tabular}
    \end{adjustbox}
  \end{threeparttable}
\end{table}

\section{Future Directions}\label{sec:future}

The proposed algorithms have several limitations. First, they require precomputed constants or bounds, such as $M_i, c_i, L$ and $C$; in some models these quantities may be unavailable or too loose to be useful. Second, their performance depends on hyperparameters that control minibatch size and estimator variance. Third, the current algorithms are designed for independent-data likelihoods.

These limitations naturally suggest two future directions. First, minibatch MCMC should be developed for non-i.i.d.\ models. For i.i.d.\ data, per-iteration cost is roughly proportional to minibatch size (e.g., $5\%$ data $\approx 5\%$ cost). In contrast, many non-i.i.d.\ posteriors are much more expensive to evaluate (often quadratic or cubic in sample size), so a $5\%$ minibatch may cost $\approx 0.25\%$ (or less) of a full-batch step. This makes non-i.i.d.\ settings such as spatial models and independent component analysis particularly important. Second, the source of empirical speedups remains unclear: some minibatch methods report $10$--$100\times$ higher ESS/s than full-data MCMC, yet \cite{johndrow2020no} shows that several minibatch schemes (including \cite{maclaurin2014firefly}) do not yield meaningful gains for generalized linear models. One possible link is whether a small subset can approximate the full posterior, as in coreset methods \cite{huggins2016coresets, campbell2018bayesian, campbell2019automated}.

\section*{Acknowledgement}
The  authors are partially supported by the National Science Foundation through
grant DMS-2210849 and an Adobe Data Science Research Award. The authors thank Qian
Qin and Pierre Jacob for helpful discussions. The authors thank the four anonymous reviewers for their insightful suggestions during the rebuttal process.

\section*{Impact Statement}
This paper presents work whose goal is to advance the field of machine learning. There are many potential societal consequences of our work, none of which we feel must be specifically highlighted here.

\bibliography{mcmc-auxiliary}
\bibliographystyle{icml2026}

\clearpage
\appendix
\onecolumn

\section{Details of Exchange algorithm, PoissonMH, TunaMH}\label{sec:alg-details}

The beginning of each algorithm's pseudocode contains a section titled ``promise," which describes the technical assumptions and practical prerequisites that users need to know before executing the algorithm. Two points merit attention. First, the posterior with \( N \) i.i.d. data points satisfies this promise of PoissonMH (and TunaMH) by setting \(\phi_i(\theta,x):= N^{-1} \log\pi(\theta) + \log\mathsf{p}_\theta(x_i) \) (and $U_i(\theta,x) := -N^{-1} \log\pi(\theta) - \log\mathsf{p}_\theta(x_i)$.) Second, the naïve implementation of the Poisson minibatch sampling (Step 5 in Algorithm \ref{alg:PoissonMH}, Step 6 in  Algorithm \ref{alg:TunaMH}) has a cost linear with the total data size, since each data point is associated with a Poisson variable. However, as highlighted in \cite{zhang2019poisson, zhang2020asymptotically}, the `Poisson thinning' trick can be used to implement Poisson minibatch sampling much more efficiently. Therefore, its per-step cost is much lower than both the naïve implementation and full-batch methods. See Section \ref{sec:poisson-thinning} in Appendix for implementation details and proofs.

\begin{algorithm}[htb!]
		\caption{Exchange Algorithm}
		\label{alg:Exchange}
		\begin{algorithmic}[1]
			\STATE \textbf{Promise:}
			\begin{itemize}
				\vspace{-1em}
				\item Target $\pi(\theta\mid x) \propto \pi(\theta)p_\theta(x)$, 
				\vspace{-0.5em}
				\item Likelihood $p_\theta(x) = f_\theta(x)/Z(\theta)$ where the value of $f_\theta$ can be queried 
				\vspace{-0.5em}
				\item A simulator $\calS(\cdot)$ with input $\theta$. Each run outputs a random variable $\omega \sim p_\theta$
				
			\end{itemize} 
			
			\STATE \textbf{Initialize:} Initial state $\theta_0$, proposal $q$;  number of iterations $T$
			\FOR{$t=0$ to $T-1$}
			\STATE propose $\theta' \sim q(\theta_t,\cdot)$
			\STATE sample $\omega \sim \calS(\theta')$
			\STATE compute\\
			\begin{equation*}
				r \xleftarrow{} \frac{\pi(\theta') f_{\theta'}(x)f_{\theta_t}(\omega)}{\pi(\theta_t)f_{\theta_t}(x)f_{\theta'}(\omega)} \cdot \frac{q(\theta',\theta_t)}{q(\theta_t,\theta')}
			\end{equation*}
			
			\STATE with probability $\min\{1, r\}$, set $\theta_{t+1} \xleftarrow{} \theta'$; otherwise, $\theta_{t+1} \xleftarrow{} \theta_t$
			\ENDFOR
		\end{algorithmic}
	\end{algorithm}

	\begin{algorithm}[htb!]
		\caption{PoissonMH}
		\label{alg:PoissonMH}
		\begin{algorithmic}[1]
			\STATE \textbf{Promise:} 
			\begin{itemize}
				\vspace{-1em}
				\item Target $\pi(\theta\mid x) \propto \exp{\{\sum_{i=1}^N\phi_i(\theta; x)\}}$,
				\vspace{-0.5em}
				\item For every $\theta$, each $\phi_i(\theta; x) \in [0,M_i]$ with some $M_i>0$
			\end{itemize} 
			\vspace{-1em}
			\STATE \textbf{Initialize:} Initial state $\theta_0$; proposal $q$; hyperparameter $\lambda$; set $L := \sum_{i=1}^N M_i$;  number of iterations $T$
			\FOR{$t = 0$ to $T-1$}
			\STATE propose $\theta' \sim q(\theta_t,\cdot)$
			\STATE sample $s_i \sim \Poi\left(\frac{\lambda M_i}{L} + \phi_i(\theta_t; x)\right)$ for each $i \in \{1,2,\ldots, N\}$, form minibatch $S = \{i \mid s_i > 0\}$
			\STATE compute
			\begin{equation*}
                r \xleftarrow{} \frac{ \exp \left\{ \sum_{i \in S} s_i \log \left(  1 + \frac{L}{\lambda M_i}\phi_i(\theta'; x) \right) \right\} }{ \exp \left\{ \sum_{i \in S} s_i \log \left(  1 + \frac{L}{\lambda M_i}\phi_i(\theta_t; x) \right) \right\}} \cdot \frac{q(\theta', \theta_t)}{q(\theta_t, \theta')}
			\end{equation*}
			\STATE with probability $\min\{1, r\}$, set $\theta_{t+1} \xleftarrow{} \theta'$; otherwise $\theta_{t+1} \xleftarrow{} \theta_t$
			\ENDFOR
		\end{algorithmic}
	\end{algorithm}

	\begin{algorithm}[htb!]
		\caption{TunaMH}
		\label{alg:TunaMH}
		\begin{algorithmic}[1]
			\STATE \textbf{Promise:}
			\begin{itemize}
				\vspace{-1em}
				\item Target $\pi(\theta\mid x) \propto \exp{\{-\sum_{i=1}^N U_i(\theta; x)\}}$,
				\vspace{-0.5em}
				\item For every $\theta$ and $\theta'$, each $|U_i(\theta';x) - U_i(\theta;x)| \leq c_i M(\theta, \theta')$ for some symmetric non-negative function $M$ and positive constants $c_i$
			\end{itemize} 
			\STATE \textbf{Initialize:} Initial state $\theta_0$; proposal $q$; hyperparameter $\chi$;  set $C=\sum_{i=1}^N c_i$;  number of iterations $T$
			\FOR{$t=0$ to $T-1$}
			\STATE propose $\theta' \sim q(\theta_t,\cdot)$ and compute $M(\theta_t,\theta')$
			\STATE set $\lambda = \chi C^2 M^2(\theta_t,\theta')$
			\STATE sample $s_i \sim \Poi(\frac{\lambda c_i}{C} + \phi_i(\theta_t, \theta';x))$ where $\phi_i(\theta_t,\theta';x) = \frac{U_i(\theta';x)-U_i(\theta_t;x)}{2} + \frac{c_i}{2}M(\theta_t,\theta')$, for each $i \in \{1, 2, \cdots, N\}$, form minibatch $S = \{i \mid s_i > 0\}$
			\STATE compute
			\begin{equation*}
				r \xleftarrow{} \frac{ \exp \left\{ \sum_{i \in S} s_i \log \left(  1 + \frac{C}{\lambda c_i}\phi_i(\theta',\theta_t;x) \right) \right\} }{ \exp \left\{ \sum_{i \in S} s_i \log \left(  1 + \frac{C}{\lambda c_i}\phi_i(\theta_t, \theta';x) \right) \right\}} \cdot \frac{q(\theta', \theta_t)}{q(\theta_t, \theta')}
			\end{equation*}
			\STATE with probability $\min\{1, r\}$, set $\theta_{t+1} \xleftarrow{} \theta'$; otherwise $\theta_{t+1} \xleftarrow{} \theta_t$
			\ENDFOR
		\end{algorithmic}
	\end{algorithm}
\newpage
\section{Additional Proofs}

\subsection{Proof of Proposition \ref{prop:detailed-balance-one-auxiliary}}\label{subsec:proof common procedure}
\begin{proof}[Proof of Proposition \ref{prop:detailed-balance-one-auxiliary}]
		For unbiasedness, integrating both sides of \eqref{eqn:detailed-balance-one-auxiliary} with respect to $\omega$ gives:
		\begin{align*}
			\pi(\theta\mid x) \bE_{\Pttp}[R_{\theta \rightarrow \theta'}(\omega)] = \pi(\theta'\mid x) \int P_{\theta'\rightarrow \theta} (\omega) \diff \omega = \pi(\theta'\mid x),
		\end{align*}
		where the second equality follows from the fact that integrating a probability measure over the entire space equals  one. Dividing both sides by $\pi(\theta \mid x)$ gives us the desired result.

		For reversibility, we prove this by checking the detailed balance equation. For every fixed $\theta\neq \theta'$, we aim to show:
		\begin{align*}
			\pi(\theta\mid x)K(\theta, \theta') &= \pi(\theta'\mid x)K(\theta', \theta)
		\end{align*}
		Expanding the left side of the above equation: 
		\begin{equation*}
			\begin{aligned}
				\pi(\theta\mid x)K(\theta, \theta')
				&= \pi(\theta\mid x)\qttp\mathbb{E}_{\omega}\left[ \min\left(R_{\theta \rightarrow \theta'}(\omega)\frac{\qtpt}{\qttp}, 1\right) \right ]\\
				&= \int_{\Omega} \pi(\theta\mid x)\qttp \Pttp(\omega) \min\left(R_{\theta \rightarrow \theta'}(\omega) \frac{\qtpt}{\qttp}, 1\right) \diff\omega\\
				& = \int_{\Omega} \min\left(\pi(\theta\mid x) \Pttp(\omega) R_{\theta \rightarrow \theta'}(\omega) q(\theta',\theta),\pi(\theta\mid x)\qttp \Pttp(\omega)  \right) \diff \omega\\
				& = \int_{\Omega} \min\left(\pi(\theta'\mid x) \Ptpt(\omega) q(\theta',\theta),\pi(\theta\mid x)\qttp \Pttp(\omega)  \right) \diff \omega  \qquad\qquad\quad~~ \text{by \eqref{eqn:detailed-balance-one-auxiliary}}\\
				& = \int_{\Omega} \min\left(\pi(\theta'\mid x) \Ptpt(\omega) q(\theta',\theta), \qttp  R_{\theta' \rightarrow \theta}(\omega) \pi(\theta'\mid x) \Ptpt(\omega) \right) \diff \omega  \qquad \text{by \eqref{eqn:detailed-balance-one-auxiliary}}\\
				& = \int_{\Omega} \pi(\theta'\mid x) \Ptpt(\omega) q(\theta',\theta)
				\min\left(1, \frac{\qttp}{q(\theta',\theta)}  R_{\theta' \rightarrow \theta}(\omega)\right) \diff \omega  \\
				& = \pi(\theta'\mid x) q(\theta',\theta)\bE_\omega\left[\min\left(1, \frac{\qttp}{q(\theta',\theta)}  R_{\theta' \rightarrow \theta}(\omega)\right)\right]\\
				& = \pi(\theta'\mid x) K(\theta',\theta),
			\end{aligned}
		\end{equation*}
		gives the desired result.
	\end{proof}

 \subsection{Verification of Examples in Section \ref{sec:common procedure}}\label{subsec: verify examples}
\begin{example}[Exchange algorithm]\label{eg:exchange}
		Exchange algorithm (Algorithm \ref{alg:Exchange}) aims to sample from $\pi(\theta|x) \propto p(\theta) \frac{f_\theta(x)}{Z(\theta)}$ for doubly-intractable distributions. The auxiliary variable $\omega$ is an element in the sample space. Meanwhile:
		\begin{itemize}
			\item $\Pttp(\omega) := p_{\theta'}(\omega) = f_{\theta'}(\omega)/Z(\theta')$.
			\item The estimator 
			$$R_{\theta\rightarrow \theta'}(\omega) := \frac{\pi(\theta')f_{\theta'}(x)f_\theta(\omega)}{\pi(\theta)f_\theta(x)f_{\theta'}(\omega)}.$$
		\end{itemize}
	\end{example}
	
	\begin{example}[PoissonMH]\label{eg:PoissonMH}
		PoissonMH (Algorithm \ref{alg:PoissonMH}) aims to sample from  $\pi(\theta\mid x) \propto \exp(\sum_{i=1}^N\phi_i(\theta; x))$. The auxiliary variable $\omega = (s_1, s_2, \cdots, s_N)$ belongs to $\{0,1,2,\cdots \}^N$, i.e., a vector of natural numbers with the same length as the number of data points. Each component of $\omega$ follows an independent Poisson distribution. More precisely:
		\begin{itemize}
			\item $\Pttp := \bigotimes_{i=1}^N \Poi\left(\frac{\lambda M_i}{L} + \phi_i(\theta; x)\right)$.
			\item The estimator 
			\[
			R_{\theta\rightarrow \theta'}(\omega) := \frac{ \exp \left\{ \sum_{i \in S} s_i \log \left(  1 + \frac{L}{\lambda M_i}\phi_i(\theta'; x) \right) \right\} }{ \exp \left\{ \sum_{i \in S} s_i \log \left(  1 + \frac{L}{\lambda M_i}\phi_i(\theta; x) \right) \right\}},
			\]
			with all the notations defined in Algorithm \ref{alg:PoissonMH}. The estimator relies solely on data points in \( S \). By selecting appropriate parameters, the expected size of \( S \) can be much smaller than $N$, thereby reducing the computational cost per iteration.
		\end{itemize}
	\end{example}
	
	\begin{example}[TunaMH]\label{eg:TunaMH}
		TunaMH (Algorithm \ref{alg:TunaMH}) aims to solve the same problem as PoissonMH under more practical assumptions. Similar to PoissonMH, the auxiliary variable $\omega = (s_1, s_2, \cdots, s_N)\in \{0,1,2,\cdots \}^N$. Meanwhile:
		\begin{itemize}
			\item The $\Pttp := \bigotimes_{i=1}^N \Poi\left(\frac{\lambda c_i}{C} + \phi_i(\theta, \theta';x)\right)$.
			\item The estimator 
			\[
			R_{\theta\rightarrow \theta'}(\omega) := \frac{ \exp \left\{ \sum_{i \in S} s_i \log \left(  1 + \frac{C}{\lambda c_i}\phi_i(\theta',\theta;x) \right) \right\} }{ \exp \left\{ \sum_{i \in S} s_i \log \left(  1 + \frac{C}{\lambda c_i}\phi_i(\theta, \theta';x) \right) \right\}}.
			\]
			with all the notations defined in Algorithm \ref{alg:TunaMH}.
		\end{itemize}
	\end{example}

\subsubsection{Verification of Example \ref{eg:exchange}}
To  check equation \eqref{eqn:detailed-balance-one-auxiliary}, recall $\pi(x) : = \int \pi(\theta) p_\theta(x)\diff \theta$ is the marginal distribution of $x$:
        \begin{equation*}
		\begin{aligned}
			R_{\theta \rightarrow \theta'}(\omega)\cdot \pi(\theta\mid x) \cdot \Pttp(\omega) & = 
			\frac{\pi(\theta')f_{\theta'}(x)f_\theta(\omega)}{\pi(\theta)f_\theta(x)f_{\theta'}(\omega)} \cdot \frac{\pi(\theta)f_\theta(x)}{Z(\theta)\pi(x)} \cdot \frac{f_{\theta'}(\omega)}{Z(\theta')}\\
			& = 
			\frac{\pi(\theta')f_{\theta'}(x)f_\theta(\omega)}{\cancel{\pi(\theta)f_\theta(x)}\cancel{f_{\theta'}(\omega)}} \cdot \frac{\cancel{\pi(\theta)f_\theta(x)}}{Z(\theta)\pi(x)} \cdot \frac{\cancel{f_{\theta'}(\omega)}}{Z(\theta')}\\
			& = \frac{\pi(\theta')f_{\theta'}(x)}{\pi(x) Z(\theta')}\cdot \frac{f_\theta(\omega)}{Z(\theta)}\\
			& = \pi(\theta'\mid x) p_\theta(\omega)\\
			& =  \pi(\theta'\mid x) \Ptpt(\omega).\\
		\end{aligned}
        \end{equation*}
		All the equalities follow from either the Bayes formula, or the problem definition. 
\subsubsection{Verification of Example \ref{eg:PoissonMH}}\label{subsec: verify PoissonMH}
To  check equation \eqref{eqn:detailed-balance-one-auxiliary}, recall $\pi(\theta\mid x)  = \exp(\sum_{i=1}^N\phi_i(\theta; x))/Z(x)$ :
            \begin{equation*}
            \begin{aligned}
			& R_{\theta\rightarrow \theta'}(\omega)\cdot \pi(\theta\mid x)\cdot \Pttp(\omega) = R_{\theta\rightarrow \theta'}(\omega) \frac{\exp(\sum_{i=1}^N\phi_i(\theta; x))}{Z(x)} \prod_{i=1}^N \Pttp (s_i) \\
			& = \frac{1}{Z(x)}\prod_{i=1}^N\left(\frac{\lambda M_i L^{-1} + \phi_i(\theta'; x)}{\lambda M_i L^{-1} + \phi_i(\theta; x)}\right)^{s_i} \cdot e^{\phi_i(\theta; x)} \cdot  e^{-\lambda M_i L^{-1}} e^{-\phi_i(\theta; x)}\frac{\left(\lambda M_i L^{-1} + \phi_i(\theta; x)\right)^{s_i}}{s_i!} \\
			& = \frac{1}{Z(x)}\prod_{i=1}^N\left(\frac{\lambda M_i L^{-1} + \phi_i(\theta'; x)}{\cancel{\lambda M_i L^{-1} + \phi_i(\theta; x)}}\right)^{s_i} \cdot \cancel{e^{\phi_i(\theta; x)}} \cdot  e^{-\lambda M_i L^{-1}} \cancel{e^{-\phi_i(\theta; x)}}\frac{\left(\cancel{\lambda M_i L^{-1} + \phi_i(\theta; x)}\right)^{s_i}}{s_i!} \\
			& = \frac{1}{Z(x)} \prod_{i=1}^N \frac{\left(\lambda M_i L^{-1} + \phi_i(\theta'; x)\right)^{s_i} }{s_i !} (e^{-\lambda M_i L^{-1}} e^{-\phi_i(\theta'; x)}) e^{\phi_i(\theta'; x)}\\
			& = \pi(\theta'\mid x) \prod_{i=1}^N \Ptpt(s_i)\\
			& = \pi(\theta'\mid x)\Ptpt(\omega).
		\end{aligned}
            \end{equation*}
\subsubsection{Verification of Example \ref{eg:TunaMH}}\label{subsec: verify tunamh}

	To  verify equation \eqref{eqn:detailed-balance-one-auxiliary}, first recall $\pi(\theta\mid x) = \exp\left( - \sum_{i=1}^N U_i(\theta;x)\right)/Z(x)$, and the definition $$\phi_i(\theta,\theta';x) = \frac{U_i(\theta';x)-U_i(\theta;x)}{2} + \frac{c_i}{2}M(\theta,\theta').$$ It is useful to observe:
 \[
 U_i(\theta;x) + \phi(\theta,\theta';x) = \frac{U_i(\theta';x) + U_i(\theta;x)}{2} + \frac{c_i M(\theta,\theta')}{2} = U_i(\theta';x) + \phi_i(\theta',\theta;x).
 \]
 
 We have :
		\begin{align*}
			& R_{\theta\rightarrow \theta'}(\omega)\cdot \pi(\theta\mid x)\cdot \Pttp(\omega) = R_{\theta\rightarrow \theta'}(\omega) \frac{\exp\left( - \sum_{i=1}^N U_i(\theta;x)\right)}{Z(x)} \prod_{i=1}^N \Pttp (s_i) \\
			& = \frac{1}{Z(x)}\prod_{i=1}^N\left(\frac{\lambda c_i C^{-1} + \phi_i(\theta',\theta; x)}{\lambda c_i C^{-1} + \phi_i(\theta,\theta'; x)}\right)^{s_i} \cdot e^{-\phi_i(\theta,\theta'; x)} \cdot  e^{-\lambda c_i C^{-1}} e^{-U_i(\theta; x)}\frac{\left(\lambda c_i C^{-1} + \phi_i(\theta,\theta'; x)\right)^{s_i}}{s_i!} \\
			& = \frac{1}{Z(x)}\prod_{i=1}^N\left(\frac{\lambda c_i C^{-1} + \phi_i(\theta',\theta; x)}{\cancel{\lambda c_i C^{-1} + \phi_i(\theta,\theta'; x)}}\right)^{s_i} \cdot e^{-\phi_i(\theta',\theta; x)} \cdot  e^{-\lambda c_i C^{-1}} e^{-U_i(\theta'; x)}\frac{\left(\cancel{\lambda c_i C^{-1} + \phi_i(\theta,\theta'; x)}\right)^{s_i}}{s_i!} \\
			& = \frac{1}{Z(x)} \prod_{i=1}^N \frac{\left(\lambda c_i C^{-1} + \phi_i(\theta',\theta; x)\right)^{s_i} }{s_i !} (e^{-\lambda c_i C^{-1}} e^{-\phi_i(\theta',\theta; x)}) e^{-U_i(\theta'; x)}\\
			& = \pi(\theta'\mid x) \prod_{i=1}^N \Ptpt(s_i)\\
			& = \pi(\theta'\mid x)\Ptpt(\omega).
		\end{align*}

\subsection{Proof of Proposition \ref{prop:auxiliary-validity}}
\begin{proof}[Proof of Proposition \ref{prop:auxiliary-validity}]\label{proof:auxiliary-validity}
		For each $\theta \in \Theta$, the probability distribution $\bP_{\mathsf{aux}}(\theta, \cdot)$ can be decomposed as a point-mass at $\theta$ and a  density on $\Theta$, namely:
		\[
		\bP_{\mathsf{aux}}(\theta, \diff \theta') = R(\theta) \delta_\theta(\diff \theta')   + p_{\mathsf{aux}}(\theta, \theta') \lambda(\diff \theta'),
		\]
		where $\lambda$ is the base measure on $\Theta$ defined earlier. Our goal is to show $\Pi(\theta) p_{\mathsf{aux}}(\theta, \theta') = \Pi(\theta') p_{\mathsf{aux}}(\theta', \theta) $ for every pair $(\theta, \theta')$ satisfying $\theta \neq \theta'$. Reversibility will directly result from this equality. The term $p_{\mathsf{aux}}(\theta, \theta')$ describes the density of successfully moving from $\theta$ to $\theta'$, including two steps: proposing $\theta'$ and then accepting the transition. Therefore we have the following expression of $p_{\mathsf{aux}}(\theta, \theta')$:
		\begin{align*}
			&  p_{\mathsf{aux}}(\theta, \theta') = \\
			&\int_{\Omega_1}\bP_\theta(\omega_1) q_{\omega_1}(\theta, \theta') 
			\int_{\Omega_2}\left(\min\left\{1, \frac{\Pi(\theta') \bP_{\theta',\theta}(\omega_1,\omega_2)}{\Pi(\theta) \bP_{\theta,\theta'}(\omega_1,\omega_2)} \cdot \frac{q_{w_1}(\theta',\theta)}{q_{w_1}(\theta,\theta')}\right\} \bP_{\theta,\theta'}(\omega_2\mid \omega_1) \lambda_2(\diff \omega_2)\right) \lambda_1(\diff \omega_1).\\
		\end{align*}
		
		Therefore we have
		\begin{align*}
			\Pi(\theta) & p_{\mathsf{aux}}(\theta, \theta')= \\
			& \int
			\min\bigg\{ \Pi(\theta) \bP_{\theta,\theta'}(\omega_1,\omega_2)  q_{\omega_1}(\theta, \theta'), \Pi(\theta') \bP_{\theta',\theta}(\omega_1,\omega_2)q_{w_1}(\theta',\theta)\bigg\} \lambda_1\otimes\lambda_2(\diff\omega_1 \diff\omega_2),\\
		\end{align*}
		which is symmetric over $\theta$ and $\theta'$. This shows $\Pi(\theta)  p_{\mathsf{aux}}(\theta, \theta') = \Pi(\theta')  p_{\mathsf{aux}}(\theta', \theta).$
	\end{proof}

\subsection{Algorithmic Description of the Idealized Chain}\label{subsec:algorithmic description of idealized chain}
   
	\begin{algorithm}[H]
		\caption{idealized MH}
		\label{alg:idealMH}
		\begin{algorithmic}[1]  
			
			\STATE \textbf{Initialize:} Initial state $\theta_0$;  number of iterations $T$; target distribution $\pi(\theta\mid x)$
			\FOR{$t=0$ to $T-1$}
			\STATE propose $\theta' \sim \Qideal(\theta_t,\cdot)$
			\STATE compute the acceptance ratio\\
			\begin{equation*}
				r \xleftarrow{} \frac{\pi(\theta'\mid x)}{\pi(\theta_t\mid x)} \cdot \frac{\qideal(\theta',\theta_t)}{\qideal(\theta_t,\theta')}
			\end{equation*}
			
			\STATE with probability $\min\{1, r\}$, set $\theta_{t+1} \xleftarrow{} \theta'$; otherwise, $\theta_{t+1} \xleftarrow{} \theta_t$
			\ENDFOR
		\end{algorithmic}
	\end{algorithm}

\subsection{Proof of Lemma \ref{lem:peskun order}}\label{subsec: proof Peskun order}
\begin{proof}[Proof of Lemma \ref{lem:peskun order}]\label{proof:peskun order}
    We write $\paux(\theta,\theta')$ in its integral form, and upper bound it as follows:
    \begin{align*}
        \paux(\theta,\theta') &= \int_{\Omega_1\times \Omega_2} \bP_\theta(\omega_1) q_{\omega_1}(\theta, \theta') \bP_{\theta,\theta'}(\omega_2\mid \omega_1) r(\theta,\theta';\omega_1,\omega_2)\lambda_1\otimes\lambda_2(\diff\omega_1 \diff\omega_2)\\
        & = \int_{\Omega_1} \bP_\theta(\omega_1) q_{\omega_1}(\theta, \theta') \left(\bE_{\omega_2\mid \omega_1,\theta,\theta'}\left[ \min\left\{\frac{\pi(\theta'\mid x) \bP_{\theta',\theta}(\omega_1,\omega_2)}{\pi(\theta\mid x) \bP_{\theta,\theta'}(\omega_1,\omega_2)} \cdot \frac{q_{w_1}(\theta',\theta)}{q_{w_1}(\theta,\theta')}, 1\right\}\right]\right)\lambda_1(\diff\omega_1)\\
         & \leq\int_{\Omega_1} \bP_\theta(\omega_1) q_{\omega_1}(\theta, \theta') \left( \min\left\{\frac{\pi(\theta'\mid x) q_{w_1}(\theta',\theta)}{\pi(\theta\mid x) q_{w_1}(\theta,\theta')}  \bE_{\omega_2\mid \omega_1,\theta,\theta'}\left[\frac{\bP_{\theta',\theta}(\omega_1,\omega_2)}{\bP_{\theta,\theta'}(\omega_1,\omega_2)}\right], 1\right\}\right)\lambda_1(\diff\omega_1)\\
         & = \int_{\Omega_1} \bP_\theta(\omega_1) q_{\omega_1}(\theta, \theta') \left( \min\left\{\frac{\pi(\theta'\mid x) q_{w_1}(\theta',\theta)}{\pi(\theta\mid x) q_{w_1}(\theta,\theta')}  \frac{\bP_{\theta'}(\omega_1)}{\bP_\theta(\omega_1)}, 1\right\}\right) \lambda_1(\diff\omega_1)\\
         & = \pmhg(\theta,\theta')\\
\end{align*}
The first inequality follows from \(\bE[\min\{ah(X), c\}] \leq \min\{a\bE[h(X)], c\}\), treating \(\omega_2\) as a random variable, and using the fact that
\begin{align*}
    \bE_{\omega_2\mid \omega_1,\theta,\theta'}\left[\frac{\bP_{\theta',\theta}(\omega_1,\omega_2)}{\bP_{\theta,\theta'}(\omega_1,\omega_2)}\right] &= \int_{\Omega_2}\frac{\bP_{\theta'}(\omega_1)\bP_{\theta',\theta}(\omega_2\mid \omega_1)}{\bP_\theta(\omega_1)\bP_{\theta,\theta'}(\omega_2\mid \omega_1)}\bP_{\theta,\theta'}(\omega_2\mid \omega_1)  \lambda_2(\diff\omega_2)\\
    & = \frac{\bP_{\theta'}(\omega_1)}{\bP_\theta(\omega_1)}.
\end{align*}
The last equality follows from the calculations in Case \ref{case:omega1 = omega2}.

Furthermore 
        \begin{align*}
        \pmhg(\theta,\theta') 
         & \leq \min\left\{\int \frac{\pi(\theta'\mid x)} {\pi(\theta\mid x)}q_{w_1}(\theta',\theta)\bP_{\theta'}(\omega_1) \lambda_1(\diff\omega_1) , \int q_{\omega_1}(\theta,\theta')\bP_\theta(\omega_1) \lambda_1(\diff\omega_1) \right\}\\
         & = \min\left\{\frac{\pi(\theta'\mid x)} {\pi(\theta\mid x)}\qideal(\theta',\theta), \qideal(\theta,\theta')\right\}\\
         & = \qideal(\theta,\theta')\min\left\{\frac{\pi(\theta'\mid x)} {\pi(\theta\mid x)} \frac{\qideal(\theta',\theta)}{\qideal(\theta,\theta')}, 1 \right\}\\
         & = \pideal(\theta,\theta').
    \end{align*}
   Here, the inequality uses the fact $\int \min\{f,g\} \leq \min\{\int f,  \int g\}$.
\end{proof}

\subsection{Proof of Theorem \ref{thm:transition density}}
\begin{proof}[Proof of Theorem \ref{thm:transition density}]\label{proof:transition density}
We can write down the transition density as:
    \begin{align*}
        \paux(\theta,\theta') &= \int_{\Omega_1\times \Omega_2} \bP_\theta(\omega_1) q_{\omega_1}(\theta, \theta') \bP_{\theta,\theta'}(\omega_2\mid \omega_1) r(\theta,\theta';\omega_1,\omega_2)\lambda_1\otimes\lambda_2(\diff\omega_1 \diff\omega_2)\\
        & = \int_{\Omega_1} \bP_\theta(\omega_1) q_{\omega_1}(\theta, \theta') \left(\bE_{\omega_2\mid \omega_1,\theta,\theta'}\left[ \min\left\{\frac{\pi(\theta'\mid x) \bP_{\theta',\theta}(\omega_1,\omega_2)}{\pi(\theta\mid x) \bP_{\theta,\theta'}(\omega_1,\omega_2)} \cdot \frac{q_{w_1}(\theta',\theta)}{q_{w_1}(\theta,\theta')}, 1\right\}\right]\right)\lambda_1(\diff\omega_1)\\
        & \geq  \int_{\Omega_1} \bP_\theta(\omega_1) q_{\omega_1}(\theta, \theta') \left( \min\left\{\frac{\pi(\theta'\mid x) q_{w_1}(\theta',\theta)}{\pi(\theta\mid x) q_{w_1}(\theta,\theta')}  \frac{\bP_{\theta'}(\omega_1)}{\bP_\theta(\omega_1)}, 1\right\}\right)  \\
        & \qquad \times\left( \int_{\Omega_2} \min\left\{\bP_{\theta,\theta'}(\omega_2 \mid \omega_1), \bP_{\theta',\theta}(\omega_2 \mid \omega_1) \right\} \lambda_2(\diff\omega_2)\right)  \lambda_1(\diff\omega_1)\\
        & = \int_{\Omega_1} \bP_\theta(\omega_1) q_{\omega_1}(\theta, \theta') \left( \min\left\{\frac{\pi(\theta'\mid x) q_{w_1}(\theta',\theta)}{\pi(\theta\mid x) q_{w_1}(\theta,\theta')}  \frac{\bP_{\theta'}(\omega_1)}{\bP_\theta(\omega_1)}, 1\right\}\right) \\
        & \qquad \times\left(1 - d_\TV(\theta,\theta',\omega_1) \right)\lambda_1(\diff\omega_1)\\
        & \geq \left(1- d_\TV(\theta,\theta')\right) \int_{\Omega_1} \bP_\theta(\omega_1) q_{\omega_1}(\theta, \theta') \left( \min\left\{\frac{\pi(\theta'\mid x) q_{w_1}(\theta',\theta)}{\pi(\theta\mid x) q_{w_1}(\theta,\theta')}  \frac{\bP_{\theta'}(\omega_1)}{\bP_\theta(\omega_1)}, 1\right\}\right) \\
        & = \left(1- d_\TV(\theta,\theta')\right) \pmhg(\theta,\theta').\\
    \end{align*}
    The first inequality follows from the fact $\min\{ab,cd\} \geq \min\{a,c\}\min\{b,d\}$ for $a,b,c,d \geq 0$. For the second inequality in the statement of Theorem \ref{thm:transition density}, first recall the improved version of Bretagnolle–Huber inequality in Section 8.3 of \cite{gerchinovitz2020fano}, which states:
 \begin{align*}
  1 -  d_\TV(P,Q) \geq e^{-1/e}\exp\{-d_\KL(P||Q)\}.
 \end{align*}
The original Bretagnolle–Huber inequality \cite{bretagnolle1979estimation, canonne2022short} is a slightly weaker version where the constant \(e^{-1/e}\) is replaced by \(1/2\).   Replacing $P, Q$ by $\bP_{\theta,\theta'}(\omega_2 \mid \omega_1), \bP_{\theta',\theta}(\omega_2 \mid \omega_1)$, and taking the supremum (for $d_\TV$) over $\omega_1$, we get:
\begin{align*}
        \paux(\theta,\theta') &\geq  e^{-1/e} \exp\{-\tilde d_\KL(\theta,\theta')\}\} \pmhg(\theta,\theta').
    \end{align*}
Meanwhile, reversing the order between $P,Q$ in the Bretagnolle–Huber inequality, we have
 \begin{align*}
   1 -  d_\TV(Q,P) \geq e^{-1/e}\exp\{-d_\KL(Q||P)\}.
 \end{align*}
 This implies
\begin{align*}
        \paux(\theta,\theta') &\geq  e^{-1/e} \exp\{-\tilde d_\KL(\theta',\theta)\} \pmhg(\theta,\theta').
    \end{align*}
 
Finally, multiplying the improved  Bretagnolle–Huber inequality for $(P,Q)$ and $(Q,P)$ together and then taking the square root, we obtain:
 \begin{align*}
   1 -  d_\TV(P,Q) \geq e^{-1/e}\exp\{-0.5(d_\KL(P||Q) +d_\KL(Q||P)) \},
 \end{align*}
which further shows
 \begin{align*}
     1- d_\TV(\theta,\theta') \geq e^{-1/e} \exp\{-d_\KL(\theta,\theta')\}.
 \end{align*}
This concludes the proof of Theorem \ref{thm:transition density}.
\end{proof}

\subsection{Spectral Gap}\label{subsec: spec gap}
We begin by a concise review of definitions. Consider $P$ as the transition kernel for a  Markov chain with stationary distribution $\Pi$. The linear space $L^2(\Pi)$ includes all functions $f$ satisfying $\int f^2(x) \Pi(\diff x) < \infty$. Within this space, $L^2_0(\Pi)$ is the subspace of functions $f \in L^2(\Pi)$ satisfying $\int f(x)\Pi(\diff x) = 0$. Both $L^2(\Pi)$ and $L^2_0(\Pi)$ are Hilbert spaces with the inner product $\langle f, g\rangle_\Pi := \int (fg) \diff \Pi = \bE_\Pi[fg]$. The norm of $f \in L^2(\Pi)$ is $\lVert f \rVert_\Pi := \sqrt{\langle f, f\rangle_\Pi}$. The kernel $P$ operates as a linear operator on $L^2_0(\Pi)$, acting on $f$ by $(Pf)(x) := \int P(x,\diff y) f(y)$.   Its operator norm on $L^2_0(\Pi)$ is defined as  $\lVert P \rVert_{L^2_0(\Pi)} := \sup_{f\in L^2_0(\Pi), \lVert f \rVert_\Pi = 1} \lVert Pf \rVert_\Pi$. It is known that $\lVert P \rVert_{L^2_0(\Pi)} \leq 1$. The spectral gap of $P$, denoted as $\Gap(P)$, is $1 - \lVert P \rVert_{L^2_0(\Pi)}$.

The next proposition studies the $\Gap(\Paux)$.
 
\begin{theorem}\label{thm:spectral gap comparison}
Let $\Pi = \pi(\theta\mid x)$ be the common stationary distribution of $\Pmhg$ and $\Paux$. We have 

$$\Gap(\Paux) \geq (1-\sup_{\theta,\theta'}d_\TV(\theta,\theta'))\Gap(\Pmhg).$$

Similarly, we have 

$$\Gap(\Paux) \geq \left(e^{-1/e} \exp\{-\sup_{\theta\neq\theta'}\min\{d_\KL(\theta,\theta'), \tilde d_\KL(\theta,\theta'), \tilde d_\KL(\theta',\theta)\}\} \right)\Gap(\Pmhg).$$
\end{theorem}

For an initial distribution \(\pi_0\) that is \(\beta\)-warm with respect to the stationary distribution \(\Pi\) (i.e., \(\pi_0(A)/\Pi(A) \leq \beta\) for any \(A\)), a Markov chain \(P\) starting from a \(\beta\)-warm distribution reaches within \(\epsilon\) of the stationary distribution in both total-variation and \(L^2\) distance after \(\calO(\log(\beta\epsilon^{-1})/\Gap(P))\) iterations, as shown in Theorem 2.1 of \citet{roberts1997geometric}. Theorem \ref{thm:spectral gap comparison} states that \(\Paux\) is at most \(1/(1-\sup_{\theta,\theta'}d_\TV(\theta,\theta'))\) times slower than \(\Pmhg\) in terms of iteration count. Conversely, if the per-iteration cost of \(\Paux\) is less than \(1/(1-\sup_{\theta,\theta'}d_\TV(\theta,\theta'))\) times the cost of \(\Pmhg\), then \(\Paux\) is provably more efficient than \(\Pmhg\).

\textbf{Bounding $\Gap(\Pmhg)$:} Given the relationship between \(\Gap(\Paux)\) and \(\Gap(\Pmhg)\) in Theorem \ref{thm:spectral gap comparison}, the next step is to bound \(\Gap(\Pmhg)\). We now discuss recent findings from \cite{qin2025spectral}. In each iteration, the chain $\Pmhg$ first draws \(\omega_1 \sim \bP_\theta(\cdot)\), followed by  a Metropolis–Hastings algorithm to sample from \(\pi(\theta \mid \omega_1, x) \propto \pi(\theta \mid x) \bP_\theta(\omega_1)\) using the proposal \(q_{\omega_1}(\theta, \cdot)\). We  use $\Pmhgomega$ to denote the transition kernel of the second step. We also use $\Pgib$ to denote the ``$\theta$-chain'' of the standard Gibbs sampler targeting $\pi(\theta \mid x) \bP_\theta(\omega_1)$. This means we first draw \(\omega_1 \sim \bP_\theta(\cdot)\), and then draw $\theta$ directly from \(\pi(\theta \mid \omega_1, x)\).
\begin{prop}[Corollary 14 and Theorem 15 of \cite{qin2025spectral}]
    We have:
    \begin{enumerate}
       \item  $\Gap(\Pmhg) \geq \Gap(\Pgib) \times \inf_{\omega_1} \Gap(\Pmhgomega)$
       \item Suppose there is a  function $\gamma_1: \Omega_1 \rightarrow  [0,1]$ satisfying $\lVert \Pmhgomega \rVert_{L^2_{0}(\pi(\cdot\mid \omega_1,x))} \leq \gamma_1(\omega_1)$ for every $\omega_1$. Meanwhile 
       $\int \gamma_1(\omega_1)^t \bP_\theta(\diff \omega_1) \leq \alpha_t $ for every $\theta$ and some even $t$. Then
       $
           \Gap(\Pmhg) \geq t^{-1}\left(\Gap(\Pgib) - \alpha_t\right)
       $
    \end{enumerate}
\end{prop}
The first result is known in \cite{andrieu2018uniform}. The second result generalizes the results of \cite{latuszynski2024convergence}. They have been applied in analyzing proximal samplers  and hybrid slice samplers. See Section 5 of \cite{qin2025spectral} for details.

It is useful to recall the definition of the Dirichlet form. The Dirichlet form of a function $f$ with respect to Markov transition kernel $P$ is 
$\calE(P,f) := \lVert (I-P)f, f\rVert_\Pi = 0.5\int\int (f(x) - f(y))^2 \Pi(\diff x) P(x, \diff y).$ When $P$ is reversible, then $\Gap(P) = \inf_{f\in L^2_0(\Pi),f \neq 0}\calE(P,f)/\lVert f \rVert_\Pi^2$. 

\begin{proof}[Proof of Theorem \ref{thm:spectral gap comparison}]
        It follows from Theorem \ref{thm:transition density} that for any $f$, the Dirichlet form satisfies
\begin{align*}
              \calE(\Paux, f) &\geq \int \int \left(1- d_\TV(\theta,\theta')\right) (f(\theta) - f(\theta'))^2\Pmhg(\theta,\diff\theta') \pi(\diff\theta\mid x)\\
              &\geq \left(1 - \sup_{\theta,\theta'}d_\TV(\theta,\theta')\right)\calE(\Pmhg, f).
        \end{align*} 
Dividing both sides by $\lVert f \rVert_\Pi^2$ and then taking the infimum over $f \in L^2_0(\pi(\theta\mid x))$, we have 
\[
\Gap(\Paux) \geq (1-\sup_{\theta,\theta'}d_\TV(\theta,\theta'))\Gap(\Pmhg).
\]

Similarly, we have 
\[
  \calE(\Paux, f) \geq   e^{-1/e} \exp\{-\sup_{\theta\neq\theta'}\min\{d_\KL(\theta,\theta'), \tilde d_\KL(\theta,\theta'), \tilde d_\KL(\theta',\theta)\}\} \calE(\Pmhg, f).
\]
Therefore $\Gap(\Paux) \geq \left(e^{-1/e} \exp\{-\sup_{\theta\neq\theta'}\min\{d_\KL(\theta,\theta'), \tilde d_\KL(\theta,\theta'), \tilde d_\KL(\theta',\theta)\}\} \right)\Gap(\Pmhg).$
\end{proof}

\subsection{Proof of Proposition \ref{prop:exchange}}
\begin{proof}[Proof of Proposition \ref{prop:exchange}]
Our first result is derived directly by recognizing that $\bP_{\theta,\theta'}$ within our general framework corresponds to $p_{\theta'}$ in the exchange algorithm. The second result follows from the observation:
\[
1- d_\TV(p_\theta, p_{\theta'}) = \bE_{p_{\theta'}}\left[\min\{1, \frac{p_\theta}{p_{\theta'}}\}\right] \geq \delta \bP_{p_{\theta'}}(A_{\theta,\theta'}(\delta)) = \epsilon\delta.
\]
\end{proof}

\subsection{Proof of Proposition \ref{prop:PoissonMH}} \label{subsec:proof of PoissonMH}

\begin{proof}[Proof of Proposition \ref{prop:PoissonMH}]
Let
\[
    \widetilde{\lambda}_i(\theta)
    :=
    \frac{\lambda M_i}{L}+\phi_i(\theta;x).
\]
In PoissonMH,
\[
    \bP_{\theta,\theta'}
    =
    \bigotimes_{i=1}^N \Poi\!\left(\widetilde{\lambda}_i(\theta)\right).
\]
Since PoissonMH does not use an auxiliary variable in the proposal step, we have
\(\Pmhg=\Pideal\).

We first derive the new comparison bound from the symmetrized KL part of
Theorem \ref{thm:spectral gap comparison}. For product Poisson laws,
\[
\begin{aligned}
& d_\KL(\bP_{\theta,\theta'}\|\bP_{\theta',\theta})
+
d_\KL(\bP_{\theta',\theta}\|\bP_{\theta,\theta'})
\\
&\quad =
\sum_{i=1}^N
\left[
\widetilde{\lambda}_i(\theta)
\log\frac{\widetilde{\lambda}_i(\theta)}
{\widetilde{\lambda}_i(\theta')}
+
\widetilde{\lambda}_i(\theta')
-
\widetilde{\lambda}_i(\theta)
\right]
\\
&\qquad\quad +
\sum_{i=1}^N
\left[
\widetilde{\lambda}_i(\theta')
\log\frac{\widetilde{\lambda}_i(\theta')}
{\widetilde{\lambda}_i(\theta)}
+
\widetilde{\lambda}_i(\theta)
-
\widetilde{\lambda}_i(\theta')
\right]
\\
&\quad =
\sum_{i=1}^N
\left(\widetilde{\lambda}_i(\theta)
-
\widetilde{\lambda}_i(\theta')\right)
\log
\frac{\widetilde{\lambda}_i(\theta)}
{\widetilde{\lambda}_i(\theta')}.
\end{aligned}
\]
For every \(i\),
\[
    \frac{\lambda M_i}{L}
    \leq
    \widetilde{\lambda}_i(\theta)
    \leq
    \frac{\lambda M_i}{L}+M_i
    =
    \frac{(\lambda+L)M_i}{L},
\]
and hence
\[
    \left|
    \log
    \frac{\widetilde{\lambda}_i(\theta)}
    {\widetilde{\lambda}_i(\theta')}
    \right|
    \leq
    \log\left(1+\frac{L}{\lambda}\right).
\]
Also,
\[
    \left|
    \widetilde{\lambda}_i(\theta)
    -
    \widetilde{\lambda}_i(\theta')
    \right|
    =
    |\phi_i(\theta;x)-\phi_i(\theta';x)|
    \leq M_i.
\]
Therefore,
\[
\begin{aligned}
& d_\KL(\bP_{\theta,\theta'}\|\bP_{\theta',\theta})
+
d_\KL(\bP_{\theta',\theta}\|\bP_{\theta,\theta'})
\\
&\quad \leq
\log\left(1+\frac{L}{\lambda}\right)
\sum_{i=1}^N M_i
=
L\log\left(1+\frac{L}{\lambda}\right).
\end{aligned}
\]
Thus the symmetrized KL quantity in Theorem
\ref{thm:spectral gap comparison} satisfies
\[
    d_\KL(\theta,\theta')
    \leq
    \frac{L}{2}\log\left(1+\frac{L}{\lambda}\right),
\]
uniformly over \(\theta,\theta'\). Applying Theorem
\ref{thm:spectral gap comparison} gives
\[
    \Gap(\Paux)
    \geq
    e^{-1/e}
    \left(1+\frac{L}{\lambda}\right)^{-L/2}
    \Gap(\Pideal).
\]
Using \(\log(1+x)\leq x\), we further obtain
\[
    \Gap(\Paux)
    \geq
    e^{-1/e}
    \exp\left\{-\frac{L^2}{2\lambda}\right\}
    \Gap(\Pideal).
\]

On the other hand, the standard PoissonMH transition-density comparison gives
\[
    \Gap(\Paux)
    \geq
    \frac{1}{2}
    \exp\left\{-\frac{L^2}{\lambda+L}\right\}
    \Gap(\Pideal).
\]
Combining the two bounds yields
\[
    \Gap(\Paux)
    \geq
    \max\left\{
    \frac{1}{2}\exp\left(-\frac{L^2}{\lambda+L}\right),
    \,
    e^{-1/e}\exp\left(-\frac{L^2}{2\lambda}\right)
    \right\}
    \Gap(\Pideal).
\]
\end{proof}

\subsection{Proof of Proposition \ref{prop:TunaMH}}\label{subsec:proof of TunaMH}
\begin{proof}[Proof of Proposition \ref{prop:TunaMH}]
    Let $\tilde \phi_i(\theta,\theta') := \lambda c_i/C + \phi_i(\theta, \theta';x)$. We have $\bP_{\theta,\theta'} = \otimes_{i=1}^N \Poi(\tilde \phi_i(\theta,\theta'))$ for TunaMH.   We can calculate the symmetrized KL divergence
      \begin{align*}
      &  d_\KL(\bP_{\theta,\theta'}|| \bP_{\theta',\theta}) +  d_\KL(\bP_{\theta',\theta}|| \bP_{\theta,\theta'})  \\
      & = \sum_{i=1}^N\left(\tilde{\phi}_i(\theta,\theta')\log\frac{\tilde{\phi}_i(\theta,\theta')}{\tilde{\phi}_i(\theta',\theta)} + \cancel{\tilde{\phi}_i(\theta',\theta)- \tilde{\phi}_i(\theta,\theta')}\right) + \sum_{i=1}^N\left(\tilde{\phi}_i(\theta',\theta)\log\frac{\tilde{\phi}_i(\theta',\theta)}{\tilde{\phi}_i(\theta,\theta')} + \cancel{\tilde{\phi}_i(\theta,\theta') - \tilde{\phi}_i(\theta',\theta)}\right)\\
       & = \sum_{i=1}^N\left(\log\left(\tilde{\phi}_i(\theta,\theta') /\tilde{\phi}_i(\theta',\theta)\right)\right) \left(\tilde{\phi}_i(\theta,\theta') - \tilde{\phi}_i(\theta',\theta)\right)\\
       &\leq \sum_{i=1}^N c_i M(\theta,\theta')\log\left(\frac{\lambda c_i/C + c_iM}{\lambda c_i/C}\right) \\
       & \leq \sum_{i=1}^N c_i M(\theta,\theta')\frac{CM(\theta,\theta')}{\lambda}  = \frac{C^2M^2(\theta,\theta')}{\chi C^2M^2(\theta,\theta')} = \frac{1}{\chi}.
    \end{align*}
Therefore, applying Theorem \ref{thm:spectral gap comparison}, we get: 
$$\Gap(\Paux) \geq e^{-1/e} \exp\{-\frac{1}{2\chi}\}\Gap(\Pideal).$$
\end{proof}

\subsection{Proof of Auxiliary Results}\label{subsec:proof of auxiliary results}
Here we prove several auxiliary results.Even though many of these results are well-known, we include proofs here for clarity and completeness.
\begin{prop}\label{prop:kl poisson}
For univariate Poisson random variables with parameter $\lambda_1,\lambda_2$, their KL divergence equals $d_\KL(\Poi(\lambda_1) || \Poi(\lambda_2)) = \lambda_1\log(\lambda_1/\lambda_2) + \lambda_2 - \lambda_1$.  
\end{prop}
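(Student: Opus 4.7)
The plan is to compute the KL divergence directly from its definition using the probability mass functions of the two Poisson distributions. Recall that for $X \sim \Poi(\lambda)$ the pmf is $p(k) = e^{-\lambda}\lambda^k/k!$ on $k \in \{0,1,2,\ldots\}$, and $\bE[X] = \lambda$.

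First I would write the log-ratio of the two densities at an integer point $k$:
\begin{equation*}
\log\frac{e^{-\lambda_1}\lambda_1^k/k!}{e^{-\lambda_2}\lambda_2^k/k!} = (\lambda_2 - \lambda_1) + k\log(\lambda_1/\lambda_2).
\end{equation*}
The factorials and the $k!$ terms cancel immediately, which is the whole reason this computation is clean.

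Then I would plug this into $d_\KL(\Poi(\lambda_1)\|\Poi(\lambda_2)) = \bE_{X \sim \Poi(\lambda_1)}[\log(p_1(X)/p_2(X))]$. Linearity of expectation gives
\begin{equation*}
d_\KL(\Poi(\lambda_1)\|\Poi(\lambda_2)) = (\lambda_2 - \lambda_1) + \log(\lambda_1/\lambda_2)\cdot \bE_{X \sim \Poi(\lambda_1)}[X] = (\lambda_2 - \lambda_1) + \lambda_1\log(\lambda_1/\lambda_2),
\end{equation*}
which rearranges to the claimed identity. There is no real obstacle here: the only thing to watch is the convention $0\log 0 := 0$ in the $k=0$ term (no issue since $P(X=0)>0$ when $\lambda_1>0$) and the implicit assumption $\lambda_1, \lambda_2 > 0$ so that the log-ratio is well-defined; if $\lambda_1 = 0$, both sides are $\lambda_2$ by the conventions on KL divergence, and if $\lambda_2 = 0$ while $\lambda_1 > 0$ both sides are $+\infty$. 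These edge cases can be dispatched in a single sentence.
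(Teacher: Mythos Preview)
Your proposal is correct and follows essentially the same approach as the paper: compute the log-ratio of the two Poisson pmfs to get $(\lambda_2-\lambda_1)+k\log(\lambda_1/\lambda_2)$, then take the expectation under $\Poi(\lambda_1)$ using $\bE[X]=\lambda_1$. The paper's proof is identical in structure, though it does not discuss the edge cases $\lambda_1=0$ or $\lambda_2=0$ that you mention.
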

\begin{proof}[Proof of Proposition \ref{prop:kl poisson}]
   Let 
   $$P_{\lambda}(n) := \exp\{-\lambda\}\frac{\lambda^n}{n!}$$ 
   be the probability mass function of a $\Poi(\lambda)$ random variable evaluating at $n \in \bN$. By definition, we have:
   \[
   d_\KL(\Poi(\lambda_1) || \Poi(\lambda_2)) = \bE_{X\sim \Poi(\lambda_1)}\left[\log\left(\frac{P_{\lambda_1}(X)}{P_{\lambda_2}(X)}\right)\right].
   \]
   Therefore,
   \begin{align*}
       d_\KL(\Poi(\lambda_1) || \Poi(\lambda_2))  &= \bE_{X\sim \Poi(\lambda_1)}\left[-\lambda_1 + \lambda_2 +  X\log\left(\frac{\lambda_1}{\lambda_2}\right)\right]\\
       & = \lambda_2 - \lambda_1 + \log\left(\frac{\lambda_1}{\lambda_2}\right) \bE_{X\sim \Poi(\lambda_1)}[X]\\
       & = \lambda_1\log\left(\frac{\lambda_1}{\lambda_2}\right) + \lambda_2 - \lambda_1.
   \end{align*}
\end{proof}

\begin{prop}\label{prop:maximize KL}
    For any $\lambda_1,\lambda_2$ satisfying $0 \leq m \leq \lambda_1 \leq \lambda_2 \leq M$, we have 
    $$d_\KL(\Poi(\lambda_1) ||\Poi(\lambda_2)) \leq d_\KL(\Poi(m) ||\Poi(M)).$$
\end{prop}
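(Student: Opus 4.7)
The plan is to apply the closed-form expression from Proposition \ref{prop:kl poisson} and then verify monotonicity in each argument via one-variable calculus. Writing $f(a,b) := a\log(a/b) + b - a$ for $0 < a \leq b$, the statement reduces to showing that $f$ is non-increasing in its first argument and non-decreasing in its second on the region $\{(a,b) : m \leq a \leq b \leq M\}$, since then
\[
f(\lambda_1, \lambda_2) \leq f(m, \lambda_2) \leq f(m, M).
\]

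First I would compute $\partial_a f(a,b) = \log(a/b) + 1 - 1 = \log(a/b)$, which is $\leq 0$ whenever $a \leq b$; hence $f$ is non-increasing in $a$ on the feasible region. Next I would compute $\partial_b f(a,b) = 1 - a/b$, which is $\geq 0$ whenever $a \leq b$; hence $f$ is non-decreasing in $b$ on the feasible region. Chaining the two inequalities above gives the claim.

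There is no real obstacle here once Proposition \ref{prop:kl poisson} is in hand: the proof is a two-line monotonicity argument. The only mild subtlety is the degenerate case $m = 0$, where $f(0, M) = M$ is recovered by the convention $0 \log 0 = 0$, and the partial-derivative argument still applies on the interior with continuity at the boundary; a brief remark to this effect suffices.
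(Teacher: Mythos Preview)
Your proposal is correct and follows essentially the same approach as the paper: both use the closed form from Proposition~\ref{prop:kl poisson} and then check the signs of the two partial derivatives to obtain monotonicity in each argument. Your chaining $f(\lambda_1,\lambda_2)\le f(m,\lambda_2)\le f(m,M)$ is in fact slightly cleaner than the paper's presentation, and your remark on the $m=0$ boundary case is a small addition the paper omits.
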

\begin{proof}[Proof of Proposition \ref{prop:maximize KL}]
    The proposition is equivalent to proving that the function \( g(\lambda_1, \lambda_2) := \lambda_1 \log(\lambda_1/\lambda_2) + \lambda_2 - \lambda_1 \), constrained within the region \( m \leq \lambda_1 \leq \lambda_2 \leq M \), is maximized at \( \lambda_1 = m \) and \( \lambda_2 = M \).
    We first fix $\lambda_1,$ and take derivative with respect to $\lambda_2$:
    \begin{align*}
        \partial_{\lambda_2}g(\lambda_1, \lambda_2) = -\frac{\lambda_1}{\lambda_2} + 1.
    \end{align*}
    The derivative is non-negative as $\lambda_2 \geq \lambda_1$. Therefore, $g(\lambda_1,\lambda_2) \leq g(\lambda_1,M)$. Then we arbitrarily fix $\lambda_2$, and  take derivative with respect to $\lambda_1$:
    \begin{align*}
        \partial_{\lambda_1}g(\lambda_1, \lambda_2) = \log(\lambda_1) + 1 - \log(\lambda_2) - 1 = \log(\lambda_1) - \log(\lambda_2) \leq 0.
    \end{align*}
    Therefore, $g(\lambda_1,\lambda_2) \leq g(m,\lambda_2)$. Putting these two inequalities together, we know:
    \begin{align*}
        d_\KL(\Poi(\lambda_1) ||\Poi(\lambda_2)) = g(\lambda_1, \lambda_2) \leq g(m,M) = d_\KL(\Poi(m) ||\Poi(M)).
    \end{align*}
\end{proof}

\begin{prop}\label{prop:tensorization KL, tv}
    Let $P_i, Q_i$ be two probability measures on the same measurable space $(\Omega_i,\calF_i)$ for $i \in \{1,2,\ldots, M\}$. Then we have 
 $$d_\KL( \otimes_{i=1}^M P_i ||\otimes_i Q_i) = \sum_{i=1}^M d_\KL(P_i || Q_i)$$ 
 and 
 $$1 - d_\TV( \otimes_{i=1}^M P_i , \otimes_i Q_i) \geq  \prod_{i=1}^M (1 - d_\TV( P_i , Q_i)).$$
\end{prop}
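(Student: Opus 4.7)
The plan is to prove the two parts separately, starting with the KL tensorization (which is a direct computation) and then handling the TV inequality (which is most naturally done via coupling).

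For the KL identity, I would first reduce to the absolutely continuous case, since otherwise both sides equal $+\infty$ (if any $P_i \not\ll Q_i$, then $\otimes_i P_i \not\ll \otimes_i Q_i$, and the corresponding $d_\KL(P_i \| Q_i) = +\infty$). Assuming $P_i \ll Q_i$ for each $i$, the Radon–Nikodym derivative of the product measure factorizes as $\frac{d(\otimes_i P_i)}{d(\otimes_i Q_i)}(\omega_1, \ldots, \omega_M) = \prod_{i=1}^M \frac{dP_i}{dQ_i}(\omega_i)$. Taking the logarithm turns the product into a sum, and integrating against $\otimes_i P_i$ and using Fubini (together with the fact that each $\frac{dP_i}{dQ_i}(\omega_i)$ depends only on its own coordinate) yields $d_\KL(\otimes_i P_i \| \otimes_i Q_i) = \sum_{i=1}^M \int \log \frac{dP_i}{dQ_i} \, dP_i = \sum_{i=1}^M d_\KL(P_i \| Q_i)$.

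For the TV inequality, the natural approach is the coupling characterization: $d_\TV(P,Q) = \inf_\gamma \Pr_\gamma[X \neq Y]$, where the infimum is over couplings $\gamma$ of $(X,Y)$ with marginals $P$ and $Q$, and the infimum is attained (by the maximal coupling). For each $i$, fix a maximal coupling $\gamma_i$ of $P_i$ and $Q_i$ so that $\Pr_{\gamma_i}[X_i \neq Y_i] = d_\TV(P_i, Q_i)$. Then the product coupling $\gamma := \otimes_i \gamma_i$ is a valid coupling of $\otimes_i P_i$ and $\otimes_i Q_i$, and under independence
\begin{equation*}
\Pr_{\gamma}\!\left[(X_1,\ldots,X_M) = (Y_1,\ldots,Y_M)\right] = \prod_{i=1}^M \Pr_{\gamma_i}[X_i = Y_i] = \prod_{i=1}^M (1 - d_\TV(P_i, Q_i)).
\end{equation*}
Since $d_\TV(\otimes_i P_i, \otimes_i Q_i) \leq \Pr_\gamma[(X_1,\ldots,X_M) \neq (Y_1,\ldots,Y_M)]$ by the coupling inequality, rearranging gives the desired bound $1 - d_\TV(\otimes_i P_i, \otimes_i Q_i) \geq \prod_{i=1}^M (1 - d_\TV(P_i, Q_i))$.

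Neither step looks like a real obstacle; the only subtlety worth stating carefully is the coupling-inequality direction for TV (the upper bound on TV via any coupling), which in turn relies on the standard identity $d_\TV(P,Q) = \inf_\gamma \Pr_\gamma[X \neq Y]$. If one prefers to avoid quoting the maximal coupling existence, one can instead prove the TV bound by induction on $M$: the base case $M=2$ follows from the identity $\otimes_{i=1}^2 P_i - \otimes_{i=1}^2 Q_i = (P_1 - Q_1)\otimes P_2 + Q_1 \otimes (P_2 - Q_2)$ together with the triangle inequality, yielding $d_\TV(P_1\otimes P_2, Q_1\otimes Q_2) \leq d_\TV(P_1,Q_1) + d_\TV(P_2,Q_2) - d_\TV(P_1,Q_1)d_\TV(P_2,Q_2)$, which is exactly the desired inequality for $M=2$. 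The inductive step applies this $M=2$ case to the pair $(\otimes_{i=1}^{M-1} P_i, P_M)$ and $(\otimes_{i=1}^{M-1} Q_i, Q_M)$ combined with the induction hypothesis.
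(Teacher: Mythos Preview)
Your proposal is correct and follows essentially the same approach as the paper: the KL identity via factorization of the Radon--Nikodym derivative and linearity of expectation, and the TV inequality via the coupling characterization $d_\TV(\mu,\nu) = 1 - \sup_{\gamma}\Pr_\gamma[X=Y]$ applied to the product of near-optimal (in the paper, $\epsilon$-optimal; in your version, maximal) componentwise couplings. Your treatment is slightly more careful in one respect---you explicitly dispose of the non-absolutely-continuous case for KL---and your alternative inductive argument for TV is a nice addition not present in the paper.
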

\begin{proof}\label{prof: proof of tensorization}
    Let $P := \otimes_{i=1}^M P_i$ and $Q := \otimes_{i=1}^M Q_i$, and $X = (X_1,\ldots, X_M) \sim P$. Then 
    \begin{align*}
        d_\KL(P||Q) &= \bE_{X\sim P}\left[\log\left(\frac{P(X)}{Q(X)}\right)\right]\\
        & = \bE_{X\sim P} \left[\sum_{i=1}^M \log\left(\frac{P(X_i)}{Q(X_i)}\right) \right]\\
        & = \sum_{i=1}^M \bE_{X_i\sim P_i}\left[\log\left(\frac{P(X_i)}{Q(X_i)}\right)\right]\\
        & = \sum_{i=1}^M  d_\KL(P_i||Q_i),
    \end{align*}
    which proves the first equality.

    For the second equality, recall the fact $d_\TV(\mu,\nu) = 1 - \sup_{(X,Y)\in \Gamma(\mu,\nu)}\bP(X=Y)$. Here, \(\Gamma(\mu,\nu)\) denotes all the couplings of \(\mu\) and \(\nu\) (that is, all the joint distributions with the marginal distributions \(\mu\) and \(\nu\)). Now fix any $\epsilon > 0$, for each $i \in \{1,\ldots, M\}$, choose $\Pi_i \in \Gamma(P_i,Q_i)$ such that 
    \[
   \bP_{(X_i,Y_i)\sim \Pi_i}(X = Y) \geq 1 - d_{\TV}(P_i, Q_i) - \epsilon.
    \]
    Consider the $2M$-dimensional vector $(X_1,Y_1,X_2,Y_2,\ldots, X_M,Y_M)\sim \prod_{i=1}^M \Pi_i$, then the joint distribution of $X = (X_1, \ldots, X_M)$ and $Y = (Y_1, \ldots, Y_M)$ belongs to the $\Gamma(P,Q)$. Hence
    \begin{align*}
        1 - d_\TV(P,Q) \geq \bP(X = Y) = \prod_{i=1}^n \bP(X_i = Y_i) \geq \prod_{i=1}^n (1 - d_{\TV}(P_i, Q_i) - \epsilon).
    \end{align*}
    Since $\epsilon$ is an arbitrary positive constant, we can let $\epsilon\rightarrow 0$ and get 
    $$1 - d_\TV( \otimes_{i=1}^M P_i , \otimes_i Q_i) \geq  \prod_{i=1}^M (1 - d_\TV( P_i , Q_i)).$$
\end{proof}

\subsection{Additional Properties}\label{subsec:additional properties}
\begin{prop}\label{prop:auxiliary-validity-general-acceptance}
Let $a: [0, \infty) \rightarrow [0, 1]$	be any fixed function satisfying $a(t) = t a(1/t)$. Let \(\tilde\bP_{\mathsf{aux}}(\cdot, \cdot)\) denote the transition kernel of the Markov chain as defined in Algorithm \ref{alg:auxiliary-based MCMC}, with the exception that Step 7 is modified to: ``with probability \(a(r)\), set \(\theta_{t+1} \gets \theta'\); otherwise, set \(\theta_{t+1} \gets \theta_t\)". Then \(\tilde\bP_{\mathsf{aux}}(\cdot, \cdot)\) remains reversible with respect to \(\pi(\theta \mid x)\).
	\end{prop}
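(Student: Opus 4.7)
The plan is to mimic the proof of Proposition \ref{prop:auxiliary-validity}, but replace the $\min$-based manipulation (which implicitly uses that $\min\{1,r\}$ satisfies $\min\{1,r\} = r\min\{1,1/r\}$) with the single functional identity $a(t) = t\,a(1/t)$. Concretely, decompose $\tilde\bP_{\mathsf{aux}}(\theta,\cdot)$ into a point mass at $\theta$ plus a density part $\tilde p_{\mathsf{aux}}(\theta,\theta')$ with respect to $\lambda$, and reduce the reversibility claim to checking $\Pi(\theta)\tilde p_{\mathsf{aux}}(\theta,\theta') = \Pi(\theta')\tilde p_{\mathsf{aux}}(\theta',\theta)$ for all $\theta\neq\theta'$.

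Writing out the density,
\begin{equation*}
\tilde p_{\mathsf{aux}}(\theta,\theta') = \int_{\Omega_1\times\Omega_2} \bP_\theta(\omega_1)\, q_{\omega_1}(\theta,\theta')\, \bP_{\theta,\theta'}(\omega_2\mid\omega_1)\, a\bigl(r(\theta,\theta';\omega_1,\omega_2)\bigr)\, \lambda_1\otimes\lambda_2(\diff\omega_1\diff\omega_2),
\end{equation*}
where $r(\theta,\theta';\omega_1,\omega_2) = \frac{\Pi(\theta')\bP_{\theta',\theta}(\omega_1,\omega_2)}{\Pi(\theta)\bP_{\theta,\theta'}(\omega_1,\omega_2)}\cdot \frac{q_{\omega_1}(\theta',\theta)}{q_{\omega_1}(\theta,\theta')}$. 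The key algebraic observation is simply that $r(\theta',\theta;\omega_1,\omega_2) = 1/r(\theta,\theta';\omega_1,\omega_2)$, and that multiplying $r$ by the ``from-$\theta$'' mass $\Pi(\theta)\bP_{\theta,\theta'}(\omega_1,\omega_2)q_{\omega_1}(\theta,\theta')$ produces exactly the ``from-$\theta'$'' mass.

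The main (only) step is then to verify detailed balance pointwise in $(\omega_1,\omega_2)$. Using the functional equation $a(r)=r\,a(1/r)$,
\begin{equation*}
\Pi(\theta)\bP_{\theta,\theta'}(\omega_1,\omega_2)\,q_{\omega_1}(\theta,\theta')\, a(r) = \Pi(\theta)\bP_{\theta,\theta'}(\omega_1,\omega_2)\,q_{\omega_1}(\theta,\theta')\, r\cdot a(1/r) = \Pi(\theta')\bP_{\theta',\theta}(\omega_1,\omega_2)\,q_{\omega_1}(\theta',\theta)\, a(1/r),
\end{equation*}
and since $1/r = r(\theta',\theta;\omega_1,\omega_2)$, the right-hand side is precisely the integrand of $\Pi(\theta')\tilde p_{\mathsf{aux}}(\theta',\theta)$ at the same point $(\omega_1,\omega_2)$. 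Integrating over $\Omega_1\times\Omega_2$ with respect to $\lambda_1\otimes\lambda_2$ gives $\Pi(\theta)\tilde p_{\mathsf{aux}}(\theta,\theta') = \Pi(\theta')\tilde p_{\mathsf{aux}}(\theta',\theta)$, which yields reversibility.

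There is essentially no obstacle beyond making this substitution careful: the original proof used $\min$ and could exploit pulling a $\min$ inside an integral, but here we do not need any inequality — the equality $a(t)=t\,a(1/t)$ is exactly the ingredient that makes the integrand symmetric without any case analysis, so the generalization is cleaner than the original argument. Note that $\min\{1,t\}$ and the Barker choice $t/(1+t)$ are both instances of functions satisfying the hypothesis, recovering the standard and Barker-type acceptance rules as special cases.
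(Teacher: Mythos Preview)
Your proposal is correct and takes essentially the same approach as the paper: both write out the density part $\tilde p_{\mathsf{aux}}(\theta,\theta')$ as an integral over $\Omega_1\times\Omega_2$ and apply the identity $a(r)=r\,a(1/r)$ to the integrand to obtain $\Pi(\theta)\tilde p_{\mathsf{aux}}(\theta,\theta')=\Pi(\theta')\tilde p_{\mathsf{aux}}(\theta',\theta)$. The only cosmetic difference is that you phrase it as pointwise detailed balance before integrating, whereas the paper writes the chain of equalities directly at the level of the integral.
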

 \begin{proof}[Proof of Proposition \ref{prop:auxiliary-validity-general-acceptance}]
     Let $\tilde p_{\mathsf{aux}}$ denote the density part of $\tilde P_{\mathsf{aux}}$. We have the following expression of $\tilde p_{\mathsf{aux}}(\theta, \theta')$:
		\begin{align*}
			&  \tilde p_{\mathsf{aux}}(\theta, \theta')  \\
			&= \int_{\Omega_1}\bP_\theta(\omega_1) q_{\omega_1}(\theta, \theta') 
			\int_{\Omega_2}\left(a\left( \frac{\Pi(\theta') \bP_{\theta',\theta}(\omega_1,\omega_2)}{\Pi(\theta) \bP_{\theta,\theta'}(\omega_1,\omega_2)} \cdot \frac{q_{w_1}(\theta',\theta)}{q_{w_1}(\theta,\theta')}\right) \bP_{\theta,\theta'}(\omega_2\mid \omega_1) \lambda_2(\diff \omega_2)\right) \lambda_1(\diff \omega_1)\\
   & = \frac{\Pi(\theta')}{\Pi(\theta)}
   \int_{\Omega_1}	\int_{\Omega_2}\bP_{\theta',\theta}(\omega_1,\omega_2)q_{\omega_1}(\theta',\theta)a\left( \frac{\Pi(\theta) \bP_{\theta,\theta'}(\omega_1,\omega_2)}{\Pi(\theta') \bP_{\theta',\theta}(\omega_1,\omega_2)} \cdot \frac{q_{w_1}(\theta,\theta')}{q_{w_1}(\theta',\theta)}\right) \lambda_2(\diff \omega_2)\lambda_1(\diff \omega_1)\\
   & =  \frac{\Pi(\theta')}{\Pi(\theta)}\tilde p_{\mathsf{aux}}(\theta', \theta),
		\end{align*}
		where the second equality uses $a(r) = r a(1/r)$.
		Therefore we have
		\begin{align*}
\Pi(\theta)  \tilde p_{\mathsf{aux}}(\theta, \theta')
= \Pi(\theta') \tilde p_{\mathsf{aux}}(\theta', \theta),
		\end{align*}
as desired.
\end{proof}

\section{Implementation Details of Poisson Minibatching}\label{sec:poisson-thinning}
We provide the implementation details for sampling the minibatch data in Algorithm \ref{alg:PoissonMH-thinning} (for Algorithm \ref{alg:PoissonMH}, \ref{alg:Locally Balanced PoissonMH}),  and Algorithm \ref{alg:TunaMH-thinning}
(for Algorithm \ref{alg:TunaMH}, \ref{alg:Tuna--SGLD}). The goal is to demonstrate that this minibatch sampling step can be implemented at a much lower cost than its naïve implementation (and also lower than full-batch MCMC). All the material here is not new, as this subsampling trick has been extensively discussed in \cite{zhang2019poisson, zhang2020asymptotically}. Since the same subsampling method is used in our new algorithm (Algorithm \ref{alg:Locally Balanced PoissonMH}, \ref{alg:Tuna--SGLD}), we include their details here for concreteness.

\begin{algorithm}[htb!]
\caption{Minibatch Sampling in PoissonMH and Locally Balanced PoissonMH}
\label{alg:PoissonMH-thinning}
\begin{algorithmic}
    \STATE $\triangleright$ \text{One time pre-computation}
    \begin{itemize}
        \vspace{-0.5em}
        \item set $\Lambda_i := \frac{\lambda M_i}{L} + M_i$ and $\Lambda := \sum_{i=1}^N \Lambda_i$,
        \vspace{-0.5em}
        \item construct an alias table for the categorical distribution on $\{1, 2, \cdots, N\}$: $\texttt{Categorical}(p_1, p_2, \cdots, p_N)$, where $p_i = \Lambda_i / \Lambda$
    \end{itemize} 
    \vspace{-0.5em}
    \STATE $\triangleright$ \text{Poisson minibatch sampling (Step 5 of Algorithm \ref{alg:PoissonMH}, Step 3 of Algorithm \ref{alg:Locally Balanced PoissonMH})}
    \REQUIRE Current position $\theta$
    \STATE Initialize $s_i = 0$ for $i \in \{1,2,\ldots, N\}$
    \STATE sample $B \sim \Poi(\Lambda)$
    \FOR{$b = 1$ to $B$}
    \STATE sample $i_b \sim \texttt{Categorical}(p_1, p_2, \cdots, p_N)$
    \STATE compute $\phi_{i_b}(\theta;x)$
    \STATE with probability $\frac{\frac{\lambda M_{i_b}}{L}+\phi_{i_b}(\theta; x)}{\frac{\lambda M_{i_b}}{L}+M_{i_b}}$, set $s_{i_b} \leftarrow s_{i_b} + 1$
    \ENDFOR
\end{algorithmic}
\end{algorithm}

\begin{algorithm}[htb!]
\caption{Minibatch Sampling in TunaMH and Tuna--SGLD}
\label{alg:TunaMH-thinning}
\begin{algorithmic}
    \STATE $\triangleright$ \text{One time pre-computation before MCMC iterations}
    \begin{itemize}
        \vspace{-0.5em}
        \item construct an alias table for the categorical distribution on $\{1, 2, \cdots, N\}$: $\texttt{Categorical}(p_1, p_2, \cdots, p_N)$, where $p_i = c_i / C$
    \end{itemize} 
    \vspace{-0.5em}
    \STATE $\triangleright$ \text{Poisson minibatch sampling (Step 6 of Algorithm \ref{alg:TunaMH}, Step 5 of Algorithm \ref{alg:Tuna--SGLD})}
    \REQUIRE Current position $\theta$ and proposed state $\theta'$
    \STATE Initialize $s_i = 0$ for $i \in \{1,2,\ldots, N\}$
    \STATE sample $B \sim \Poi(\lambda + CM(\theta, \theta'))$
    \FOR{$b = 1$ to $B$}
    \STATE sample $i_b \sim \texttt{Categorical}(p_1, p_2, \cdots, p_N)$
    \STATE compute $\phi_{i_b}(\theta, \theta';x)$
    \STATE with probability $\frac{\lambda c_{i_b}+C\phi_{i_b}(\theta, \theta'; x)}{\lambda c_{i_b} + Cc_{i_b}M(\theta, \theta')}$, set $s_{i_b} \leftarrow s_{i_b} + 1$
    \ENDFOR
\end{algorithmic}
\end{algorithm}

Both algorithms require pre-computation that only needs to be performed once prior to the MCMC iterations. This computation constructs an `alias table' for sampling the categorical distribution on $\{ 1, 2, \cdots, N\}$.  With this table, drawing a sample from this categorical distribution requires only $\mathcal{O}(1)$ time. See part A and part B in Section III of \cite{aliasmethod} for details.

The computational cost of generating Poisson variables at each MCMC step is then $\mathcal{O}(B)$. The expected computational cost will be:

 \[  \mathbb{E}[B] = \begin{cases}
        \lambda + L & \text{for PoissonMH} \\
        \lambda + CM(\theta, \theta') & \text{for TunaMH}
    \end{cases} \]

Here $\lambda$ is our tuning parameter, and $L$ and $C$ are much smaller than $N$ in the examples considered here. Therefore the per-iteration cost is of these minibatch MCMC algorithm is much smaller than standard MCMC.

Next, we verify that Algorithm \ref{alg:PoissonMH-thinning} indeed generates independent Poisson variables required by Step 5 of PoissonMH (Algorithm \ref{alg:PoissonMH}). The validity proof of Algorithm \ref{alg:TunaMH-thinning} is similar. For PoissonMH, we can view the generating process of $\{s_i\}_{i=1}^N$ as
the following urn model: 1. Generates $\Poi(\Lambda)$ balls; 2. Throw each ball to $N$ urns independently, with probability $p_1, p_2, \ldots, p_N$ for the first, second, ..., last urn. 3. For every ball in urn $i$ (if any), keep it with probability $\frac{\frac{\lambda M_{i_b}}{L}+\phi_{i_b}(\theta_t; x)}{\frac{\lambda M_{i_b}}{L}+M_{i_b}}$ and discard it otherwise. We can verify $s_1, s_2, \ldots, s_N$ are  the number of balls in urn $1, 2, \ldots, N$ respectively. 

On the other hand, step 1 - 3 are precisely the Poisson thinning process. Standard Poisson calculation shows $s_1, s_2, \ldots, s_N$ are independent Poisson random variables, and the Poisson parameter of $s_i$ equals $\Lambda p_i \times \frac{\frac{\lambda M_{i}}{L}+\phi_{i}(\theta_t; x)}{\frac{\lambda M_{i}}{L}+M_{i}}$. In our case, $p_i = \Lambda_i/\Lambda$, and $\Lambda_i = \lambda M_i/L + M_i$. Therefore $s_i \sim \Poi(\lambda M_i/L + \phi_i(\theta,x))$.

As a result, Algorithm \ref{alg:PoissonMH-thinning} gives the required Poisson variables by PoissonMH. For TunaMH, the proof is similar and is omitted here.

\section{Implementation Details of Locally Balanced PoissonMH}\label{sec:locally balanced Poisson implement}

According to \cite{livingstone2022barker}, there are two different implementation versions of the locally balanced proposal  on $\bR^d$. The first approach is to treat each dimension of the parameter space separately. We define the proposal for dimension $j$ as:

\begin{equation*}
       q_{\omega_1, j}^{(g)}(\theta,\theta'_j) := Z^{-1}(\theta, \omega_1) g\left( e^{\partial_{\theta_j} \log \left(\pi(\theta\mid x)\bP_{\theta}(\omega_1)\right)(\theta'_j-\theta_j)} \right) \mu_j(\theta'_j - \theta_j),
\end{equation*}
where $\mu_j$ is a univariate symmetric density. The proposal on $\bR^d$ is then defined as the product of proposals on each dimension:

\begin{equation*}
    q_{\omega_1}^{(g)}(\theta,\theta')  = \prod_{j=1}^d q_{\omega_1, j}^{(g)}(\theta,\theta'_j) 
\end{equation*}
The second approach is to derive a multivariate version of the locally balanced proposal. Proposition 3.3 and the experiments in \cite{livingstone2022barker} show that the first approach is more efficient. As a result, we develop the auxiliary version of the first approach for both Poisson--Barker and Poisson--MALA.

Now, we provide  implementation details of  Poisson--MALA and Poisson--Barker. First, by taking  $g(t) = \sqrt{t}$ and $\mu_j \sim \bN(0, \sigma^2 )$ for every $j$, we derive the Poisson--MALA proposal as:

\begin{equation*}
    q_{\omega_1}^{M}(\theta,\theta')  = \prod_{j=1}^d q_{\omega_1, j}^{M}(\theta,\theta'_j).
\end{equation*}
The $j$-th term in the product can be further represented as
\begin{equation*}
    \begin{aligned}
        q_{\omega_1,j}^{M}(\theta, \theta'_j) &\propto  \exp\left\{\frac{1}{2} \partial_{\theta_j} \log \left(\pi(\theta\mid x)\bP_{\theta}(\omega_1)\right)(\theta'-\theta_j) \right\} \exp\left\{- \frac{1}{2\sigma^2} (\theta' -\ \theta_j)^2 \right\} \\
        &\propto \exp \left \{ -\frac{1}{2\sigma^2} \left( \theta' - \theta_j - \frac{\sigma^2}{2} \partial _{\theta_j} \log \left(\pi(\theta\mid x)\bP_{\theta}(\omega_1)\right) \right)^2 \right\}
    \end{aligned}
\end{equation*}
which is Gaussian centred at $\theta_j + \frac{\sigma^2}{2} \partial _{\theta_j} \log \left(\pi(\theta\mid x)\bP_{\theta}(\omega_1)\right)$ with standard deviation $\sigma$. This proposal is similar to MALA except that we substitute the gradient of $\log \pi(\theta \mid x)$ by the gradient of $\log \pi(\theta \mid x) \bP_\theta(\omega_1)$, enabling efficient implementation. Since all the dimensions are independent, our Poisson--MALA proposal is a multivariate Gaussian with mean $\theta + 0.5\sigma^2 \nabla_\theta \log\left(\pi(\theta\mid x) \bP_\theta(\omega_1)\right)$ and covariance matrix $\sigma^2 \mathbb I_d$.

Next, when taking $g(t) = t/(1+t)$, we can similarly write our Poisson--Barker proposal as:
\begin{equation*}
    q_{\omega_1}^{B}(\theta,\theta')  = \prod_{j=1}^d q_{\omega_1, j}^{B}(\theta,\theta'_j).
\end{equation*}
The $j$-th term in the product can be further represented as

\begin{equation*}
    \begin{aligned}
        q_{\omega_1,j}^B(\theta, \theta'_j) = Z^{-1}(\theta,\omega_1)\frac{\mu_j(\theta'_j - \theta)}{1 + e^{-\partial_{\theta_j} \log \left(\pi(\theta\mid x)\bP_{\theta}(\omega_1)\right)(\theta'_j - \theta_j)}}.
    \end{aligned}
\end{equation*}

Now we claim $Z(\theta,\omega_1)$ is a constant $0.5$. To verify our claim, we first notice that $g(t) = t/(1+t) = 1 - g(t^{-1})$. Let $C_{\theta,\omega_1} := \partial_{\theta_j} \log \left(\pi(\theta\mid x)\bP_{\theta}(\omega_1)\right)$, we can integrate:
\begin{align*}
    Z(\theta,\omega_1) &= \int_\mathbb{R}g(\exp\{C_{\theta,\omega_1}(\theta'_j - \theta_j)\})\mu_j(\theta_j' - \theta_j) \diff \theta'_j \\
    & = \int_\mathbb{R}g(\exp\{C_{\theta,\omega_1}t\})\mu_j(t) \diff t\\
    & = \int_{t>0}g(\exp\{C_{\theta,\omega_1}t\})\mu_j(t) \diff t +  \int_{t<0}g(\exp\{C_{\theta,\omega_1}t\})\mu_j(t) \diff t\\
    & = \int_{t>0}g(\exp\{C_{\theta,\omega_1}t\})\mu_j(t) \diff t + \int_{t>0}g(\exp\{-C_{\theta,\omega_1}t\})\mu_j(t) \diff t \quad \text{since~}\mu_j(t) = \mu_j(-t)\\
    & = \int_{t>0}\mu_j(t) \diff t  = 0.5.
\end{align*}

The same calculation is used in \cite{livingstone2022barker}. Therefore our proposal has an explicit form
\begin{align*}
        q_{\omega_1,j}^B(\theta, \theta'_j) =2 \frac{\mu_j(\theta'_j - \theta_j)}{1 + e^{-\partial_{\theta_j} \log \left(\pi(\theta\mid x)\bP_{\theta}(\omega_1)\right)(\theta'_j - \theta_j)}}.
    \end{align*}

Sampling from this proposal distribution can be done perfectly by the following simple procedure: given current $\theta_j$, one first propose $z\sim \mu_j$, then move to $\theta_j + z$ with probability $g(e^{\partial_{\theta_j} \log \left(\pi(\theta\mid x)\bP_{\theta}(\omega_1)\right) z})$, and to $\theta_j - z$ with probability $1 - g(e^{\partial_{\theta_j} \log \left(\pi(\theta\mid x)\bP_{\theta}(\omega_1)\right) z})$.  This procedure samples exactly from $q_{\omega_1,j}^B(\theta, \theta'_j)$, see also Proposition 3.1 of \cite{livingstone2022barker} for details. Sampling from the $d$-dimensional proposal can be done by implementing the above strategy for each dimensional independently.

Both algorithms require evaluating $\nabla \log \left(\pi(\theta\mid x)\bP_{\theta}(\omega_1)\right)$, and we verify that this only requires evaluation on the same minibatch as PoissonMH. As derived in Section \ref{subsubsec: locally balanced Poisson}, 

\begin{equation*}
    \begin{aligned}
        \pi(\theta \mid x) \bP_\theta(\omega_1) &= \frac{1}{Z(x)} \exp(-\lambda) \prod_{i: s_i > 0} \frac{\left(\lambda M_i L^{-1} + \phi_i(\theta; x)\right)^{s_i}}{s_i!}.
    \end{aligned}
\end{equation*}
Then, we take the logarithm:

\begin{equation*}
    \begin{aligned}
        \log \pi(\theta \mid x) \bP_\theta(\omega_1) = -\log Z(x) -\lambda - \sum_{i: s_i >0} \log(s_i !) + \sum_{i: s_i > 0} s_i \log \left\{\lambda M_i L^{-1} + \phi_i(\theta; x)\right\}.
    \end{aligned}
\end{equation*}
The gradient is:
\begin{equation*}
    \begin{aligned}
        \nabla_\theta \log \pi(\theta \mid x) \bP_\theta(\omega_1) = \sum_{i: s_i > 0} s_i \frac{\nabla_\theta \phi_i(\theta; x)}{\lambda M_i L^{-1} + \phi_i(\theta; x)}
    \end{aligned}
\end{equation*}

As a result, we only need to evaluate the gradient on the minibatch data.

\section{Additional Comment on Algorithm \ref{alg:PoissonMH} and \ref{alg:TunaMH}}\label{sec:comment poisson trick}
A close examination of Algorithm \ref{alg:PoissonMH} and \ref{alg:TunaMH} shows that the key ingredient is the  `Poisson estimator', which provides an unbiased estimate of the exponential of an expectation. In the case of tall dataset, the quantity of interest is the target ratio which can be represented as $\exp{\{\sum_{i=1}^N(\phi_i(\theta'; x) - \phi_i(\theta; x))\}}$ in Algorithm \ref{alg:PoissonMH} (or $\exp{\{(-\sum_{i=1}^N U_i(\theta'; x) + U_i(\theta; x))\}}$ in Algorithm \ref{alg:TunaMH}). This ratio can be evaluated precisely, albeit at a high cost. Alternatively, we can express the ratio as \(\exp\left(\mathbb{E}[f(I)]\right)\), where \( f(i) := N(\phi_i(\theta'; x) - \phi_i(\theta; x)) \) and \( I \sim \Unif\{1, 2, \ldots, N\} \). This formulation aligns the problem with the Poisson estimator's framework. An additional subtlety is the estimator has to be non-negative, as it will be used to determine the acceptance probability (Step 7 in Algorithm \ref{alg:PoissonMH}, and Step 6 in Algorithm \ref{eg:TunaMH}). Consequently, all the technical assumptions and design elements in both Algorithm \ref{alg:PoissonMH} and \ref{alg:TunaMH} are intended to ensure that the non-negativity of the Poisson estimator. Other applications of the Poisson estimator can be found in \citet{fearnhead2010random, wagner1987unbiased, papaspiliopoulos2011methodological} and the references therein. This insight can be interpreted either positively or negatively: it implies that there is potential for developing new algorithms using similar ideas under weaker assumptions (such as non-i.i.d. data or less stringent technical conditions). However, eliminating all technical assumptions seems unlikely due to the impossibility result in Theorem 2.1 of \citet{jacob2015nonnegative}.
\section{Further Details of Experiments in Section \ref{sec:experiment}}\label{sec:experiment details}

\subsection{Heterogeneous Truncated Gaussian}\label{subsec:truncated gaussian details}
Recall that for fixed $\theta$, our data $\{y_1,y_2, \ldots, y_N\}$ is assumed to be i.i.d. with distribution $y_i\mid\theta \sim \bN(\theta, \Sigma)$. With uniform prior on the hypercube $[-K,K]^d$, our target is the following tempered posterior:
\begin{align*}
\pi(\theta\mid y) &\propto \exp \left \{ -\frac{1}{2}\beta \sum_{i=1}^N (\theta - y_i)^\intercal \Sigma^{-1}(\theta-y_i) \right \} \mathbb{I}(\theta \in [-K,K]^d)
\end{align*}

Now we look at the function $\tilde \phi_i(\theta) = -\frac{1}{2}\beta (\theta - y_i)^\intercal \Sigma^{-1}(\theta-y_i)$. It is clear that $\tilde\phi_i(\theta) \leq 0$. On the other hand:
\begin{align*}
 \tilde \phi_i(\theta) \geq -\frac{1}{2}\beta\lambda_{\max}(\Sigma^{-1})\lVert\theta - y_i\rVert_2^2 \geq -\frac{1}{2}\beta\lambda_{\max}(\Sigma^{-1})\sum_{j=1}^d (\lvert y_{ij}\rvert + K)^2,
\end{align*}
where $\lambda_{\max}(\Sigma^{-1})$ is the largest eigenvalue of $\Sigma^{-1}$. Set 
$$
M_i := \frac{1}{2}\beta\lambda_{\max}(\Sigma^{-1})\sum_{j=1}^d (\lvert y_{ij}\rvert + K)^2,
$$
and define $\phi_i(\theta) := \tilde \phi_i(\theta) + M_i$, we have $\phi_i(\theta) \in [0,M_i]$ for each $i$. Meanwhile, the posterior is $\pi(\theta\mid y) \propto \exp\{\sum_{i=1}^N \phi_i(\theta)\}$, as adding a constant to $\tilde\phi_i(\theta)$ does not affect the posterior.

\subsection{Robust Linear Regression}\label{subsec:robust linear regression details}

Similarly, we can derive a bound for each $$\tilde{\phi}_i(\theta) := -\beta\frac{v+1}{2} \log\left(1+ \frac{(y_i-\theta^\intercal x_i)^2}{v}\right)$$ in the robust linear regression example. Suppose we constraint $\theta$ in the region$\{\theta, \lVert \theta \rVert_2 \leq R\}$, then in order to get the upper bound $M_i$, we first find the following maximum:

\begin{equation*}
 \max_\theta (y_i - \theta^\intercal x_i)^2 \; \text{ subject to } ||\theta||_2 \leq R
\end{equation*}
The above maximization problem can be solved analytically, with the solution:

$$
\theta^* = \begin{cases} \frac{-x_i}{||x_i||_2} \cdot R \;\;\;\text{when } y_i > 0 \\ \\\frac{x_i}{||x_i||_2} \cdot R \;\;\; \text{when } y_i <0.\end{cases}
$$
The maximum equals $(|y_i| + ||x_i||_2 \cdot R)^2$ for both cases. Taking $$M_i = \beta\frac{v+1}{2}\log\left(1+ \frac{ (|y_i| + ||x_i||_2 \cdot R)^2}{v}\right),$$ and set $\phi_i(\theta):= \tilde\phi_i(\theta) + M_i$, we have $\phi_i(\theta)\in [0,M_i]$ for each $i$. Meanwhile, the posterior is $\pi(\theta\mid y) \propto \exp\{\sum_{i=1}^N \phi_i(\theta)\}$.

\subsection{Bayesian Logistic Regression on \texttt{MNIST}}

For $i=1, 2, \cdots, N$, let $x_i \in \bR^d$ be the features and $y_i \in \{ 0, 1 \}$ be the label. The logistic regression model is $p_\theta(y_i = 1 \mid x) = h(x_i^\top \theta)$, where $h(z) := (1+\exp(-z))^{-1}$ is the sigmoid function. We assume the data are i.i.d. . With a flat prior, the target posterior distribution is:

\begin{equation*}
    \begin{aligned}
        \pi(\theta \mid y, x) \propto \exp \left\{ - \sum_{i=1}^N -y_i \log h(x_i^\top \theta) - (1-y_i) \log h(-x_i^\top \theta)  \right\}.
    \end{aligned}
\end{equation*}
Let  us denote $U_i(\theta) = -y_i \log h(x_i^\top \theta) - (1-y_i) \log h(-x_i^\top \theta)$. To derive the required bound for $U_i(\theta)$ in Algorithm \ref{alg:TunaMH} and \ref{alg:Tuna--SGLD}, first notice that $U_i(\theta)$ is continuous and convex with gradient $ \nabla_\theta U_i(\theta) = (h(x_i^\top \theta) - y_i) \cdot x_i$. Then, we have:

\begin{equation*}
    \begin{aligned}
        \lvert U_i(\theta') - U_i(\theta)\rvert \leq \sup_{s\in [0,1]} \lVert \nabla U_i(\theta + s(\theta'-\theta)) \rVert_2 \lVert \theta' - \theta \rVert_2 \leq \lVert x_i\rVert_2 \lVert \theta' - \theta \rVert_2,
    \end{aligned}
\end{equation*}

where the first inequality follows from mean-value theorem   and the second inequality comes from $|(h(x_i^\top \theta) - y_i)| \leq 1$. Thus, we can take $M(\theta, \theta') = \lVert \theta' - \theta \rVert_2$ and $c_i = \lVert x_i \rVert_2$ for TunaMH and Tuna--SGLD.

\section{Additional Experiments}
\subsection{Additional Results on Heterogeneous Truncated Gaussian}
\begin{figure*}[htb!]
    \centering
    \includegraphics[width=0.9\textwidth]{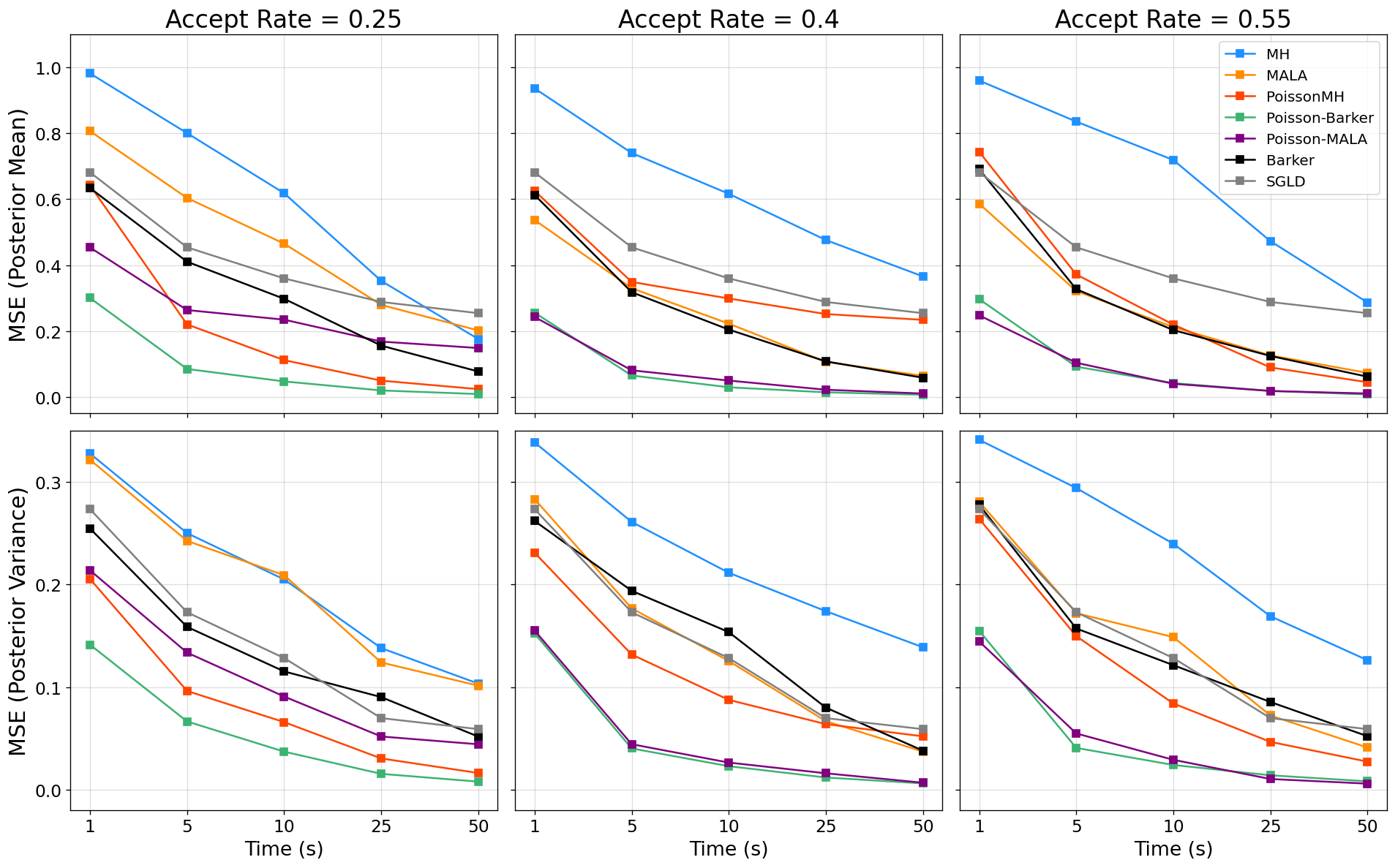}
    \caption{Clock-time MSE comparison. First row: MSE of posterior mean as a function of time for different acceptance rates; Second row: MSE of posterior variance as a function of time for different acceptance rates. The results are averaged over 100 runs for all methods.}
    \label{fig:20d-mse}
\end{figure*}

\begin{table}[htb!]
  \centering
  \begin{threeparttable}
  \begin{adjustbox}{width=\textwidth}
    \begin{tabular}{ccccc}
        \toprule
         & \multicolumn{4}{c}{ESS/s: (Min, Median, Max)}\\
        \cmidrule{2-5}
        Method & acceptance rate=0.25 & acceptance rate=0.4 & acceptance rate=0.55 & Best \\
      \cmidrule{1-5}
      MH  & (0.05, 0.08, 0.47)   & (0.04, 0.06, 0.44)   & (0.03, 0.05, 0.34) & (0.05, 0.08, 0.47) \\
      MALA  & (0.09, 0.15, 0.81)   & (0.10, 0.19, 1.85)   & (0.10, 0.19, 2.77) & (0.10, 0.19, 2.77) \\
      Barker  & (0.11, 0.19, 0.83)   & (0.12, 0.22, 1.24)   & (0.11, 0.21, 1.53) & (0.12, 0.22, 1.53) \\
      PoissonMH & (0.40, 0.66, 4.67)   & (0.34, 0.55, 4.51)   & (0.27, 0.40, 3.35) & (0.40, 0.66, 4.67) \\
      Poisson--Barker 
        & (\textbf{0.86}, \textbf{1.47}, 6.40) 
        & (\textbf{0.91}, \textbf{1.65}, 9.75) 
        & (\textbf{0.84}, 1.59, 12.16) 
        & (\textbf{0.91}, \textbf{1.65}, 12.16) \\
      Poisson--MALA 
        & (0.77, 1.39, \textbf{6.89}) 
        & (0.79, 1.54, \textbf{15.14}) 
        & (\textbf{0.84}, \textbf{1.65}, \textbf{23.84}) 
        & (0.84, \textbf{1.65}, \textbf{23.84}) \\
      \bottomrule
    \end{tabular}
    \end{adjustbox}
    \caption{ESS/s comparison. Here (Min, Median, Max) refer to the minimum, median and maximum of ESS/s across all dimensions. The column ``Best'' reports the best ESS/s across all acceptance rates. The results are averaged over 100 runs for all methods.}
    \label{tab:20d-ess-full}
  \end{threeparttable}
\end{table}

\subsection{Additional Experiments on 10-dimensional Robust Linear Regression}\label{subsec: additional robust linear regression, 10d}

Figure \ref{fig:10d-lin-vs-iter} compares the MSE over iteration count for all seven algorithms in the robust linear regression example. MALA and Poisson--MALA exhibit the best overall per-iteration performance, followed by HMC, Barker and Poisson--Barker. This indicates that our locally-balanced PoissonMH method can find proposals as effective as full-batch methods while enjoying a very low per-step cost.

Figure \ref{fig:10d-lin-hmc} compares HMC with 2, 5, and 10 leapfrog steps. Their performances are similar for all acceptance rates, with 10 leapfrog steps being slightly better. Therefore, the HMC used in Section \ref{subsec:robust regression experiment} uses 10 leapfrog steps.

\begin{figure*}[htb!]
    \centering
    \includegraphics[width=\textwidth]{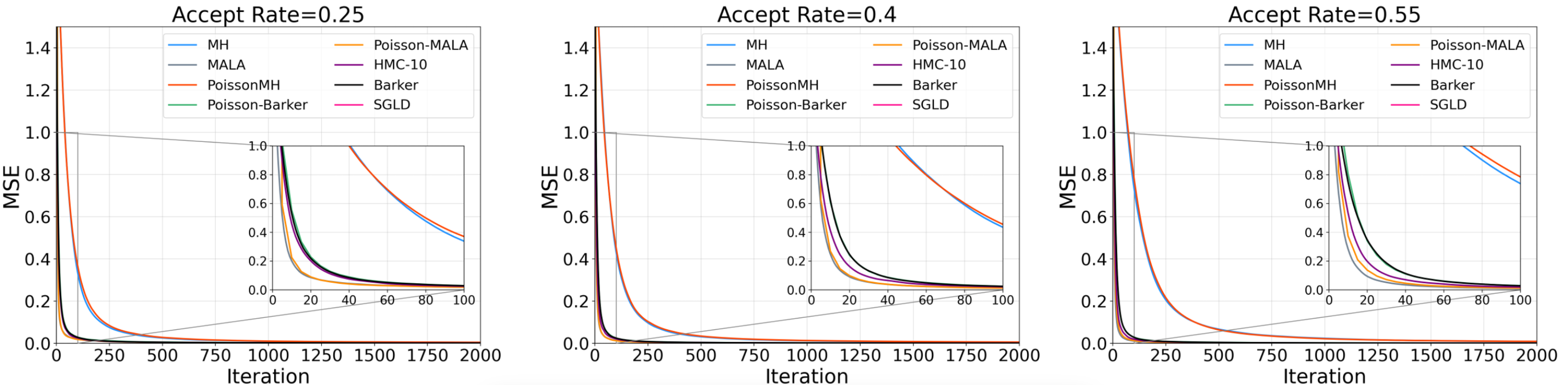}
    \caption{Iteration-wise MSE comparison. }
\label{fig:10d-lin-vs-iter}
\end{figure*}

\begin{figure}[htb!]
\begin{subfigure}{0.33\textwidth}
  \centering
  \includegraphics[width=\textwidth]{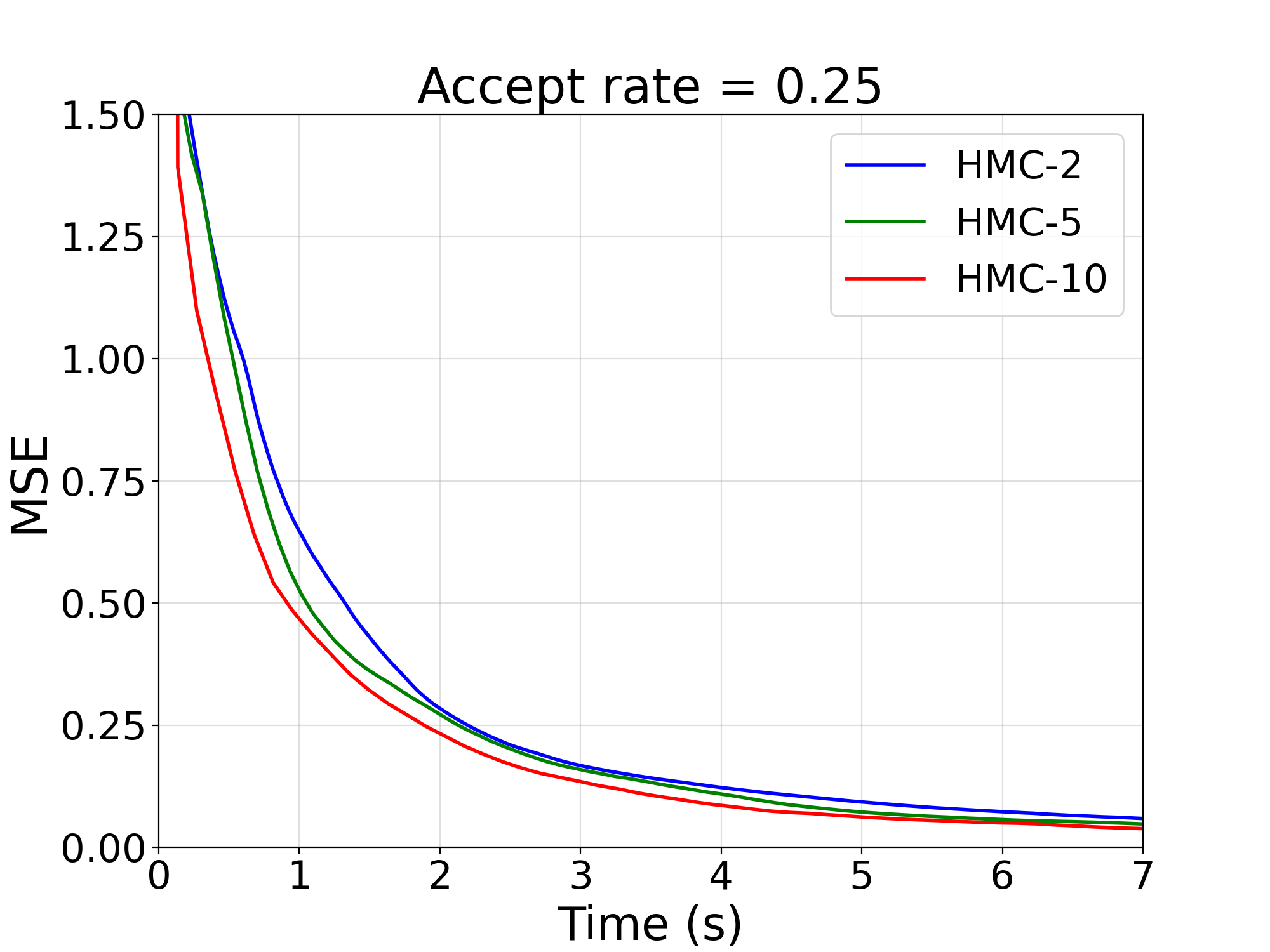}
  \caption{acceptance rate=0.25}
\end{subfigure}
\begin{subfigure}{0.33\textwidth}
  \centering
  \includegraphics[width=\textwidth]{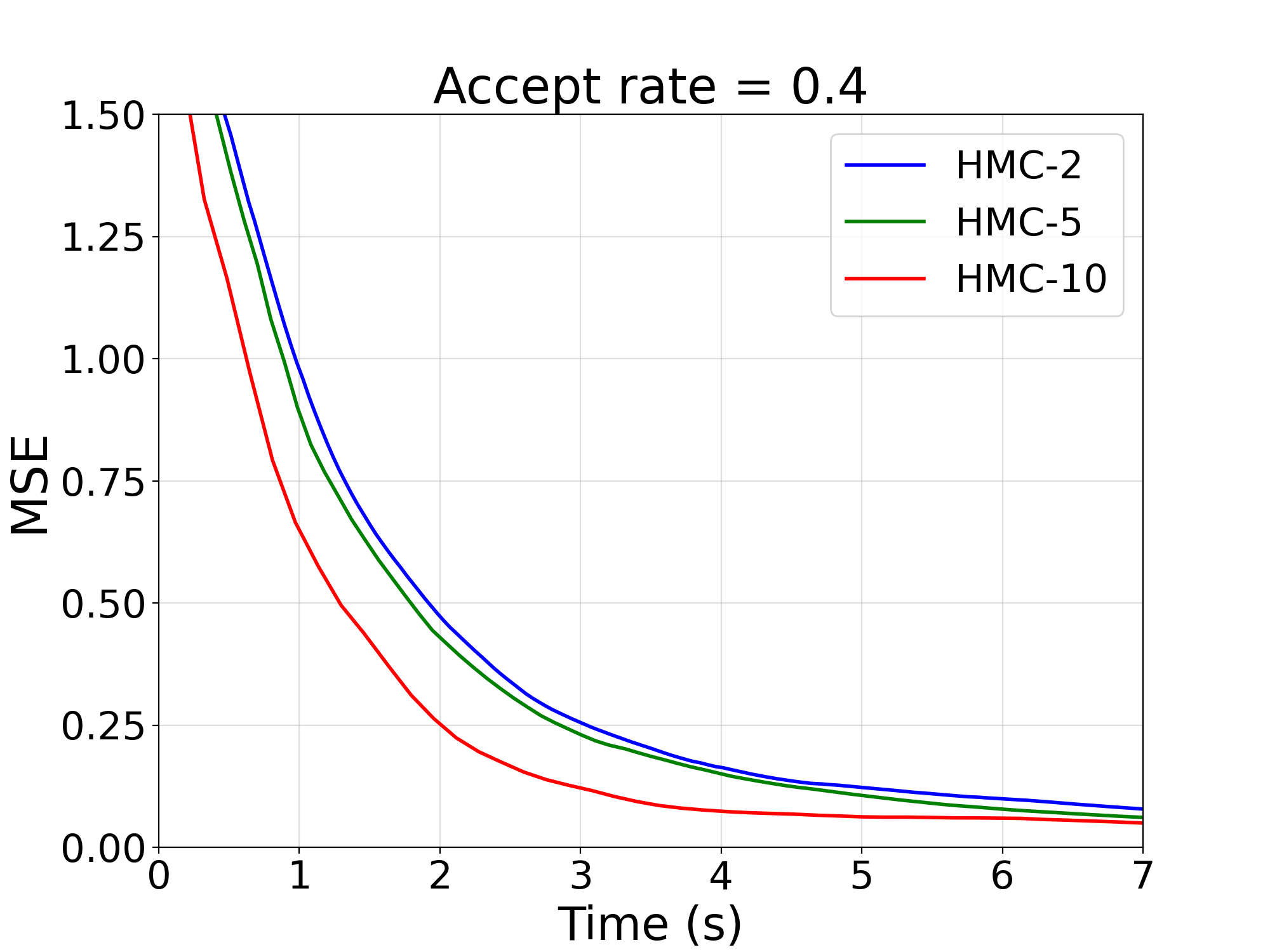}
  \caption{acceptance rate=0.4}
\end{subfigure}
\begin{subfigure}{0.33\textwidth}
    \centering
    \includegraphics[width=\textwidth]{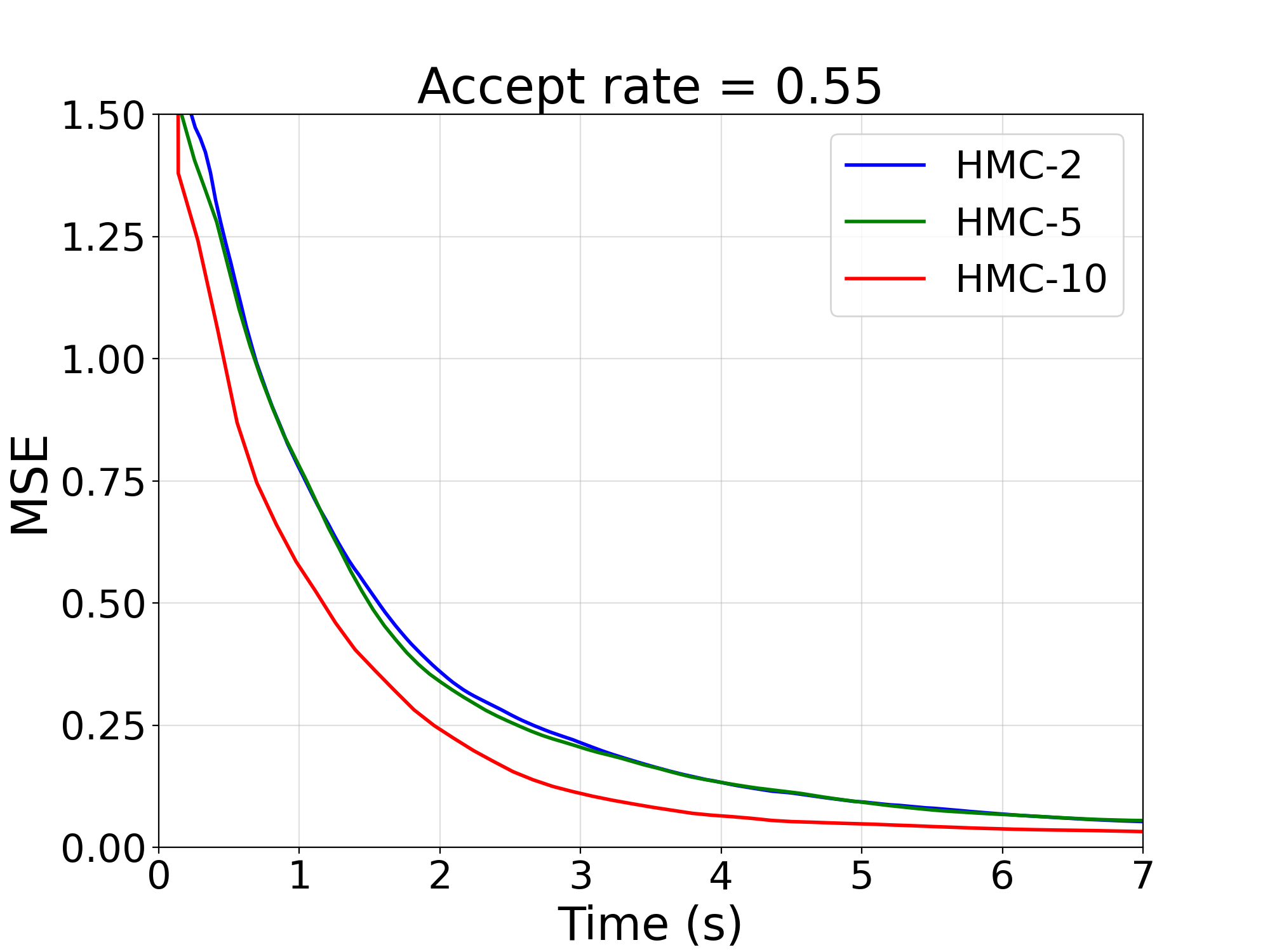}
    \caption{acceptance rate=0.55}
\end{subfigure}
\caption{Clock-time MSE comparison for HMC with leapfrog steps $2,5,$ and $10$. All results are averaged over 10 runs.}
\label{fig:10d-lin-hmc}
\end{figure}

\begin{table}[htb!]
  \centering
  \begin{threeparttable}
  \begin{adjustbox}{width=\textwidth}
    \begin{tabular}{cccccc}
      \toprule
       & \multicolumn{4}{c}{ESS/s: (Min, Median, Max)}\\
       \cmidrule{2-5}
        Method & acceptance rate=0.25 & acceptance rate=0.4 & acceptance rate=0.55 & Best \\
      \cmidrule{1-5}
      MH  & (2.0, 2.1, 2.1)   & (1.8, 1.8, 2.0)   & (1.2, 1.3, 1.3) &  (2.0, 2.1, 2.1) \\
      MALA  & (2.2, 2.3, 2.4)   & (4.3, 4.3, 4.4)   & (5.0, 5.1, 5.2) &  (5.0, 5.1, 5.2) \\
      Barker  & (2.0, 2.0, 2.1)   & (2.8, 2.8, 2.9)   & (2.9, 3.0, 3.0) &  (2.9, 3.0, 3.0) \\
      HMC-10  & (0.7, 0.8, 0.8)   & (0.9, 0.9, 1.0)   & (0.8, 0.9, 0.9) &  (0.9, 0.9, 1.0) \\
      PoissonMH & (99.0, 100.2, 101.1)   & (84.9, 87.0, 88.0)   & (56.5, 57.9, 59.0) & (99.0, 100.2, 101.1)  \\
      Poisson--Barker & (182.1, 185.5, 188.2)   & (254.8, 258.3, 259.6)   & (265.3, 268.3, 270.3) & (265.3, 268.3, 270.3)  \\
      Poisson–MALA & (\textbf{219.8}, \textbf{222.0}, \textbf{228.5}) &  (\textbf{395.9}, \textbf{399.9}, \textbf{403.3})  &  (\textbf{489.3}, \textbf{491.8}, \textbf{496.5})  & (\textbf{489.3}, \textbf{491.8}, \textbf{496.5}) \\
      \bottomrule
    \end{tabular}
    \end{adjustbox}
    \caption{ESS/s comparison. Here (Min, Median, Max) refer to the minimum, median and maximum of ESS/s across all dimensions. The column ``Best'' reports the best ESS/s across all acceptance rates. The results are averaged over 100 runs for all methods.}
    \label{tab:10d-lin-ess-full}
  \end{threeparttable}
\end{table}

We also provide experimental results with a larger data size for the robust regression example. We set $N=200000$ and the other settings remain the same. The MSE vs time/iteration comparison is summarized in Figure \ref{fig:10d-20w-lin-vs-iter}, and the ESS/s comparison are summarized in Table \ref{tab:10d-20w-lin-ess}.

\begin{figure*}[htb!]
    \centering
    \includegraphics[width=\textwidth]{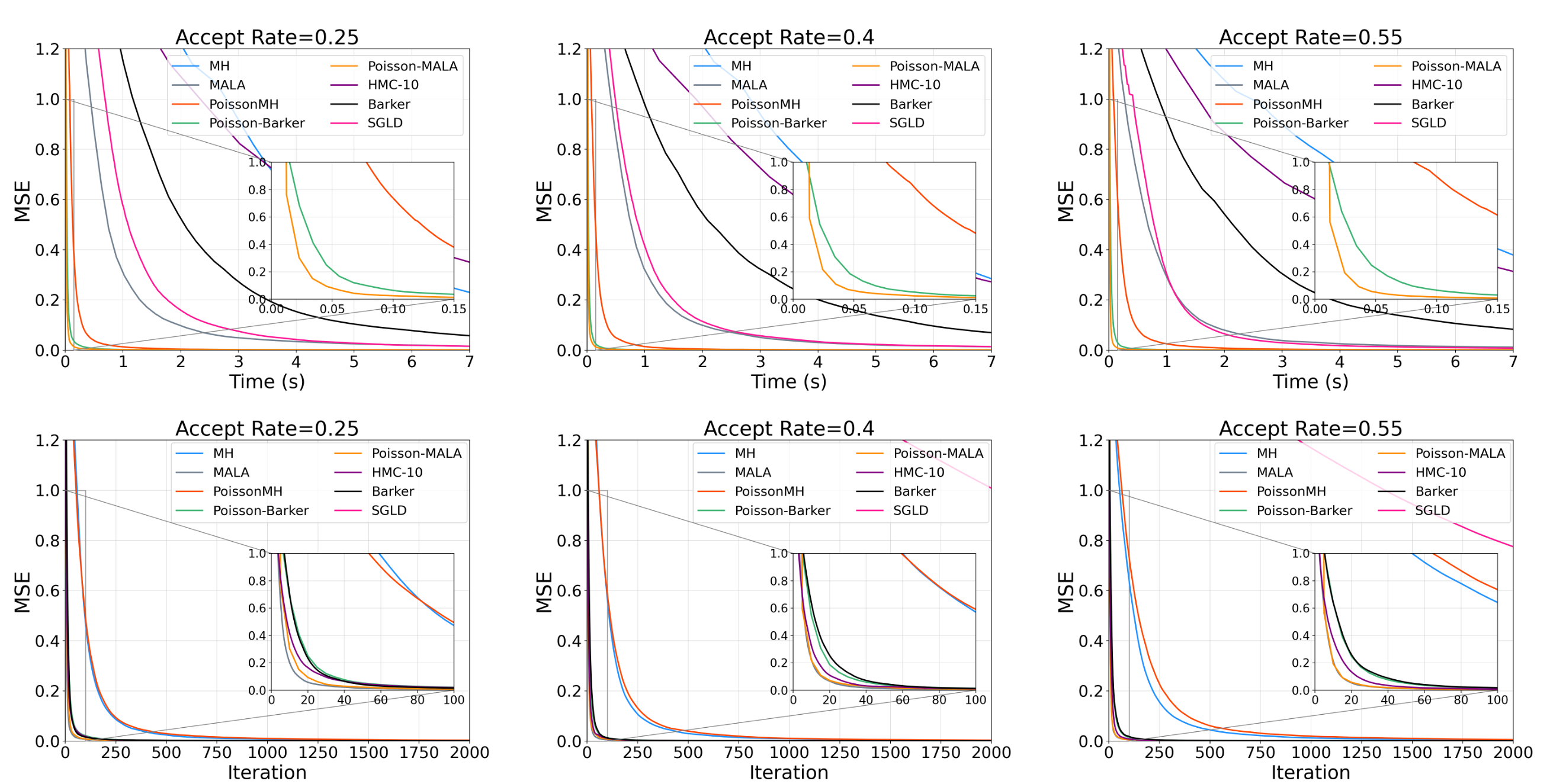}
    \caption{Clock-time and iteration-wise MSE comparison.}
\label{fig:10d-20w-lin-vs-iter}
\end{figure*}

\begin{table}[htb!]
  \centering
      \caption{ESS/s comparison across target acceptance rates.}
    \label{tab:10d-20w-lin-ess}
  \begin{threeparttable}
    \small
    \begin{adjustbox}{width=\columnwidth}
      \begin{tabular}{ccccc}
        \toprule
         & \multicolumn{4}{c}{ESS/s: (Min, Median, Max)}\\
        \cmidrule{2-5}
        Method & target rate=0.25 & target rate=0.4 & target rate=0.55 & Best \\
        \cmidrule{1-5}
        MH  & (0.61, 0.64, 0.70) & (0.55, 0.61, 0.68) & (0.41, 0.46, 0.51) & (0.61, 0.64, 0.70) \\
        MALA  & (0.83, 0.87, 0.93) & (1.26, 1.35, 1.42) & (1.48, 1.56, 1.65) & (1.48, 1.56, 1.65) \\
        HMC5  & (0.21, 0.22, 0.24) & (0.22, 0.24, 0.26) & (0.21, 0.23, 0.26) & (0.22, 0.24, 0.26) \\
        Barker  & (0.53, 0.59, 0.61) & (0.68, 0.73, 0.80) & (0.75, 0.79, 0.81) & (0.75, 0.79, 0.81) \\
        PoissonMH  & (24.67, 25.74, 26.86) & (21.38, 22.16, 22.97) & (14.17, 15.31, 16.22) & (24.67, 25.74, 26.86) \\
        Poisson--Barker  & (39.32, 40.57, 41.88) & (56.82, 58.72, 60.88) & (58.40, 60.33, 61.33) & (58.40, 60.33, 61.33) \\
        Poisson--MALA  & (52.68, 54.31, 57.42) & (89.36, 92.04, 94.31) & (109.03, 111.01, 114.40) & (109.03, 111.01, 114.40) \\
        \bottomrule
      \end{tabular}
    \end{adjustbox}
  \end{threeparttable}
\end{table}

\subsection{50-dimensional Robust Linear Regression}\label{subsec: additional robust linear regression, 50d}

We compare Poisson--Barker, Poisson--MALA, PoissonMH, random-walk Metropolis, MALA, Barker and HMC in a 50-dimensional robust linear regression example. All setups are the same as in Section \ref{subsec:robust regression experiment}, except for $d = 50$, and the number of leapfrog steps in HMC is 5.

Figure \ref{fig:50d-lin-vs-time} shows the MSE of estimating the true coefficients over time. Figure \ref{fig:50d-lin-vs-iter} shows the MSE comparison over iteration count. Table \ref{tab:50d-lin-ess} compares the ESS/s. Figure \ref{fig:50d-lin-hmc} compares HMC with 2, 5, and 10 leapfrog steps. The higher dimensionality slows the mixing of all algorithms compared to Section \ref{subsec:robust regression experiment}, but all our conclusions remain valid. Our Poisson--MALA and Poisson--Barker outperform all other algorithms.

\begin{figure}[htb!]
\begin{subfigure}{0.33\textwidth}
  \centering
  \includegraphics[width=\textwidth]{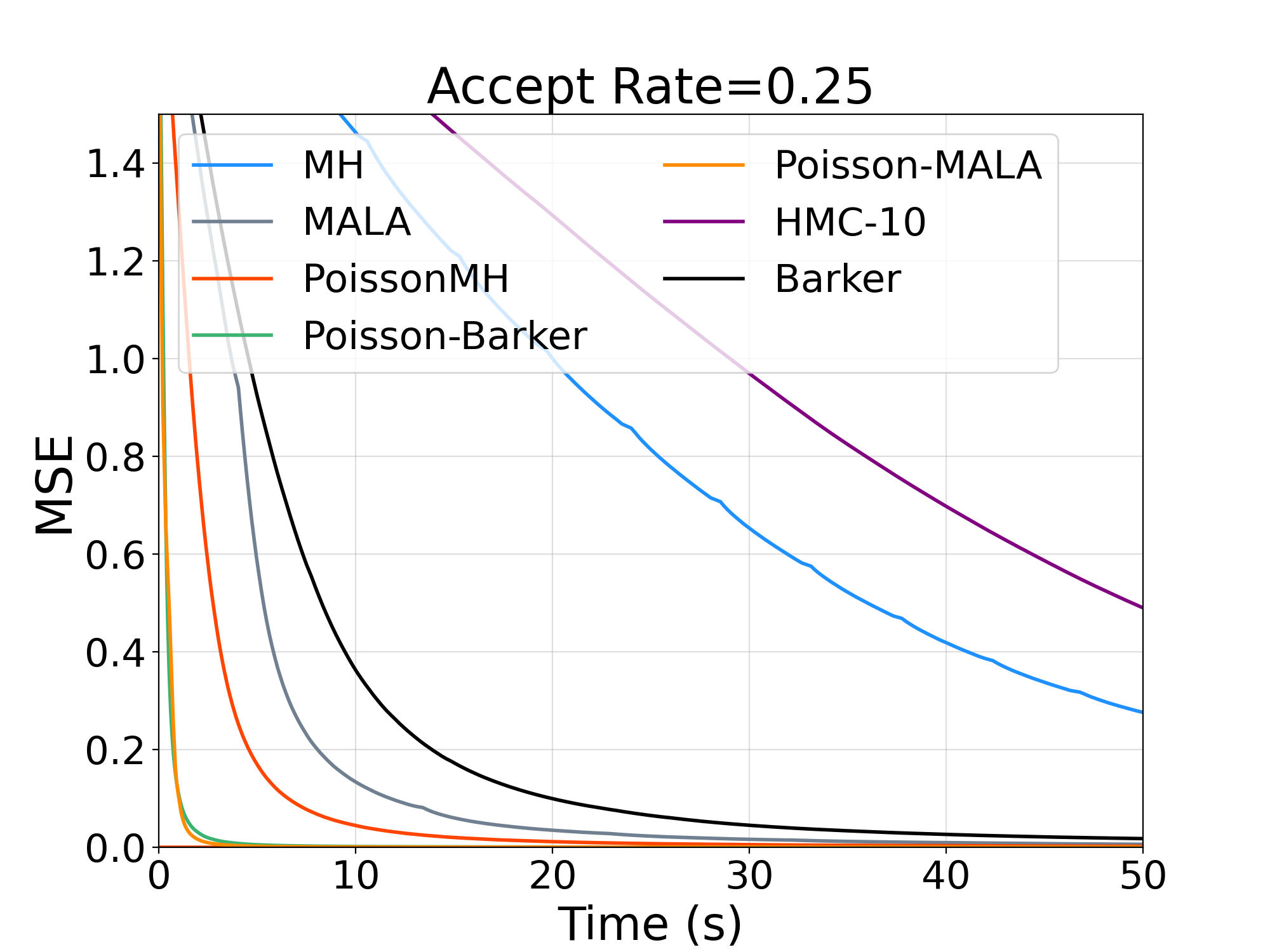}
  \caption{acceptance rate=0.25}
\end{subfigure}
\begin{subfigure}{0.33\textwidth}
  \centering
  \includegraphics[width=\textwidth]{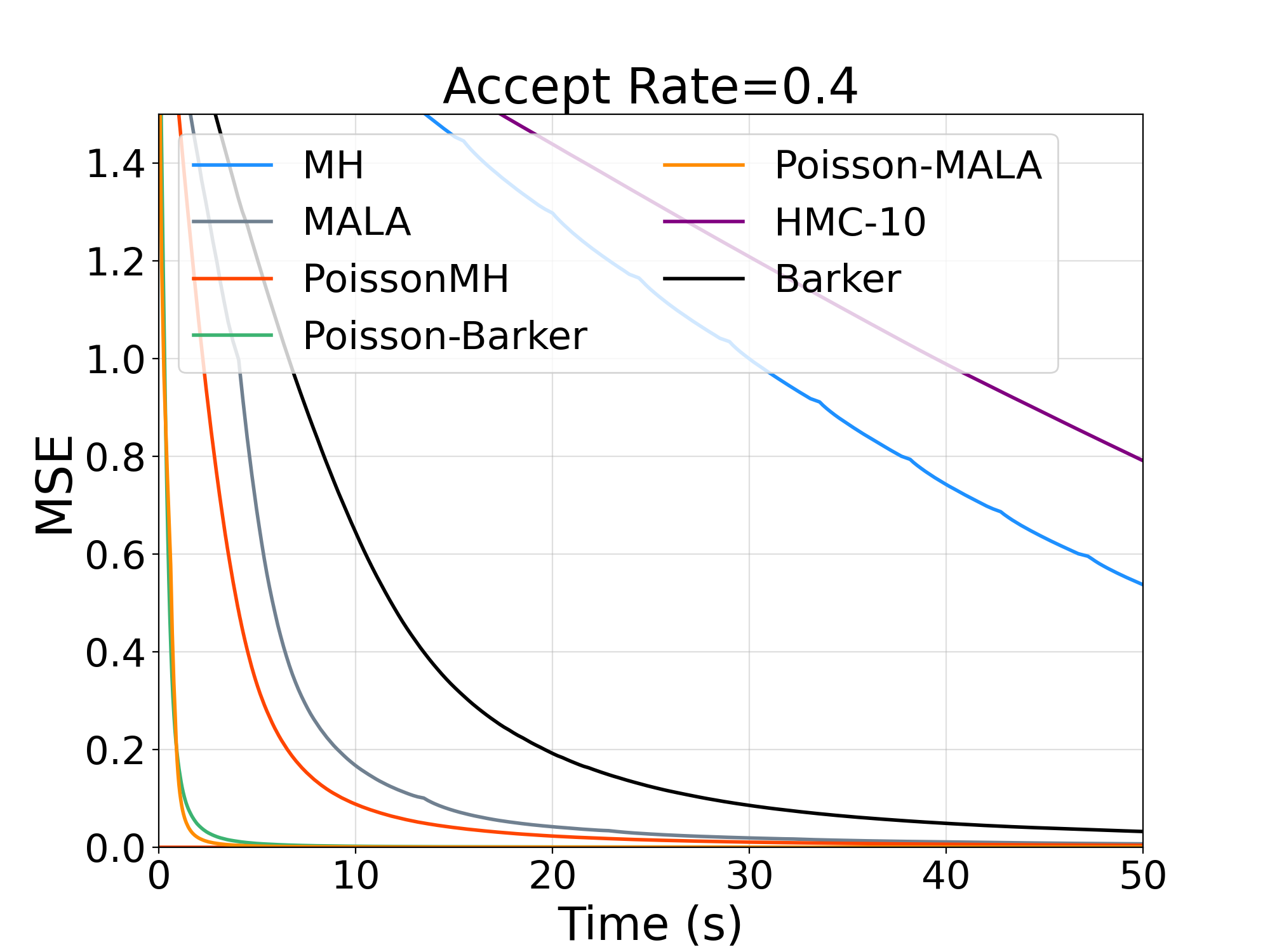}
  \caption{acceptance rate=0.4}
\end{subfigure}
\begin{subfigure}{0.33\textwidth}
    \centering
    \includegraphics[width=\textwidth]{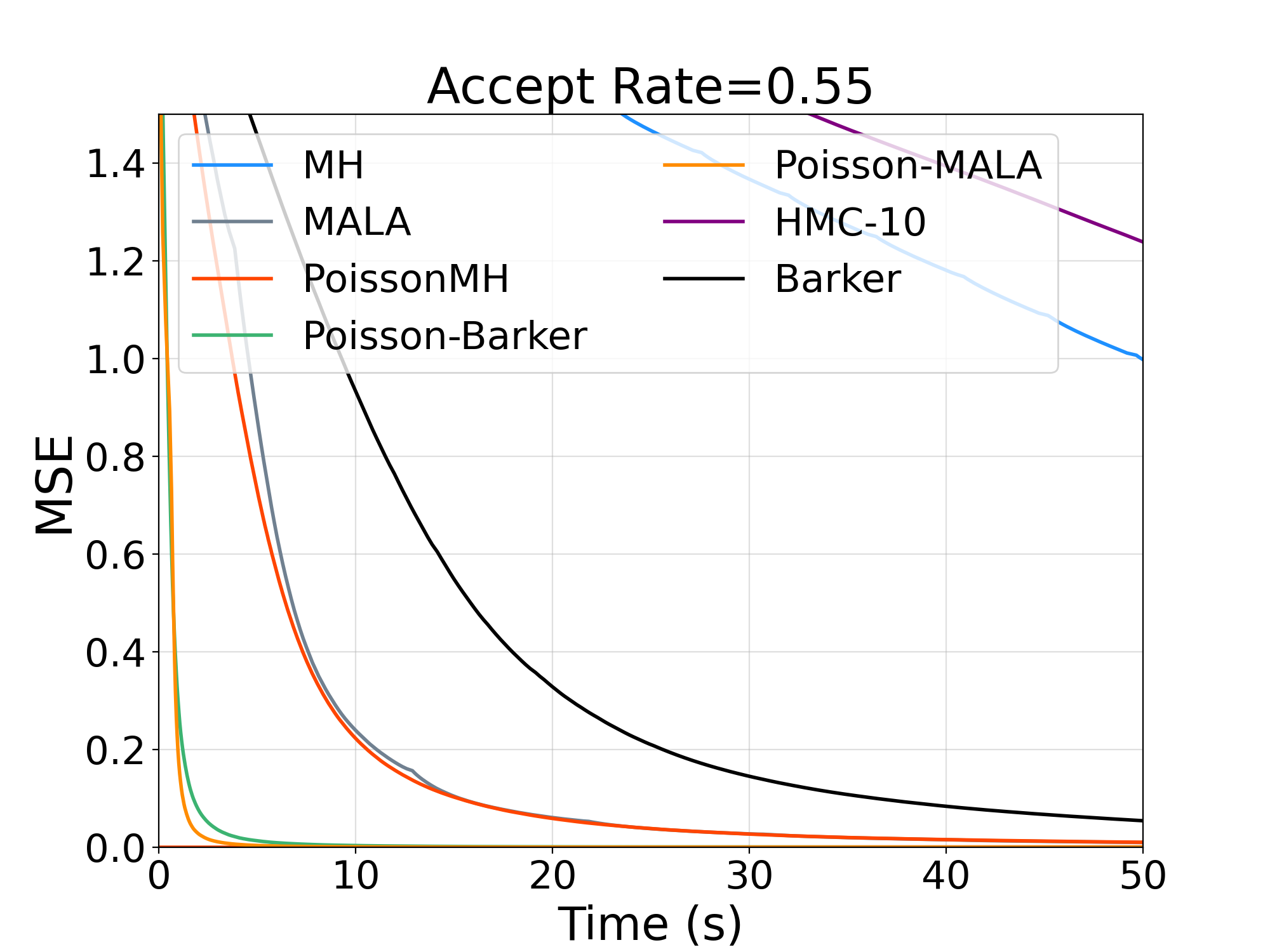}
    \caption{acceptance rate=0.55}
\end{subfigure}
\caption{Clock-time MSE comparison. MSE of $\theta^\star$ as a function of time across different acceptance rates. The three large plots show the performance of all methods in the first 50 seconds.}
\label{fig:50d-lin-vs-time}
\end{figure}

\begin{figure}[htb!]
\begin{subfigure}{0.33\textwidth}
  \centering
  \includegraphics[width=\textwidth]{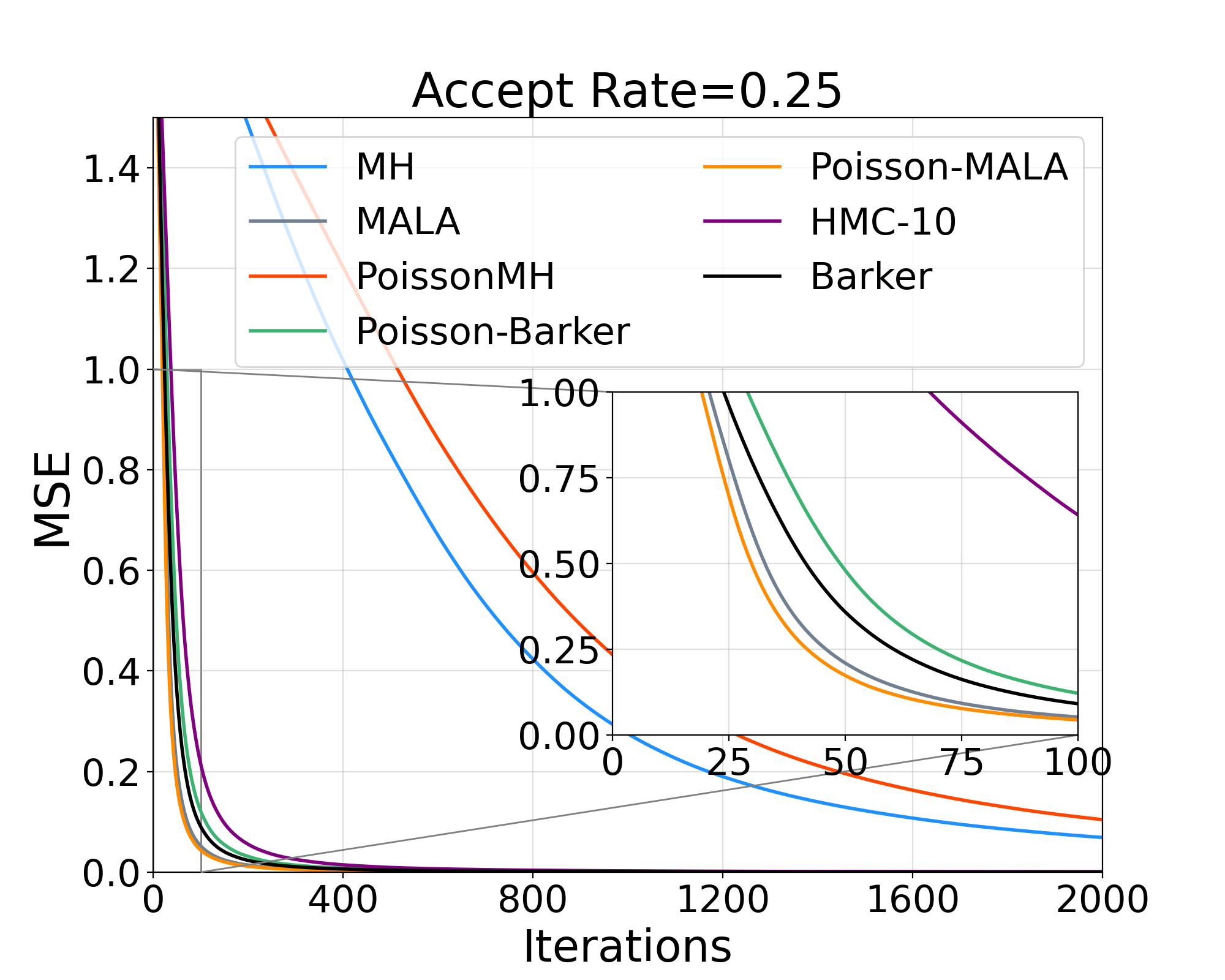}
  \caption{acceptance rate=0.25}
\end{subfigure}
\begin{subfigure}{0.33\textwidth}
  \centering
  \includegraphics[width=\textwidth]{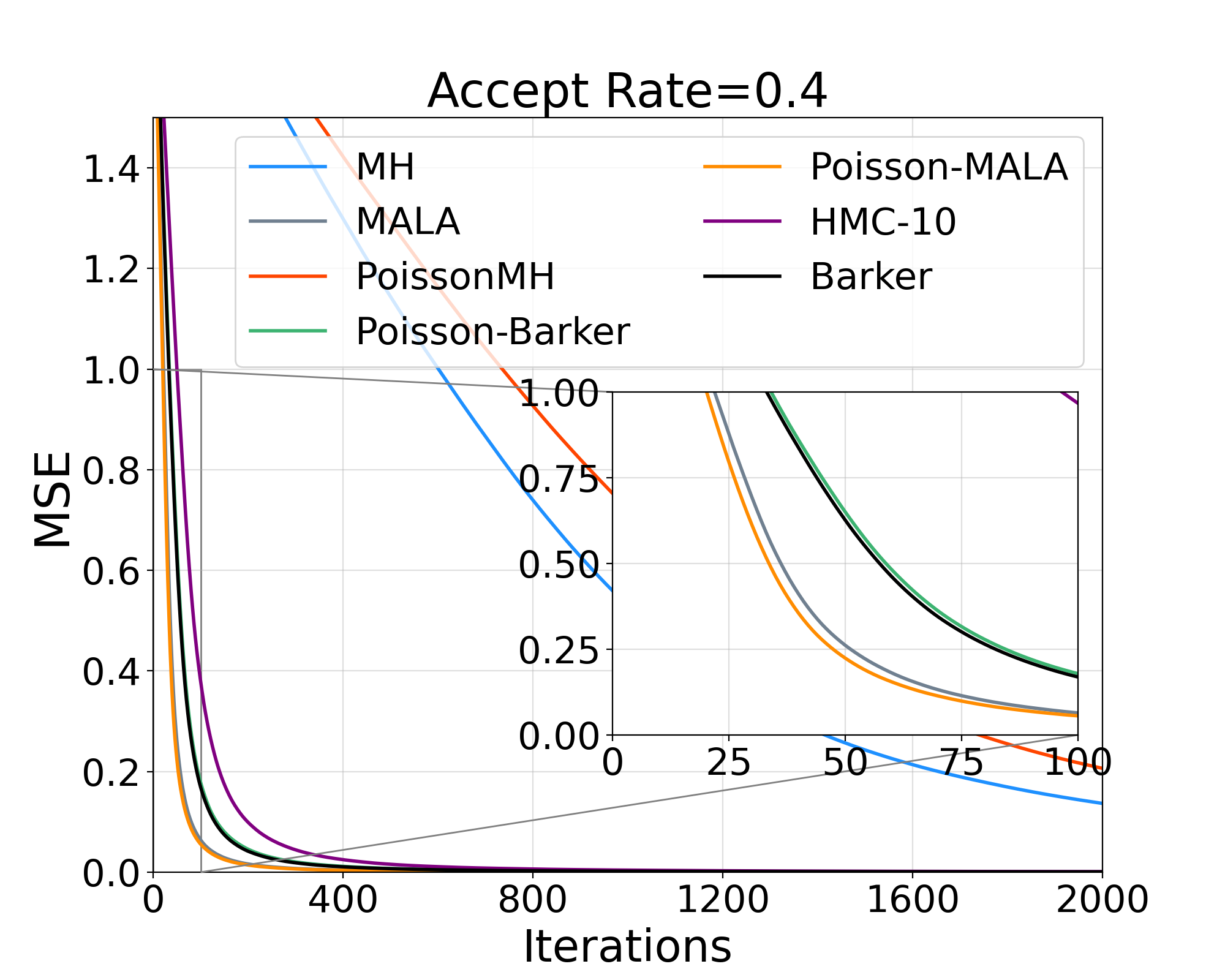}
  \caption{acceptance rate=0.4}
\end{subfigure}
\begin{subfigure}{0.33\textwidth}
    \centering
    \includegraphics[width=\textwidth]{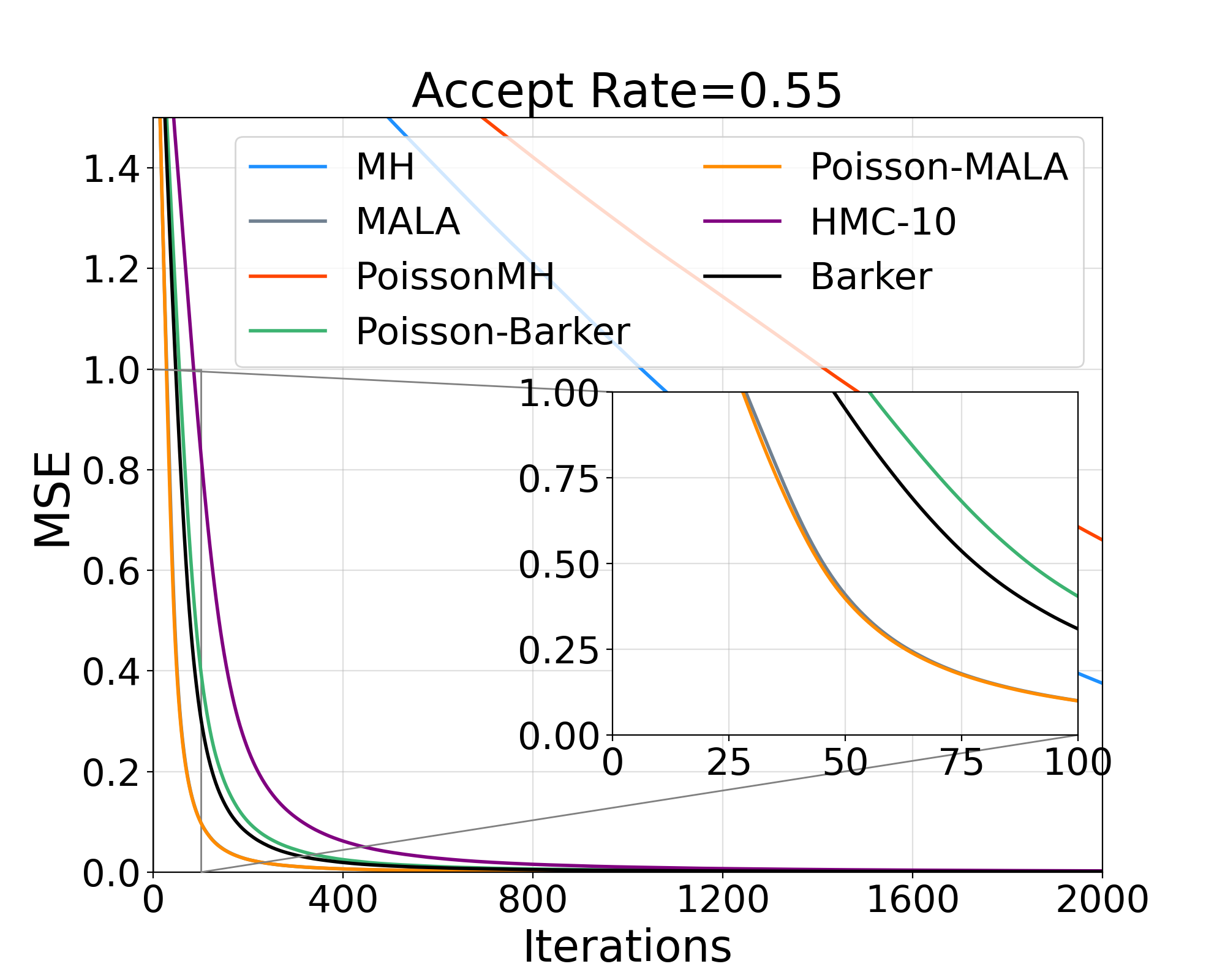}
    \caption{acceptance rate=0.55}
\end{subfigure}
\caption{Iteration-wise MSE comparison. MSE of $\theta^\star$ as a function of iteration count different acceptance rates. The three large plots show the performance of all methods in the first 2000 steps. The three inside plots zoom in on Poisson--Barker, and Poisson--MALA, MALA, and HMC in the first 100 steps. All results are averaged over 10 runs.}
\label{fig:50d-lin-vs-iter}
\end{figure}

\begin{table}[H]
  \centering
  \begin{threeparttable}
  \begin{adjustbox}{width=\textwidth}
    \begin{tabular}{cccccc}
      \toprule
       & \multicolumn{4}{c}{ESS/s: (Min, Median, Max)}\\
       \cmidrule{2-5}
        Method & acceptance rate=0.25 & acceptance rate=0.4 & acceptance rate=0.55 & Best \\
      \cmidrule{1-5}
      MH  & (0.08, 0.10, 0.13)   &  (0.07, 0.08, 0.12)  & (0.06, 0.07, 0.08) &  (0.08, 0.10, 0.13) \\
      MALA  & (0.39, 0.48, 0.56)   &  (0.63, 0.72, 0.81)  & (0.73, 0.83, 0.91) &  (0.73, 0.83, 0.91) \\
      Barker  & (0.17, 0.21, 0.25)   &  (0.23, 0.26, 0.29)  & (0.23, 0.28, 0.31) &  (0.23, 0.28, 0.31) \\
      HMC-5  & (0.03, 0.04, 0.05)   &  (0.03, 0.03, 0.04)  & (0.02, 0.03, 0.04) &  (0.03, 0.04, 0.05) \\
      PoissonMH & (1.91, 2.04, 2.21)   & (1.57, 1.66, 1.83)   &  (1.24, 1.34, 1.44) & (1.91, 2.04, 2.21)  \\
      Poisson--Barker & (4.18, 4.54, 4.78)   & (5.47, 5.85, 6.29)   &  (6.07, 6.51, 6.84) & (6.07, 6.51, 6.84)  \\
      Poisson–MALA & (\textbf{6.76}, \textbf{7.45}, \textbf{8.11}) &  (\textbf{11.00}, \textbf{11.81}, \textbf{12.40})  &  (\textbf{13.92}, \textbf{14.91}, \textbf{15.58})  & (\textbf{13.92}, \textbf{14.91}, \textbf{15.58}) \\
      \bottomrule
    \end{tabular}
    \end{adjustbox}
    \caption{ESS/s comparison. Here (Min, Median, Max) refer to the minimum, median and maximum of ESS/s across all dimensions. The column ``Best'' reports the best ESS/s across all acceptance rates. The results are averaged over 10 runs for all methods.}
    \label{tab:50d-lin-ess}
  \end{threeparttable}
\end{table}

\begin{figure}[H]
\begin{subfigure}{0.33\textwidth}
  \centering
  \includegraphics[width=\textwidth]{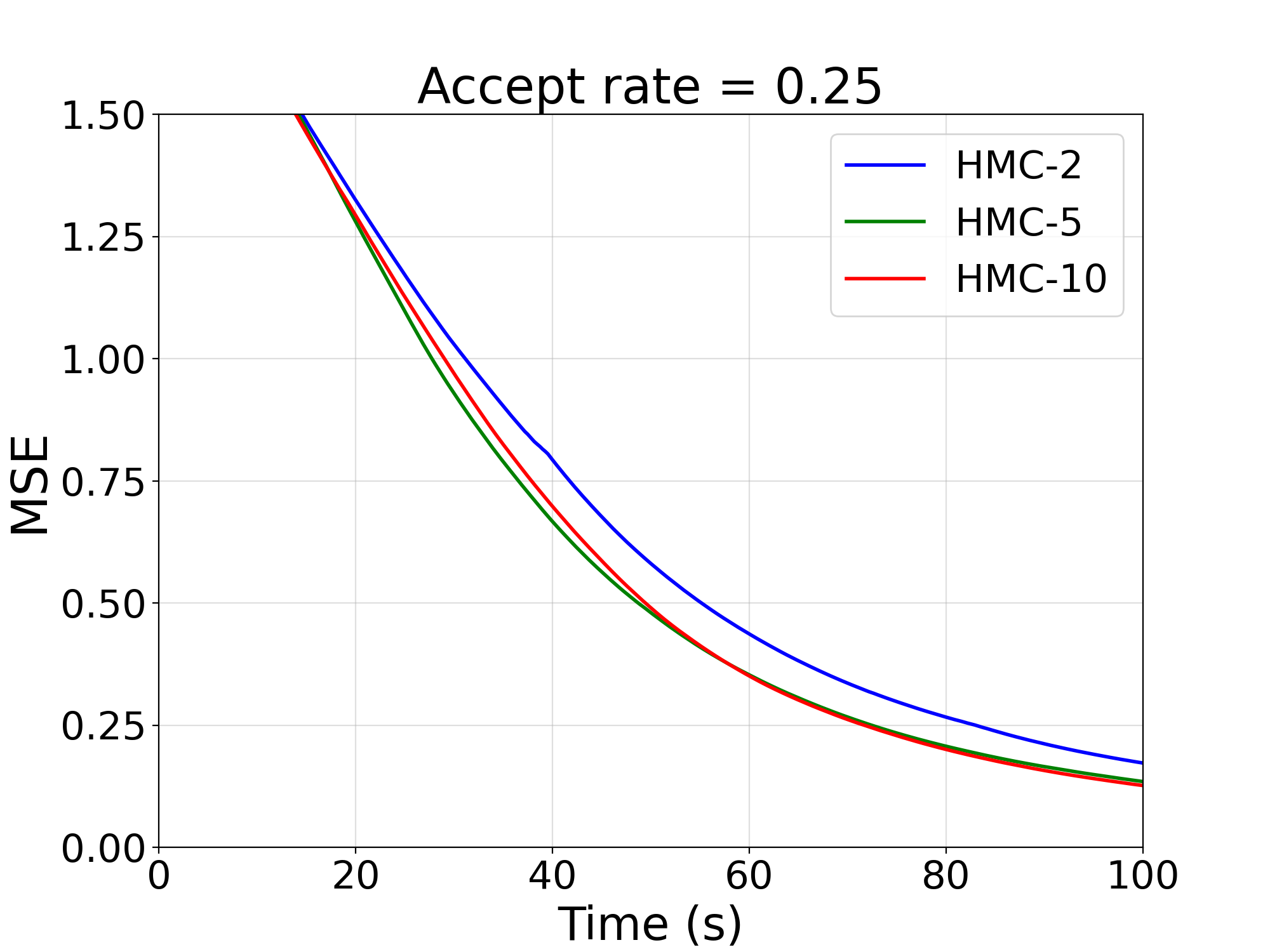}
  \caption{acceptance rate=0.25}
\end{subfigure}
\begin{subfigure}{0.33\textwidth}
  \centering
  \includegraphics[width=\textwidth]{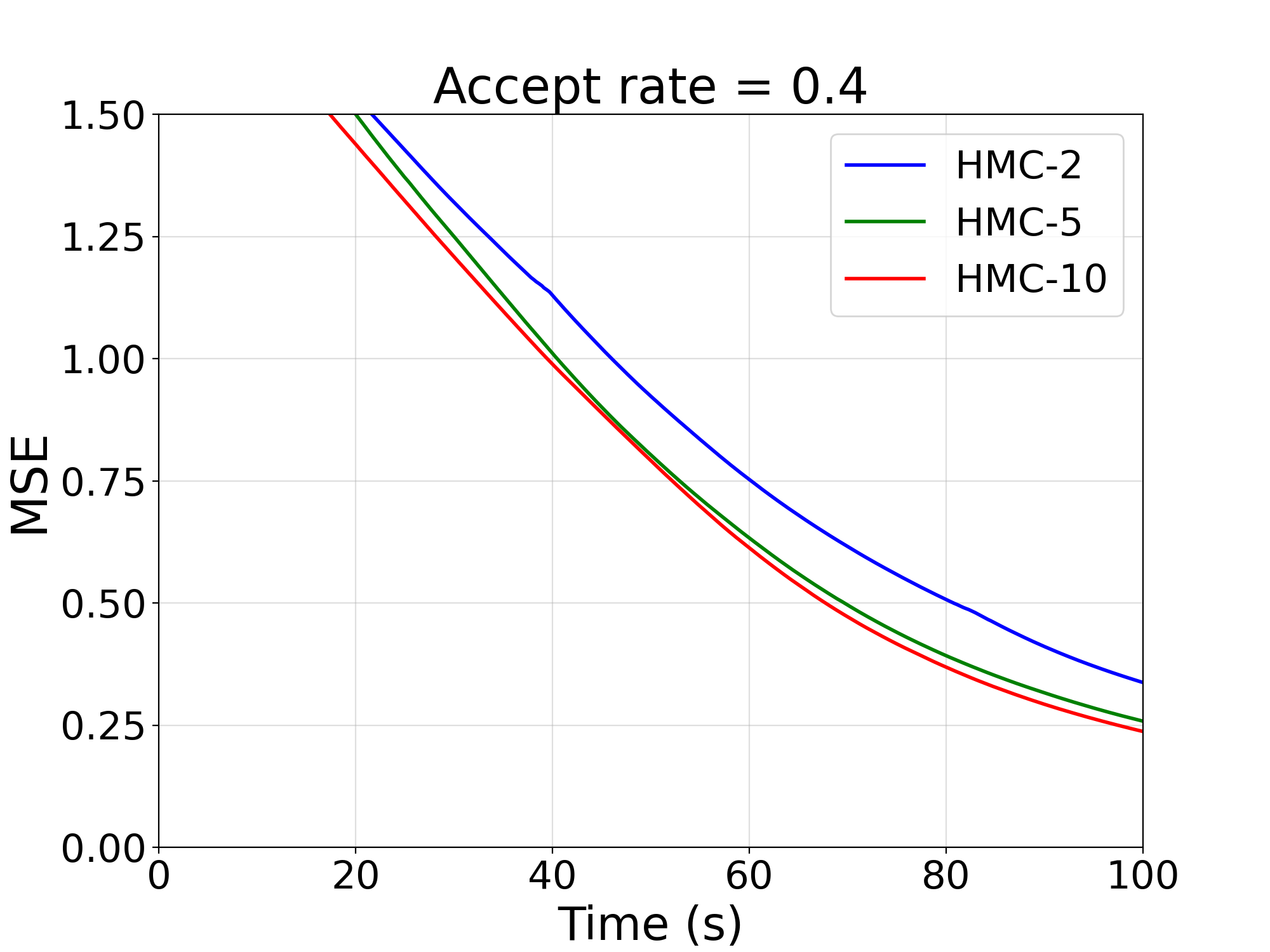}
  \caption{acceptance rate=0.4}
\end{subfigure}
\begin{subfigure}{0.33\textwidth}
    \centering
    \includegraphics[width=\textwidth]{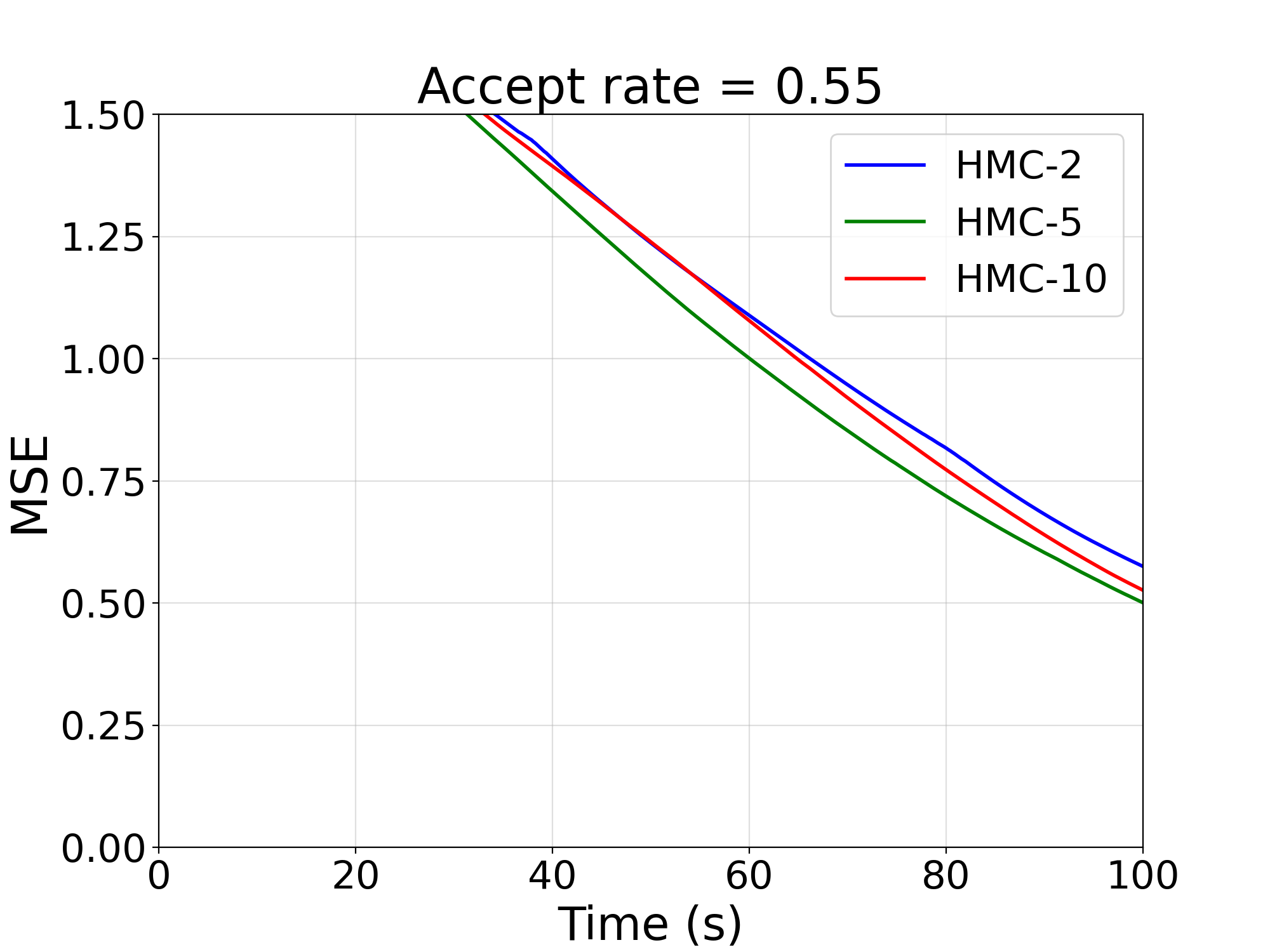}
    \caption{acceptance rate=0.55}
\end{subfigure}
\caption{Clock-time MSE comparison for HMC with leapfrog steps $2,5,$ and $10$. All results are averaged over 10 runs.}
\label{fig:50d-lin-hmc}
\end{figure}

\subsection{Additional Experiments on Bayesian Logistic Regression: 3s and 5s}\label{subsec: further experiment, 3s and 5s}

We compare all algorithms, along with an additional HMC with 2 leapfrog steps, over a broader range of step sizes, some of which are much larger than those used in \cite{zhang2020asymptotically}. The test accuracy for each algorithm with varying step sizes is summarised in Figure \ref{fig: 3vs5 different stepsizes}. Increasing the step size overall improves the performance of every full-batch algorithm, especially for HMC. However, the plot clearly shows that the HMC algorithm still struggles with mixing, even with very large step sizes, typically taking more than 30 seconds to converge. 

Figure \ref{fig: 3vs5 optimally tuned} shows a comparison of all algorithms when each is optimally tuned. From the left panel, it is clear that Tuna--SGLD  converges fastest among all algorithms, followed by TunaMH and random-walk Metropolis. The right panel indicates that MALA exhibits superior iteration-wise convergence compared to Tuna--SGLD, while random-walk Metropolis demonstrates better iteration-wise convergence than TunaMH. This observation supports our theoretical findings in Section \ref{subsec: Peskun's order}.

To provide further guidance on the hyperparameter tuning for our new algorithm Tuna--SGLD, we vary the minibatch size for stochastic gradient in {20, 64, 128}, and try different step sizes. The results are provided in Figure \ref{fig:35-Tuna--SGLD-tuning}. The performance of Tuna--SGLD is not sensitive to step sizes, unless step size is large. This is because the minibatch size of Tuna--SGLD is larger when the step size is larger. We recommend using $10^{-5}$ as suggested in \cite{zhang2020asymptotically}. For the minibatch size of estimating the gradient, we use a relatively small choice (size=20), since it is sufficient for Tuna--SGLD to converge fast and using size=64/128 will bring additional computational cost.

\begin{figure}[htb!]
\begin{subfigure}{0.48\textwidth}
\includegraphics[width=\linewidth]{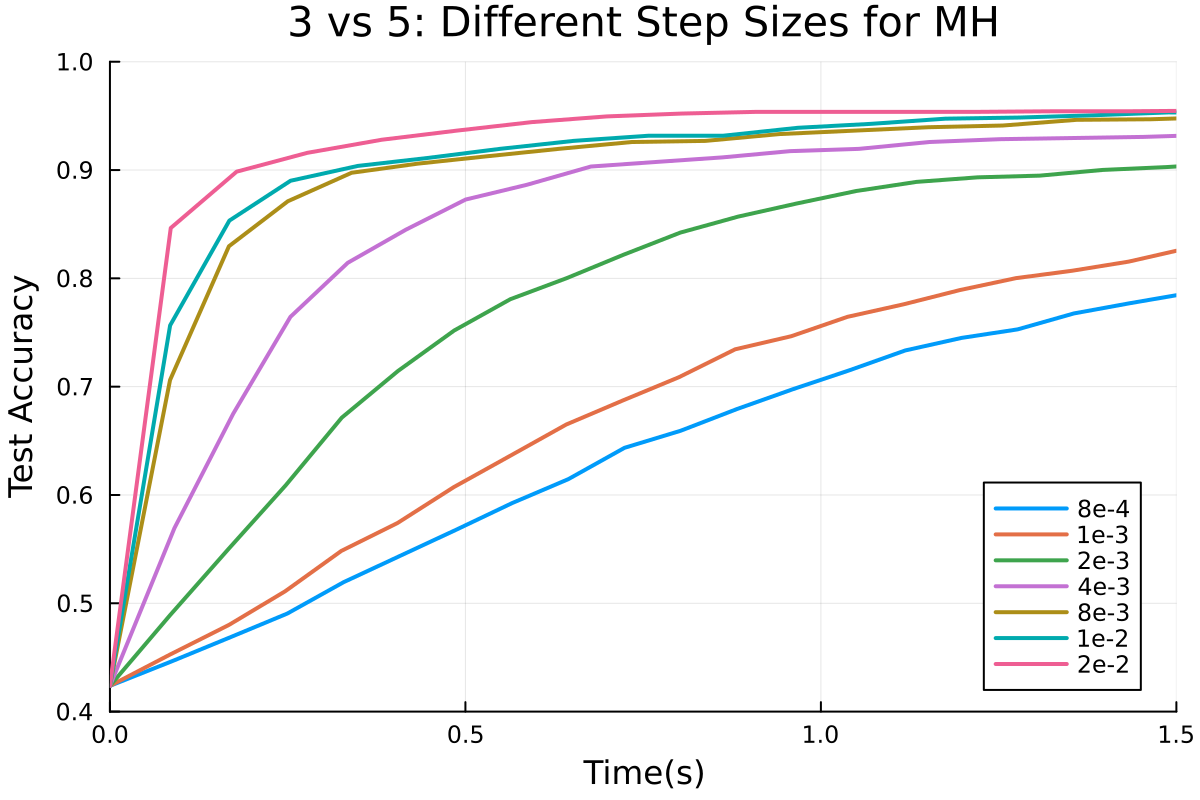}
\end{subfigure}\hspace*{\fill}
\begin{subfigure}{0.48\textwidth}
\includegraphics[width=\linewidth]{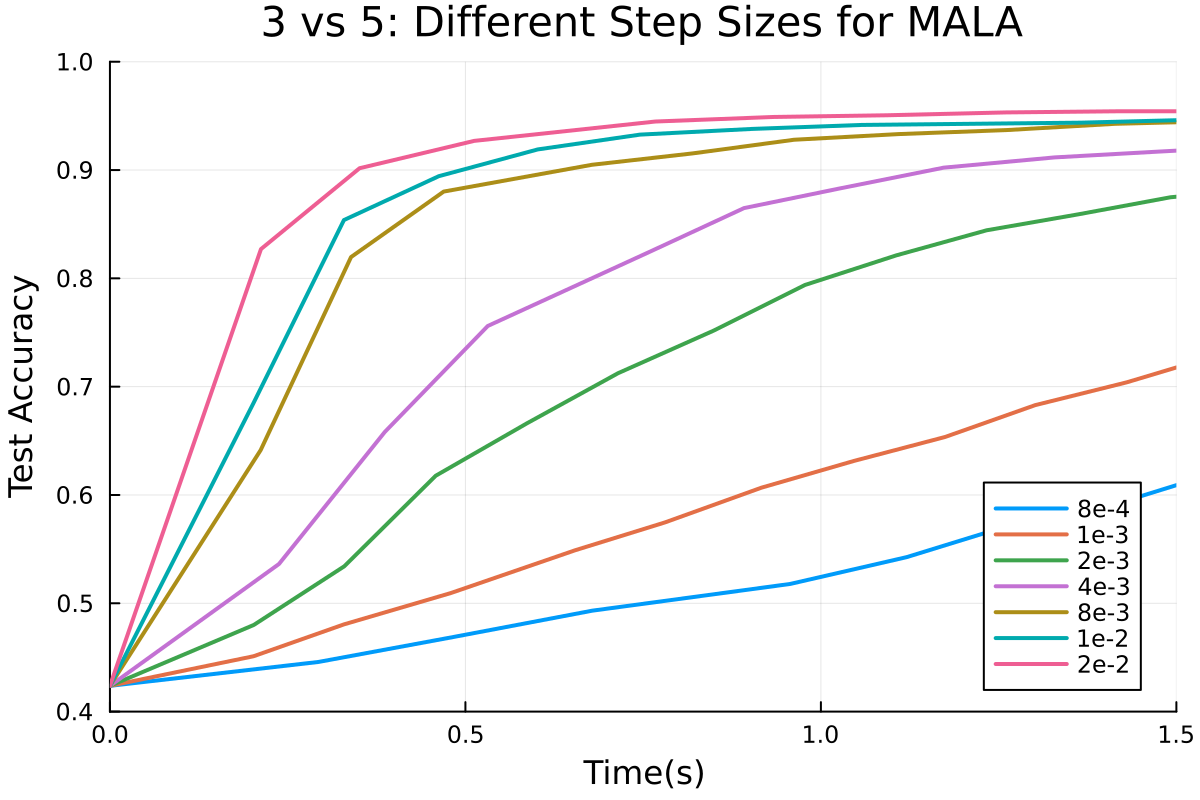}
\end{subfigure}

\begin{subfigure}{0.48\textwidth}
\includegraphics[width=\linewidth]{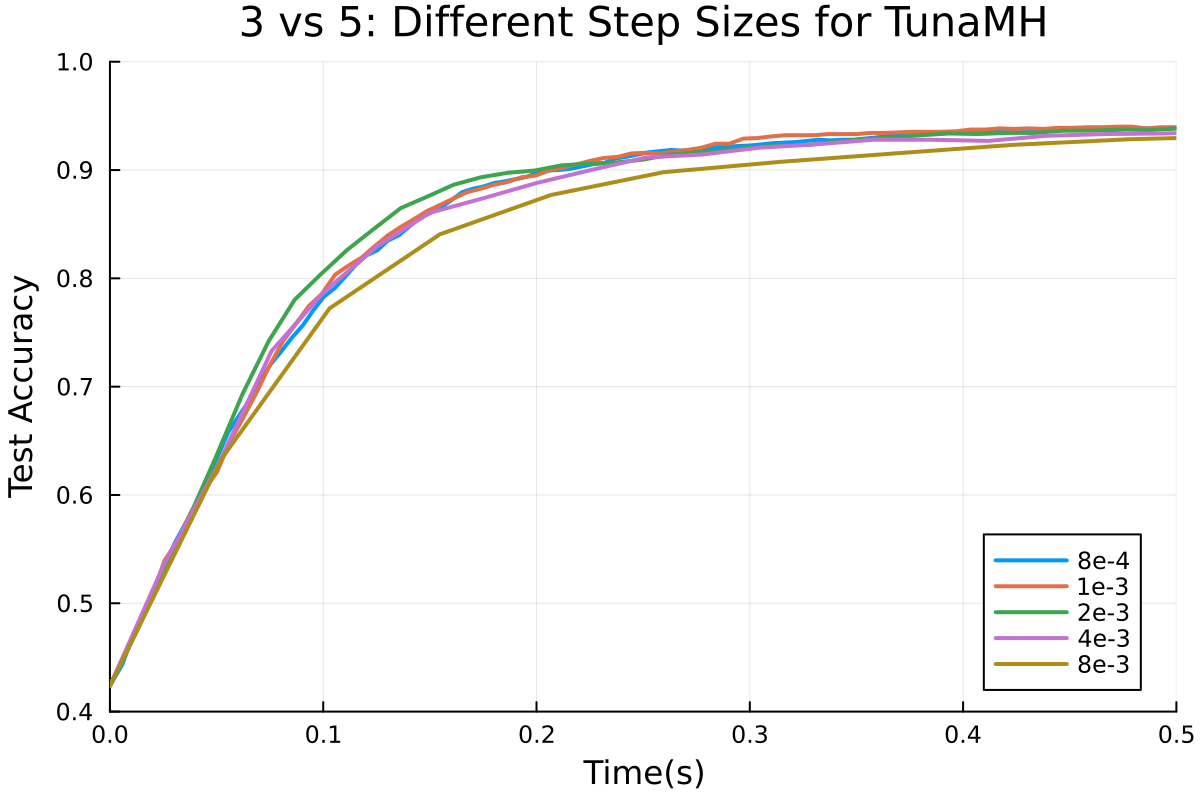}
\end{subfigure}\hspace*{\fill}
\begin{subfigure}{0.48\textwidth}
\includegraphics[width=\linewidth]{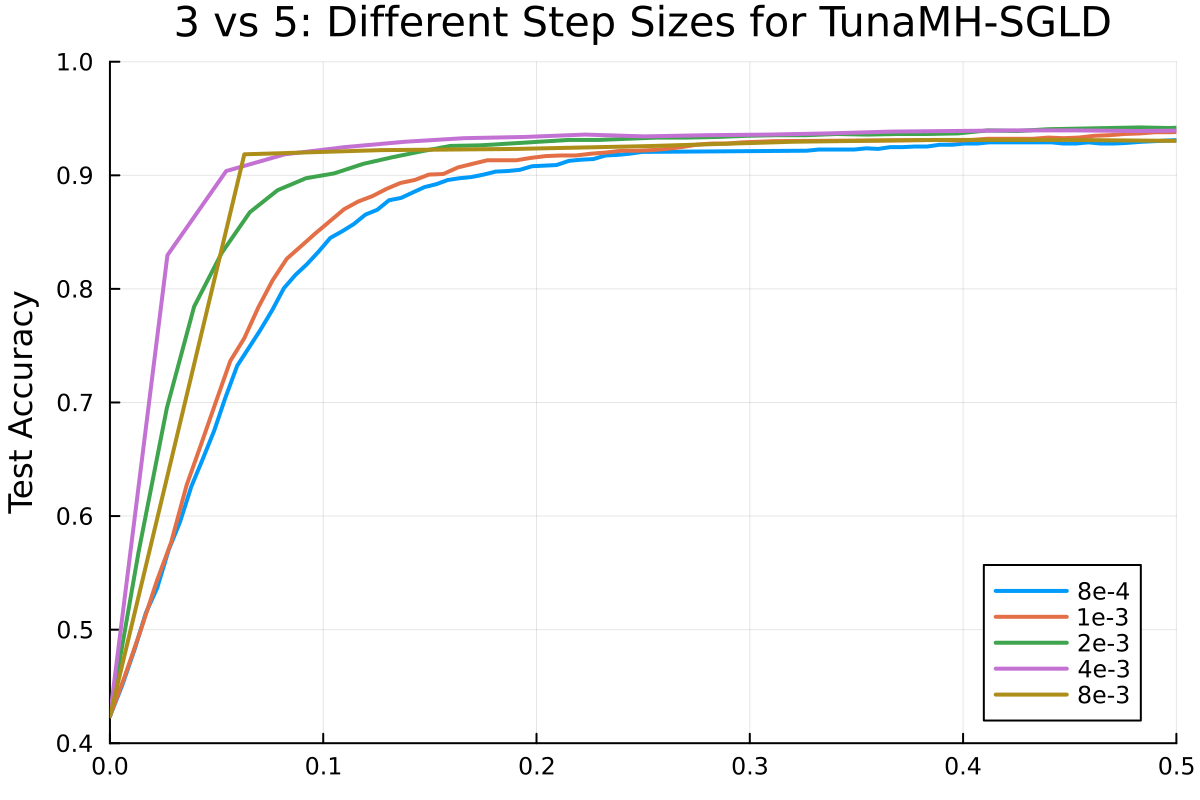}
\end{subfigure}

\begin{subfigure}{0.48\textwidth}
\includegraphics[width=\linewidth]{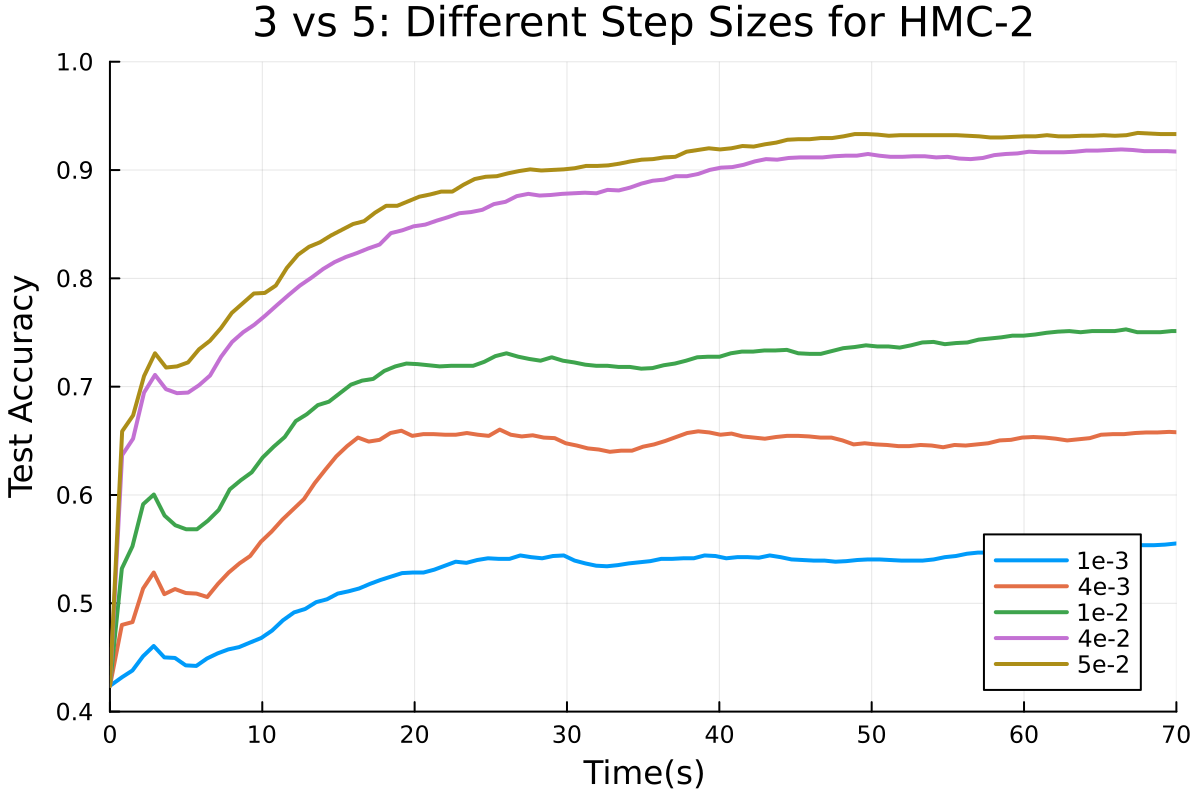}
\end{subfigure}\hspace*{\fill}
\begin{subfigure}{0.48\textwidth}
\includegraphics[width=\linewidth]{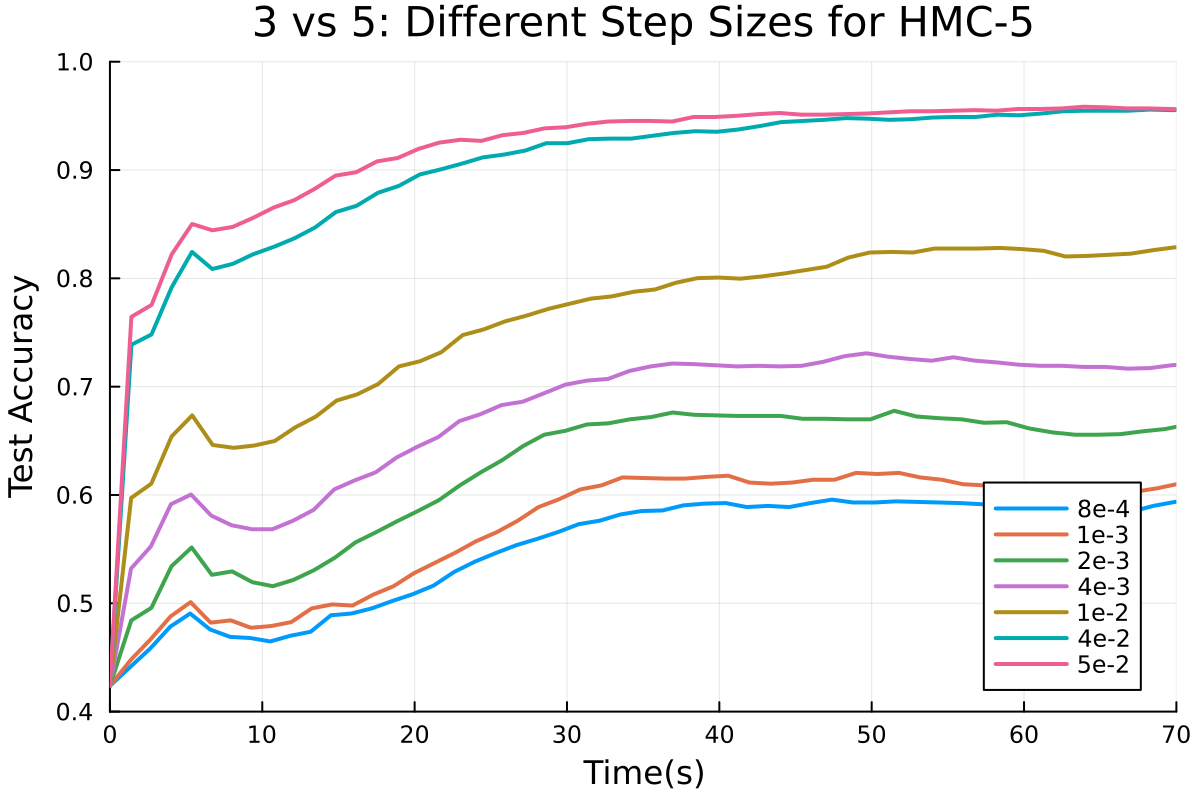}
\end{subfigure}

\begin{subfigure}{0.48\textwidth}
\includegraphics[width=\linewidth]{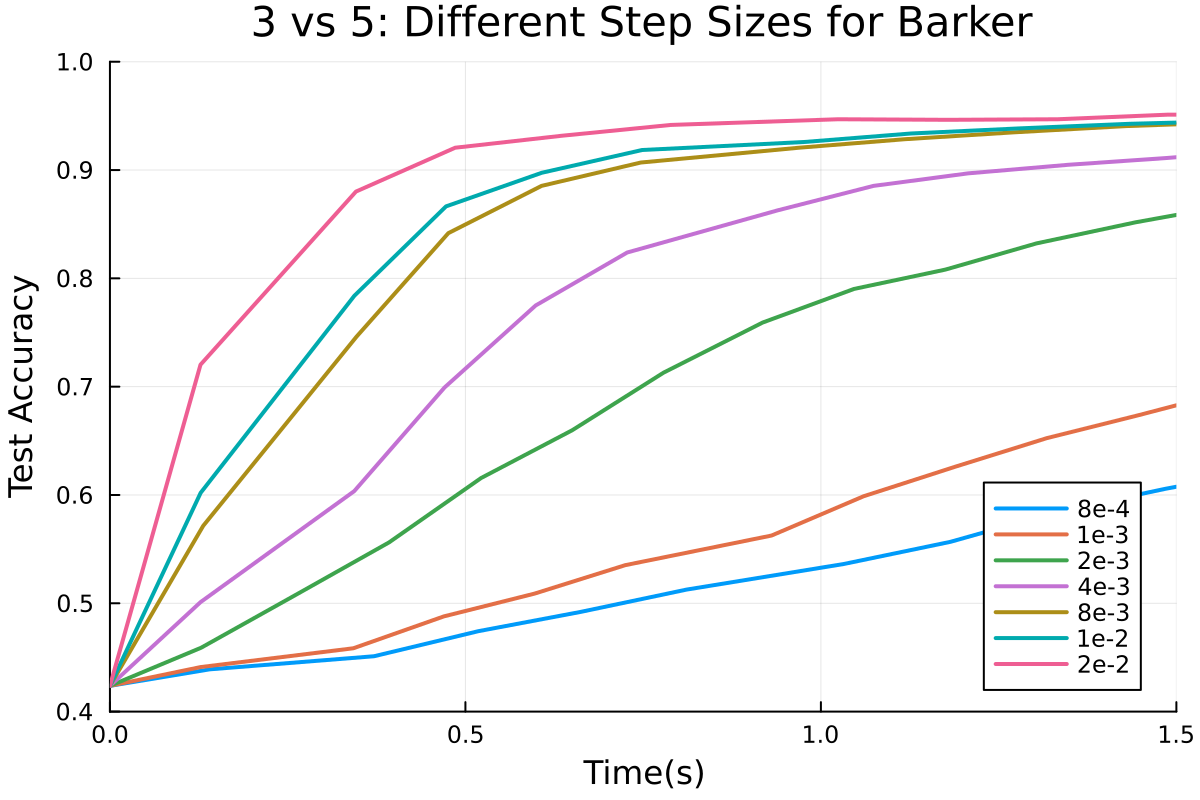}
\end{subfigure}

\caption{ Test accuracy as a function of time for each algorithm with different step sizes.}\label{fig: 3vs5 different stepsizes}
\end{figure}

\begin{figure}[htb!]
\begin{subfigure}{.5\textwidth}
  \centering
  \includegraphics[width=\linewidth]{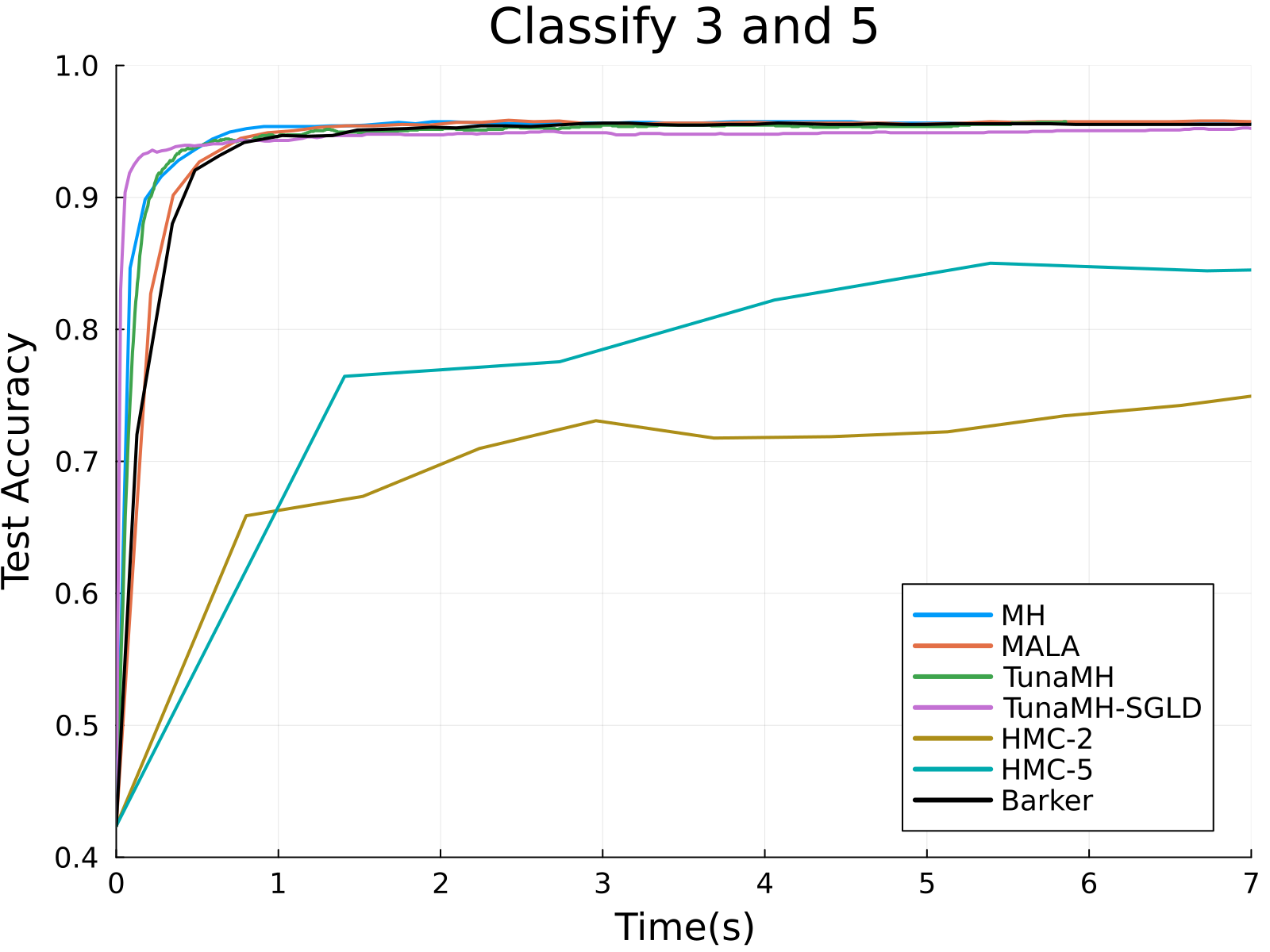}
  \caption{Test accuracy as a function of time}
\end{subfigure}
\begin{subfigure}{.5\textwidth}
  \centering
  \includegraphics[width=\linewidth]{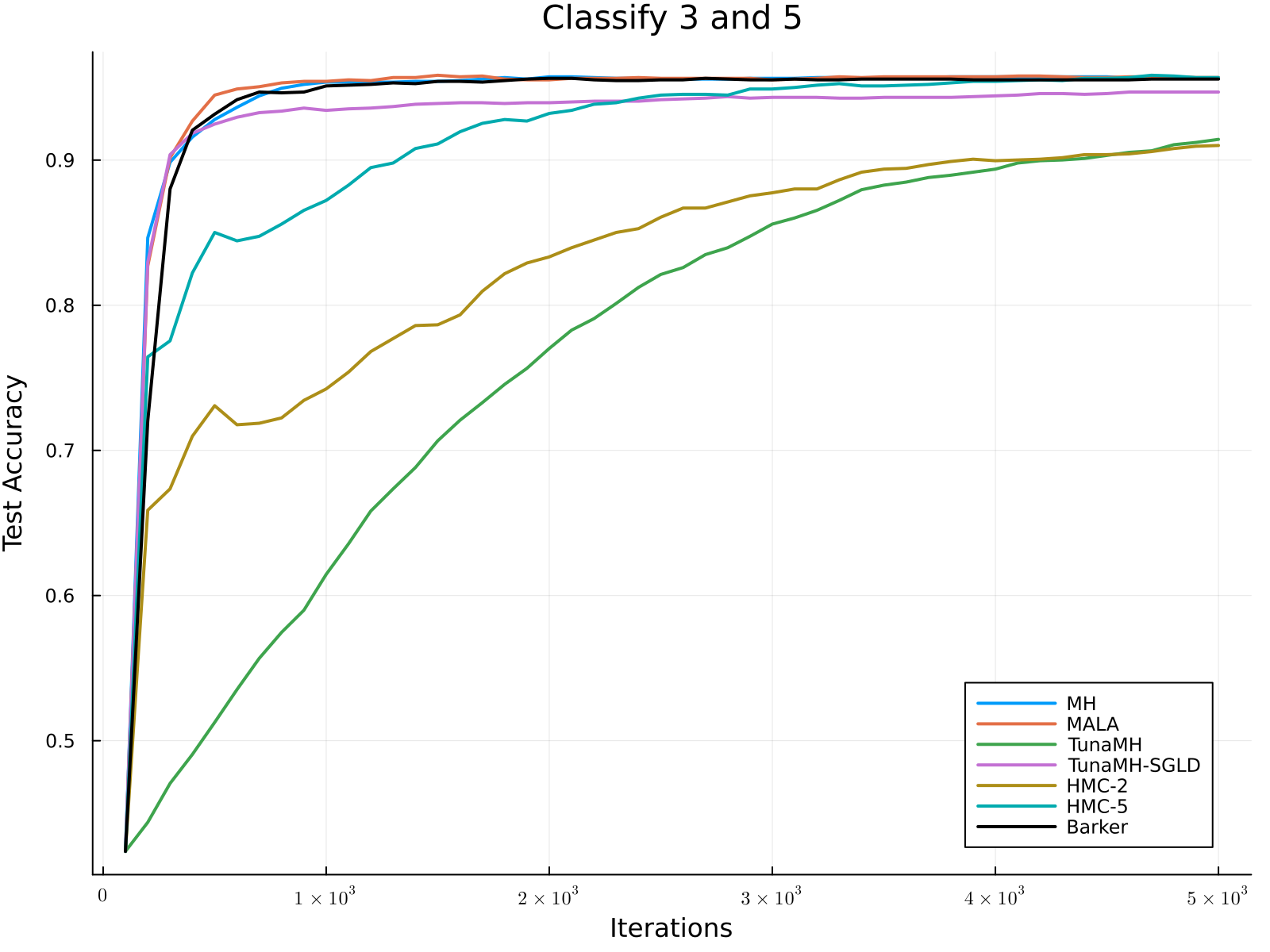}
  \caption{Test accuracy as a function of iteration count}
\end{subfigure}
\caption{ Test accuracy comparison when each algorithm is optimally tuned.}\label{fig: 3vs5 optimally tuned}
\end{figure}

\begin{figure}[htb!]
    \centering
    \includegraphics[width=\textwidth]{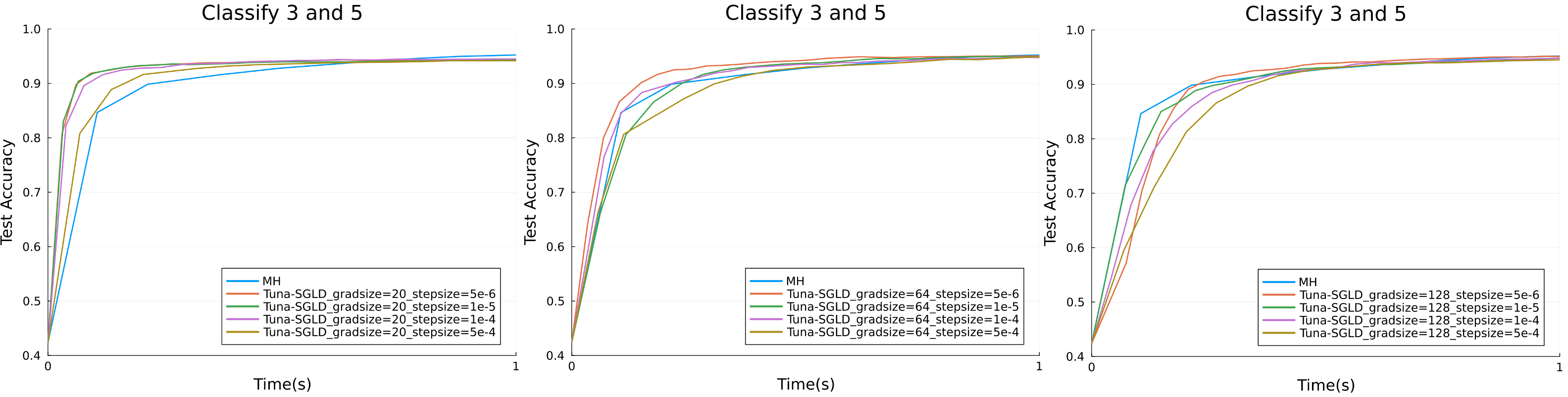}
    \caption{Test accuracy of Tuna--SGLD as a function of time for classifying 3s and 5s in \texttt{MNIST}, across different gradient minibatch sizes and step sizes.} 
    \label{fig:35-Tuna--SGLD-tuning}
\end{figure}

\subsection{Bayesian Logistic Regression: 7s and 9s}\label{subsec: logistic experiment additional 7s and 9s}

Similar to Section \ref{subsec:logistic experiment}, we test TunaMH, Tuna--SGLD, random--walk Metropolis, MALA, Barker and HMC on Bayesian logistic regression task to classify 7s and 9s in the \texttt{MNIST} data set. The training set contains 12,214 samples, and the test set contains 2,037 samples. We adopt all implementation details from Section \ref{subsec:logistic experiment} and report the results in Figure \ref{fig:79plot-time}. We can draw similar conclusions to Section \ref{subsec:logistic experiment} that our proposed minibatch gradient-based algorithm performs better than its original minibatch non-gradient version in all step size configurations. All minibatch methods still significantly improve the performances of full-batch methods.

\begin{figure}[htb!]
    \centering
    \includegraphics[width=0.75\textwidth]{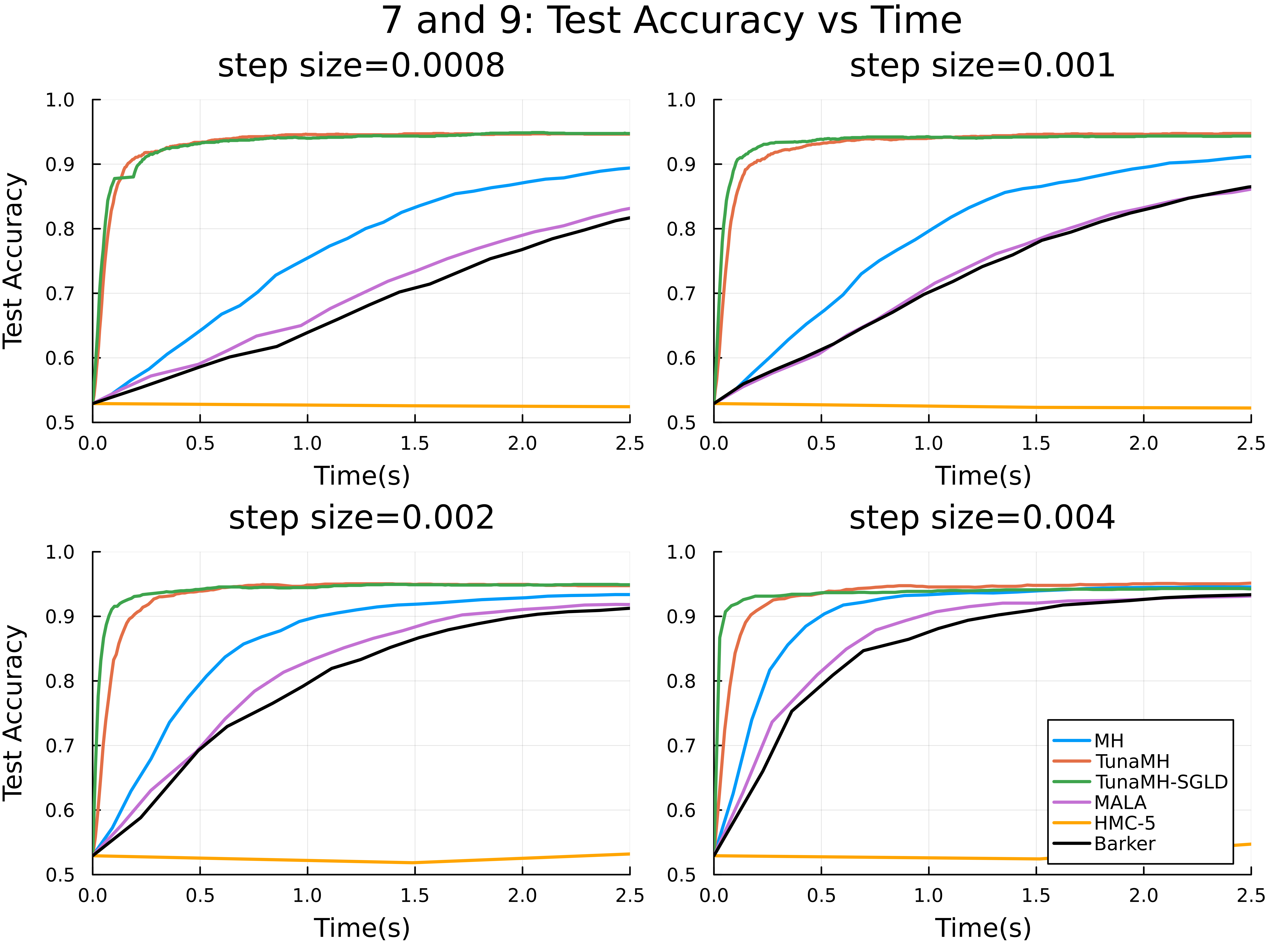}
    \caption{Test accuracy as a function of time for classifying 7s and 9s in \texttt{MNIST}. Step sizes for all methods are varied across $\{0.0008, 0.001, 0.002, 0.004\}$. Each subplot reports the curve of test accuracy in the first 2.5 seconds.} 
    \label{fig:79plot-time}
\end{figure}

\end{document}